\newcommand{\TS}{\mathbb{TS}}
\newcommand{\Aa}{\mathcal{A}}
\newcommand{\Gg}{\mathcal{G}}
\newcommand{\RR}{\mathbb{R}}
\newcommand{\Rpos}{\RR_{\ge 0}}
\newcommand{\Rneg}{\RR_{\le 0}}
\newcommand{\gzg}{\ensuremath{\mathsf{GZG}}}
\newcommand{\xra}[1]{\xrightarrow{#1}}
\renewcommand{\d}{\delta}
\newcommand\sem[1]{{[\![ #1 ]\!]}}
\newcommand{\prophecy}[1]{\overrightarrow{#1}}
\newcommand{\history}[1]{\overleftarrow{#1}}
\newcommand{\A}{\mathcal{A}}
\newcommand{\G}{\mathcal{G}}
\newcommand{\asplit}{\mathsf{split}}
\newcommand{\pre}[2]{\mathsf{pre}(#1,#2)}
\newcommand{\leqlt}{\mathrel{\triangleleft}}
\newcommand{\ua}{{\uparrow}}
\newcommand{\da}{{\downarrow}}
\newcommand{\GG}{\mathbb{G}}
\newcommand{\graph}[1]{\GG(#1)}
\newcommand{\graphv}[2]{\GG_{#1}(#2)}
\newcommand{\V}{\mathbb{V}}
\newcommand{\elapse}[1]{\overrightarrow{#1}}
\newcommand{\prog}{\textrm{prog}}
\newcommand{\guard}{\mathrm{guard}}
\newcommand{\chg}{\mathrm{change}}
\newcommand{\Prog}{\textrm{Programs}}
\newcommand{\GTA}{\textrm{GTA}}
\newcommand{\GTAs}{\textrm{GTAs}}
\newcommand{\GTAFULL}{Generalized Timed Automata}
\newcommand{\GTAfull}{generalized timed automata}
\newcommand{\Atim}{A_{\ubt}}
\newcommand{\ubt}{\mathbin{\rotatebox[origin=c]{90}{$\bowtie$}}}
\newcommand{\timeout}{\mathsf{timeout}}
\newcommand{\toyeca}{\mathsf{ToyECA}}
\newcommand{\firealarm}{\operatorname{\mathsf{Fire}-\mathsf{alarm}}}
\newcommand{\firealarmpattern}{\operatorname{\mathsf{Fire}-\mathsf{alarm}-\mathsf{pattern}}}
\newcommand{\Sensor}{\mathsf{Sensor}}
\newcommand{\alive}{\mathsf{alive}}
\newcommand{\csmacd}{\mathsf{CSMACD}}
\newcommand{\csmacdbounded}{\operatorname{\mathsf{CSMACD}-\mathsf{bounded}}}
\newcommand{\Station}{\mathsf{Station}}
\newcommand{\Bus}{\mathsf{Bus}}
\newcommand{\altbp}{\operatorname{\mathsf{Alternating}-\mathsf{bit}-\mathsf{protocol}}}
\newcommand{\abp}{\mathsf{ABP}}
\newcommand{\abpp}{\operatorname{\mathsf{ABP}-\mathsf{prop1}}}
\newcommand{\abpq}{\operatorname{\mathsf{ABP}-\mathsf{prop2}}}
\title{A Unified Model for Real-Time Systems: Symbolic Techniques and Implementation}
\titlerunning{A Unified Model for Real-Time Systems}
\author{S Akshay}{Department of CSE, Indian Institute of Technology Bombay, Mumbai, India}
{akshayss@cse.iitb.ac.in}{https://orcid.org/0000-0002-2471-5997}{Supported in part by DST/SERB Matrics Grant MTR/2018/000744.}
\author{Paul Gastin}{Université Paris-Saclay, ENS Paris-Saclay, CNRS, LMF, 91190,
	Gif-sur-Yvette, France \and CNRS, ReLaX, IRL 2000, Siruseri, 
  India}{paul.gastin@ens-paris-saclay.fr}{https://orcid.org/0000-0002-1313-7722}{Partially supported by ANR project Ticktac (ANR-18-CE40-0015).}
\author{R Govind}{Department of CSE, Indian Institute of Technology Bombay, Mumbai, India} 
{govindr@cse.iitb.ac.in}{https://orcid.org/0000-0002-1634-5893}{}
\author{Aniruddha R Joshi}{Department of CSE, Indian Institute of Technology Bombay, Mumbai, India} 
{aniruddhajoshi@cse.iitb.ac.in}{https://orcid.org/0000-0003-1884-7894}{}
\author{B Srivathsan}{Chennai Mathematical Institute, India
  \and CNRS, ReLaX, IRL 2000, Siruseri, 
  India} {sri@cmi.ac.in}{https://orcid.org/0000-0003-2666-0691}{}
\authorrunning{S. Akshay, P. Gastin, R. Govind, A. Joshi and  B. Srivathsan}
\keywords{Real-time systems, Timed automata, Event-clock automata, Clocks, Timers, Verification, Zones, Simulations, Reachability}
\begin{document}

\maketitle

 \begin{abstract}
   In this paper, we consider a model of \emph{\GTAfull} (\GTA) with
   two kinds of clocks, \emph{history} and \emph{future}, that can express 
   many timed features succinctly, including timed automata, event-clock
   automata with and without diagonal constraints, and automata with timers.
   
   Our main contribution is a new simulation-based zone algorithm for checking
   reachability in this unified model.  While such algorithms are known to exist for
   timed automata, and have recently been shown for event-clock automata without diagonal
   constraints, this is the first result that can handle event-clock automata with
   diagonal constraints and automata with timers.  We also provide a prototype
   implementation for our model and show experimental results on several benchmarks.  To the best of our
   knowledge, this is the first effective implementation not just for our unified
   model, but even just for automata with timers or for event-clock automata (with
   predicting clocks) without going through a costly translation via timed automata.
   Last but not least, beyond being interesting in their own right, \GTAfull\ can be used
   for model-checking event-clock specifications over timed automata models.
 \end{abstract}

 \section{Introduction}\label{sec:intro}

 The idea of adding real-time dynamics to formal verification models started as a hot topic
 of research in the 1980s~\cite{de1992real,AlurPhD}.  Over the years, timed
 automata~\cite{DBLP:conf/icalp/AlurD90, AD94} has emerged as a leading model for
 finite-state concurrent systems with real-time constraints.  Timed automata make use of
 \emph{clocks}, real-valued variables which increase along with time.  Constraints over
 clock values can be used as guards for transitions, and clocks can be reset to $0$ along
 transitions.  It is notable that the early works in this area made use of \emph{timers} to
 deal with real-time~\cite{DBLP:conf/podc/KoymansVR83, Dill89, DBLP:conf/sosp/BernsteinH81}.
 Timers are started by setting them to some initial value within a given interval.  Their
 values decrease with time, and an \emph{timeout} event can be used in transitions to detect
 the instant when the timers become $0$.
 Quoting from~\cite{AlurPhD}, the shift from timers to clocks in timed automata, as we know
 them today, is attributed to the fact that: ``\emph{apart from some technical conveniences
 in developing the emptiness algorithm and proving its correctness, the reformulation
 allows a simple syntactic characterization of determinism for timed automata}''.  Over the
 last thirty years, the study of timed automata has led to the development of rich theory
 and industry-strength verification tools.  The use of clocks has also allowed for the
 extension of the model to more complex constraints and assignments to clocks in
 transitions~\cite{Bouyer04,DBLP:journals/tcs/BouyerDFP04}.  Furthermore, considering more
 sophisticated rates of evolution for clocks gives the yet another well-established model
 of hybrid automata~\cite{DBLP:conf/hybrid/AlurCHH92}.
 
  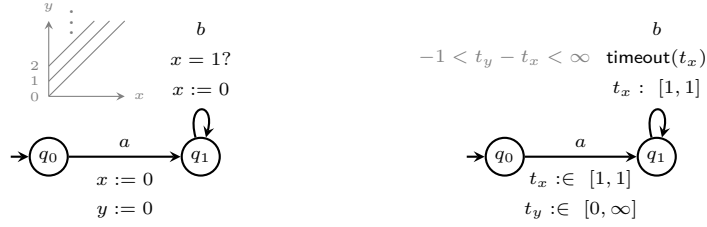
\begin{figure}[t]
    \centering
    \begin{tikzpicture}[state/.style={circle, inner sep=2pt, minimum
        size = 3pt, draw, thick}]
      \begin{scope}[every node/.style={state}]
        \node (0) at (0,0) {\scriptsize $q_0$}; \node (1) at (2,0)
        {\scriptsize $q_1$};
      \end{scope}
      \begin{scope}[thick, ->,>=stealth, auto]
        \draw (-0.5, 0) to (0); \draw (0) to node {\scriptsize $a$}
        (1); \draw (1) to [loop above] (1);
      \end{scope}
      \node at (1, -0.3) {\scriptsize $x := 0$}; \node at (1, -0.7)
      {\scriptsize $y := 0$}; \node at (2, 1.3) {\scriptsize $x = 1?$};
      \node at (2, 0.9) {\scriptsize $x := 0$}; \node at (2, 1.7)
      {\scriptsize $b$};
 
      \begin{scope}[gray]
        \draw [->, >=stealth] (0, 0.8) -- (1, 0.8); \draw [->,
        >=stealth] (0, 0.8) -- (0, 1.8); \draw (0, 0.8) -- (1, 1.8);
        \draw (0, 1) -- (0.8, 1.8); \draw (0, 1.2) -- (0.6, 1.8); \draw
        (0.3, 1.5) node [above] {$\vdots$} (0.3, 1.8); \node [right] at
        (1, 0.8) {\tiny $x$}; \node [above] at (0, 1.8) {\tiny $y$};
        \node [left] at (0, 0.8) {\tiny $0$}; \node [left] at (0, 1)
        {\tiny $1$}; \node [left] at (0, 1.2) {\tiny $2$};
      
      \end{scope}
 
      \begin{scope}[xshift=6cm]
        \begin{scope}[every node/.style={state}]
          \node (0) at (0,0) {\scriptsize $q_0$}; \node (1) at (2,0)
          {\scriptsize $q_1$};
        \end{scope}
        \begin{scope}[thick, ->,>=stealth, auto]
          \draw (-0.5, 0) to (0); \draw (0) to node [above]
          {\scriptsize $a$} (1); \draw (1) to [loop above] (1);
        \end{scope}
        \node at (1, -0.3) {\scriptsize $t_x :\in~ [1,1]$}; \node at (1,
        -0.7) {\scriptsize $t_y :\in~ [0, \infty]$}; \node at (2, 1.3)
        {\scriptsize $\timeout(t_x)$}; \node at (2, 0.9) {\scriptsize
          $t_x :~ [1, 1]$}; \node at (2, 1.7) {\scriptsize $b$}; \node
        [gray] at (0, 1.3) {\scriptsize $-1 < t_y - t_x < \infty$} ;
      \end{scope}
    \end{tikzpicture}
    \caption{An automaton with clocks on left, and timers on
      right for same constraints.}
    \label{fig:timers-vs-clocks}
  \end{figure}
  
 When it comes to the reachability problem, timers do have some nice properties.  Let us
 explain with an example.  Figure~\ref{fig:timers-vs-clocks} shows a timed automaton on the
 left, and an automaton with timers on the right, for the set of words $ab^*$ such that the
 time between every consecutive letters is $1$.  The timed automaton sets clock $x$ to $0$
 and checks for the guard $x = 1?$ to enforce the timing constraint.  The automaton with
 timers, on the right, sets a timer $t_x$ to $1$, and asks for its expiry in the immediate
 next action.  Clock $y$ and timer $t_y$ are not necessary for the required timing
 property, but we add them to illustrate a different aspect that we will describe now.  To
 solve the reachability problem, a symbolic enumeration of the state space is performed.
 In the timed automaton, at state $q_1$, the enumeration gives constraints $y - x = n$ for
 every $n \ge 0$.  Starting from $y - x = n$ and executing $b$ gives $y - x = n + 1$, due
 to the combination of guard $x = 1?$ and reset $x := 0$.  This shows that a na\"ive
 symbolic enumeration is not bound to terminate.  The question of developing finite
 abstractions for timed automata has been a central problem of study which started in the
 late 90s and continues till date (see recent
 surveys~\cite{DBLP:conf/formats/BouyerGHSS22,DBLP:journals/siglog/Srivathsan22}).
 Such an issue does not occur with timers.  In the automaton with timers on the right,
 $t_x$ is set to $1$ and $t_y$ is set to some arbitrary value in the transition to $q_1$.
 This gives $-1 \le t_y - t_x \leq \infty$ for the set of all possible timer values.  When
 $t_x$ times out, the value of $t_y$ could still be any value from $0$ to $\infty$.  When
 $t_x$ is set to $1$ again, the set of possible timer values still satisfies the same
 constraint $-1 \le t_y - t_x \leq \infty$ leading to a fixed point with a finite reachable
 state space.  The fact that symbolic enumeration terminates on an automaton with timers
 was already observed in~\cite{Dill89}.  To our knowledge, later works on timed automata
 reachability never went back to timers, and there is no tool support that we know of to
 deal with models with timers directly.  We find this surprising given that timers occur
 naturally while modeling real-time systems and moreover they enjoy this finiteness
 property.
 
 In addition to clocks and timers, \emph{event-clocks} are another special type of clock
 variables that are used to deal with timing constraints~\cite{AFH99}, which are attached
 to events.  An event-recording clock for event $a$ maintains the time since the previous
 occurrence of $a$, whereas an event-predicting clock for $a$ gives the time to the next
 occurrence of $a$.
 Event-clocks have been used in the model of event-clock automata (ECA), and also in the
 logic of event-clocks~\cite{RaskinS99}.  These works argue that
 event-clocks can express typical real-time requirements.  Theoretically, ECA can be
 determinized, and hence complemented.  Therefore, model-checking an event-clock (logic or
 automaton) specification $\varphi$ over a timed automaton $\Aa$ can be reduced to
 reachability on the product of $\Aa$ and the ECA for $\neg \varphi$.  This makes
 event-clocks a convenient feature in specifications.
 
 Recently, a symbolic enumeration algorithm for ECA was proposed~\cite{Concur22}.  It was
 noticed that when restricted to event-predicting clocks, the symbolic enumeration
 terminates without any additional checks (similar to the case of timers), whereas for the
 combination involving event-recording clocks, one needs simulation techniques from the
 timed automata literature.  The same work showed how to adapt the best known simulation
 technique from timed automata into the setting of ECA. However, as discussed above, for
 model-checking we need a model containing both conventional clocks, timers and
 event-clocks.  To our knowledge, no tool can directly work on such models.
 
 Our goal in this work is to provide a one stop solution to real-time verification, be it
 reachability analysis or model-checking (over event-clock specifications), be it using
 models with clocks, or models with timers.  We consider a unified model of a timed automaton over variables that can simulate normal clocks, timers and event-clocks.
 Here are our key contributions:
 \begin{enumerate}
   \item We define a new model of \GTAfull\ (\GTA) which have two types
   of variables, called \emph{history} clocks and \emph{future} clocks.  History clocks
   generalize normal clocks as well as event-recording clocks, while future clocks
   generalize event-predicting clocks and timers.  However, unlike event-clocks, clocks in
   \GTA\ are not necessarily associated with events.  We also consider a generic syntax that
   allows for diagonal constraints between variables.
   
  \item We show undecidability of reachability for \GTA, and study a \emph{safe subclass}
  that makes the model decidable.  Safe \GTA\ already subsume timed automata, event-clock
  automata (with diagonal constraints) and automata with timers.
 
  \item We adapt state-of-the-art symbolic enumeration techniques from timed automata
  literature to safe \GTA.  While we make use of ideas presented in~\cite{Dill89} and
  \cite{Concur22}, these works do not contain diagonal constraints between variables.  Our
  main technical and theoretical innovation lies in a new termination analysis of the
  symbolic enumeration in the presence of diagonal constraints.  Surprisingly, we show that
  the enumeration terminates as long as the diagonal constraints are restricted to usual
  clocks and event-clocks, but not timers.
  
  \item We develop a prototype implementation of our model and algorithm in
  \textsc{Tchecker}, an open-source platform for timed automata analysis, and show
  promising results on several existing and new benchmarks.  To the best of our knowledge,
  our tool is the first that can handle event-clock automata, a model that till date has
  been the subject of many theoretical results.
 \end{enumerate}
 
 \noindent{\em Related works.}  
 In the work that first introduced ECA, 
 a translation from ECA to a timed automaton was also proposed.
 However, this translation is not efficient: in the worst case, this translation incurs a
 blowup in the number of clocks and states.  In~\cite{GeeraertsRS11, GeeraertsRS14}, an
 extrapolation approach using maximal constants has been studied for ECA. However,
 it has been observed that simulation-based techniques are both more
 effective~\cite{Bouyer04,DBLP:conf/cav/BouyerCM16} and
 efficient~\cite{GastinMS18,GastinMS19,GastinMS20,AkshayGP21} than extrapolation for
 checking reachability.  Recently, \cite{Concur22} proposed a zone-based reachability
 algorithm for diagonal-free ECA, using simulations for finiteness, but there was no accompanying implementation. 
 Diagonal constraints have long been known to allow succinct modeling~\cite{Chevalier-Bouyer} for the class of timed-automata, but only recently a zone-based algorithm that directly works on such automata, was proposed.
 ECA with diagonals are more expressive than ECA~\cite{DBLP:conf/fsttcs/BozzelliMP19a}.
 In this work, we propose a zone-based algorithm for a unified model that subsumes ECA with diagonals.
 
 The use of history clocks and prophecy clocks in ECAs is in the same spirit as past and future modalities in temporal logics - this makes ECAs an attractive model for writing timed specifications.  Indeed, this has also led to a development of various temporal logics with event-clocks~\cite{DSouzaT04,AkshayBG13,RaskinS99}.  
 ECA with diagonal constraints have been well-studied, such as in the context of timeline based planning~\cite{DBLP:conf/fsttcs/BozzelliMP19a,DBLP:journals/tcs/BozzelliMP22}. 
 Finally, while there has been substantial advances in the theory of ECA, to the best of our knowledge, the only tool that handles ECA is \textsc{Tempo}~\cite{Tempo}, and even this tool is restricted to just history clocks.
 
  \medskip\noindent{\em Structure of the paper.}
 In Section~\ref{sec:prelims} we start by defining the generalized model.
 Section~\ref{sec:examples} examines its expressiveness, while
 Section~\ref{sec:reachability} deals with the reachability problem and the safe subclass.
 Section~\ref{sec:simulation} develops the symbolic enumeration technique, while
 Section~\ref{sec:DBM} explains how distance graphs can be extended to this setting.
 In Section~\ref{sec:dagger}, we discuss some of the properties of distance graphs of reachable zones.
 Section~\ref{sec:finiteness} is dedicated to finiteness.  Finally, we provide our experimental results in Section~\ref{sec:experiments} and conclude with
 Section~\ref{sec:conclusion}.

\section{\GTAFULL}\label{sec:prelims}

In this section we introduce the unified model.  While we build on classical ideas from timed automata, almost every aspect is extended and below we highlight these changes. 

\subsection{Extending clocks and constraints}
We define $X = X_H \uplus X_F$ to be a finite set of real-valued variables called \emph{clocks}, where $X_H$ is the set of \emph{history clocks}, and $X_F$ is the set of \emph{future clocks}.

Let $\overline{\mathbb{R}}=\mathbb{R}\cup \{-\infty,+\infty\}$ denote the set of
all real numbers along with $-\infty$ and $+\infty$.  
The usual $<$ order on reals is extended to deal with $\{-\infty, +\infty\}$ as: $-\infty
< c < +\infty$ for all $c \in \mathbb{R}$ and $-\infty < \infty$.  Similarly,
$\overline{\mathbb{Z}}=\mathbb{Z}\cup \{-\infty,+\infty\}$ denotes the set of all integers
along with $-\infty$ and $+\infty$.  Let $\Rpos$ (resp.  $\Rneg$) be the set of
non-negative (resp.  non-positive) reals.

  \begin{definition}[Weights]\label{def:weights}
    Let $\mathcal{C} = \{(\leqlt, c) \mid c \in \overline{\mathbb{R}} \text{ and } {\leqlt} \in \{\leq, <\}\}$, called the set of weights.
  \end{definition}
  
Let $X \cup \{0\}$ be the set obtained by extending the clocks of \GTA\ with the special constant clock $0$.   Note that this clock will always have the value 0. Let $\Phi(X)$ denote a set of clock constraints generated by the following
  grammar: 
  $$\varphi ::= x - y \leqlt c \mid \varphi \land \varphi$$
  where $x,y \in X \cup \{0\}$, $(\leqlt,c) \in \mathcal{C}$ and $c \in \overline{\mathbb{Z}}$. 
  The introduction of the special constant clock $0$ allows us to treat constraints with just a single clock as special cases.
Note that the constraint $x\leqlt c$ is equivalent to $x-0\leqlt c$ and the constraint
$c\leqlt x$ is equivalent to $0-x\leqlt -c$.
We often write $x=c$ as a shorthand for $x\leq c \wedge c\leq x$.
The base constraints of the form $x - y \leqlt c$ will be called \emph{atomic constraints}.  
The atomic constraints of the form $x - y \leqlt c$ such that $x \neq 0 \neq y$ are called \emph{diagonal constraints}, and all other atomic constraints are called \emph{non-diagonal constraints}.   

\subsection{Extending valuations}
We first recall and discuss the \emph{extended algebra} of weights that was proposed in a recent work on event-clock automata~\cite{Concur22,eca-simulations-arxiv}.
To evaluate the constraints allowed by $\Phi(X)$, we extend addition on real numbers with the convention that $(+\infty)+\alpha = \alpha+(+\infty) = +\infty$ for all $\alpha\in\overline{\RR}$ and $(-\infty)+\beta = \beta+(-\infty) = -\infty$, as long as $\beta\neq+\infty$.  
We also extend the unary minus operation from real numbers to
$\overline{\mathbb{R}}$ by setting $-(+\infty)=-\infty$ and $-(-\infty)=+\infty$.  Abusing notation, we write $\beta-\alpha$ for $\beta+(-\alpha)$. 
Notice that with this definition of extended addition, the minus operation does not distribute over addition.\footnote{
Notice that $-(a+b)=(-a)+(-b)$ when $a$ or $b$ is finite or when $a=b$. But, when $a=+\infty$ and $b=-\infty$ then $-(a+b)=-\infty$ whereas $(-a)+(-b)=+\infty$.}. 

  We first highlight a few more important features of the definition of extended addition operation.

\begin{remark}\cite{eca-simulations-arxiv}\label{rem:extended-addition}
  This extended addition has the following properties that are easy to check:
  \begin{enumerate}
    \item $(\overline{\mathbb{R}},+,0)$ is a monoid with $0$ as neutral element.  In particular, the extended addition is associative.

    \item $(\overline{\mathbb{R}},+,0)$ is not a group, since $-\infty$ and $+\infty$ have
    no opposite values.  Note that, $\alpha+(-\alpha)=0$ when $\alpha\in\mathbb{R}$ is
    finite but $\alpha+(-\alpha)=+\infty$ when $\alpha\in\{-\infty,+\infty\}$.  As a
    consequence, in an equation $\alpha+\beta=\alpha+\gamma$, we can cancel $\alpha$ and
    deduce $\beta=\gamma$ when $\alpha$ is finite, but not when $\alpha$ is infinite.
    
    \item The order $\leq$ is monotone on $\overline{\mathbb{R}}$: $b\leq c$ implies
    $a+b\leq a+c$, but the converse implication only holds when $a$ is finite.

    \item The strict order $<$ is only monotone with respect to finite values: when $a$ is
    finite, $b<c$ iff $a+b<a+c$.

    \item For all $a,b\in\overline{\mathbb{R}}$ and $(\leqlt,c)\in\mathcal{C}$, we have
    $a\leqlt b$ iff $-b\leqlt -a$.  Further, $a-b\leqlt c$ implies $a\leqlt b+c$. The
    converse of the latter statement holds when $b$ is finite.  Note that the
    converse may be false when $b$ is infinite.\footnote{
    For instance, if $a<+\infty=b$ then $a<b+(-\infty)$, but $a-b=-\infty\not<-\infty$.
    
    If $a<+\infty$ and $b=-\infty$ then $a<b+\infty$, but $a-b=+\infty\not<+\infty$.
    
    If $a=b\in\{-\infty,+\infty\}$ and $c$ is finite then $a\leq b+c$, but
    $a-b=+\infty\not\leq c$.}
  \end{enumerate}  
\end{remark}

\begin{definition}[Valuation]
  A valuation of clocks is a function
  $v\colon X \cup \{0\} \mapsto \overline{\RR}$ which
  maps the special clock $0$ to 0,
  history clocks to $\Rpos\cup\{+\infty\}$ 
  and future clocks to $\Rneg \cup \{-\infty\}$. 
  We denote by $\V(X)$ or simply by $\V$ the set of valuations over $X$.
  We say that clock $x$ is \emph{defined} (resp.\ \emph{undefined}) in $v$
  when $v(x)\in\mathbb{R}$ (resp.\ $v(x)\in\{-\infty,+\infty\}$).
\end{definition}

\begin{figure}
  \centering
  \begin{tikzpicture}[state/.style={draw, thick, circle, inner sep=2pt}]

    \begin{scope}[->, >=stealth, thick]
      \draw (0,4) to (8,4);
      \draw (8,4) to (0,4);
  
      \draw (0,7) to (8,7);
      \draw (8,7) to (0,7);
    \end{scope}
  
    \node at (1,4) [circle,fill,inner sep=1pt]{};
    \node at (7,4) [circle,fill,inner sep=1pt]{};
  
    \node at (1,8)  {\small \textcolor{black}{Timestamp of the event} };
    \node at (1,7.6)  {\small \textcolor{black}{recorded by history clock $y$} };
    \node at (7,8)  {\small \textcolor{black}{Timestamp of the event} };
    \node at (7,7.6)  {\small \textcolor{black}{predicted by future clock $x$} };
  
    \node at (-1,7)  {$- \infty$ };
    \node at (9,7)  {$\infty$ };
    \node at (1,6.5)  { $t'$ };
    \node at (7,6.5)  { $t''$ };
    \node at (1,7) [circle,fill,inner sep=1pt]{};
    \node at (7,7) [circle,fill,inner sep=1pt]{};
    \node at (4,7) [circle,fill,inner sep=1pt]{};
  
    \node at (4,4) [circle,fill,inner sep=1pt]{};
    \node at (4.2,6.5)  { $t_{present}$ };
  
    \draw [thick, decorate,
    decoration = {calligraphic brace}] (4,6.25) --  (1,6.25);
    \draw [thick, decorate,
    decoration = {calligraphic brace}] (7,6.25) --  (4,6.25);
  
    \node at (2.5,5.8)  { \textcolor{black}{$v(y)$} };
    \node at (2.5,5.3)  { $t_{present} - t'$ };
    \node at (5.5,5.8)  { \textcolor{black}{$v(x)$} };
    \node at (5.5,5.3)  { $t_{present} - t''$ };    
  
    \node at (-1,4)  {$- \infty$ };
    \node at (9,4)  {$\infty$ };
    \node at (1,3.5)  { $t'$ };
    \node at (7,3.5)  { $t''$ };
    \node [red] at (5.7,3.5)  {  \textcolor{red}{$t_{present}$} };
    \node at (4,4) [circle,fill,inner sep=1pt]{};
    \node at (4.2,3.5)  { \textcolor{gray}{$t_{present}$}};
    \node at (5.5,4) [circle,fill,inner sep=1pt]{};
  
      \begin{scope}[->, >=stealth, thick]
        \draw [thick,red] (4,4.25) to (5.5,4.25);
      \end{scope}
      \node at (4.75,4.5) {$\d$};
  
      \node [red] at (5.5,4) [circle,fill,inner sep=1pt]{};

      \draw [thick, decorate,
      decoration = {calligraphic brace}] (5.5,3) --  (1,3);
      \draw [thick, decorate,
      decoration = {calligraphic brace}] (7,3) --  (5.5,3);
  
      \node at (3.25,2.5)  {\small \textcolor{black}{$\textcolor{red}{v'(y)} = v(y) + \delta$} };
      \node at (3.25,2)  { $ \textcolor{red}{t_{present}} - t'$ };
      \node at (7.25,2.5)   { \small \textcolor{black}{$\textcolor{red}{v'(x)} = v(x) + \delta$} };
      \node at (6.25,2)  { $ \textcolor{red}{t_{present}} - t''$ };  
  \end{tikzpicture}
    \captionof{figure}{Representation of valuations in \GTAfull. Here, $v' = v+ \d$.}
  \label{fig:gta-valuation}
\end{figure}

\begin{definition}\label{def:diagonal-semantics}
  Let $x,y\in X\cup\{0\}$ be clocks (including 0) and let $(\leqlt,c)$ be a weight.
  For valuations $v\in\V$, define $v\models y-x \leqlt c$ as $v(y)-v(x)\leqlt c$.
  We say that a valuation $v$ satisfies a constraint $\varphi$, denoted as 
  $v\models\varphi$, when $v$ satisfies all atomic constraints in $\varphi$.
\end{definition}

\begin{remark}
  From Definition~\ref{def:diagonal-semantics}, we easily check that the constraint
  $y-x\leqlt c$ is equivalent to \emph{true} (resp.\ \emph{false}) when
  $(\leqlt,c)=(\leq,+\infty)$ (resp.\ $(\leqlt,c)=(<,-\infty)$).
  Constraints that are equivalent to \emph{true} or \emph{false} will be called trivial, whereas all others are non-trivial constraints.
  
  If $(\leqlt,c)\neq(\leq,+\infty)$ then $v\models y-x \leqlt c$ never holds when $v(x)=-\infty$.  
  
  Also, if $v(x)=v(y)\in\{-\infty,+\infty\}$ then $v\models y-x \leqlt c$ only holds for $(\leqlt,c)=(\leq,+\infty)$.
  
  For a non-trivial constraint $y-x \leqlt c$, i.e., $(\leqlt,c)\notin 
  \{(<,-\infty),(\leq,+\infty)\}$,
  we have 
  \begin{itemize}
    \item $v\models y-x\leqlt c$ iff  $v(y)<+\infty=v(x)$ or ($v(x)$ is finite and $v(y) \leqlt v(x) + c$).
    
    \item $v\models y-x \leq -\infty$ iff $v(y)<+\infty=v(x)$ or $v(y)=-\infty<v(x)$.
    
    \item $v\models y-x < +\infty$ iff $v(x)\neq-\infty$ and $v(y)\neq+\infty$.
    \qed
  \end{itemize}
\end{remark}

We abuse notation and  for $Y \subseteq X$, we define 
$Y \triangleleft c$ as $\displaystyle\bigwedge_{y \in Y} y \triangleleft c$, and
$Y = c$ as $\displaystyle\bigwedge_{y \in Y} y = c$.

We denote by $v + \d$ the valuation obtained from valuation $v$ by increasing by $\d
\in \Rpos$ the value of all clocks in $X$.
Note that, from a given valuation, not all time elapse result in valuations since future clocks need to stay at most $0$. 
For example, from a valuation with $v(x) = -3$ and $v(y) = -2$, where $x,y$ are future clocks, one can elapse at most $2$ time units. 

\subsection{Extending resets}
For history clocks, the reset operation sets the clock to 0.  For future clocks, the reset
operation says that all constraints on the clock must be discarded, i.e., the
clock is {\em released}.  Given that the set of clocks is partitioned into history clocks
and future clocks, we use the same notation $[R]v$ to talk about the change of clocks in
$R$, whether it be reset/release, which operates differently depending on whether the
clock is a history or future clock.
More precisely, given a set of clocks $R \subseteq X$, we define $R_{F} = R \cap X_{F}$ as the set of future clocks in $R$ and $R_{H} = R \cap X_{H}$ as the set of history clocks in $R$. We then define $[R]v$ as follows:
$$[R]v := \{v' \in \V \mid v'(x) = 0~\forall ~x \in R_{H} \text{ and } v'(x) = v(x) ~\forall~ x \not\in R\}$$
Observe that, {\em the release operation} is implicit: each future clock in $R$ could take any value (not necessarily the same) from $[-\infty,0]$ in $[R]v$.
Note that $[R]v$ is a singleton when $R$ contains only history clocks - this corresponds
exactly to the reset operation in timed automata. In this case, we simply write $v'=[R]v$ 
instead of $\{v'\}=[R]v$.
When $R$ contains only future clocks, $[R]v$ is the set of valuations obtained by
releasing each clock in $R$ (i.e., setting each clock in $R$ non-deterministically to some
value in $[-\infty,0]$, while keeping the value of other clocks unchanged).
For $W\subseteq\V$, we let $[R]W=\bigcup_{v\in W}[R]v$.
We have $[R'\cup R'']W=[R']([R'']W)$.

\subsection{Extending guards and transitions}
Before we define \GTA, let us focus on the language to specify transitions.  In normal
timed automata, as shown in Figure~\ref{fig:ta-transition}, a transition reads a letter,
checks a guard and then resets a subset of (history) clocks.  The guard $g$ can capture
multiple constraints by allowing a conjunction of atomic constraints and resetting the
subset $R\subseteq X$ of clocks corresponds to resetting them one by one.  But in any one
transition only a pair of guard, reset is performed and one cannot interleave them.

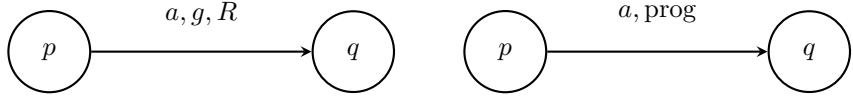
\begin{figure}
  \centering
  \begin{tikzpicture}[state/.style={draw, thick, circle, inner sep=1pt}]
    \begin{scope}[scale=1, every node/.style={scale=1}]
  
    \begin{scope}[state/.style={draw, thick, circle, inner sep=8pt}]
      \node [state] (p) at (2, 6) { $p$};
      \node [state] (q) at (6, 6) { $q$};
    \end{scope}
  
    \begin{scope}[->, >=stealth, thick]
      \draw (p) to (q);
    \end{scope}
  
    \node at (4, 6.5) { $a, g, R$};

    \begin{scope}[xshift=6cm]
      \begin{scope}[state/.style={draw, thick, circle, inner sep=8pt}]
        \node [state] (p) at (2, 6) { $p$};
        \node [state] (q) at (6, 6) { $q$};
      \end{scope}
    
      \begin{scope}[->, >=stealth, thick]
        \draw (p) to (q);
      \end{scope}
      \node at (4, 6.5) { $a, \prog$}; 
    \end{scope}

    \end{scope}
  \end{tikzpicture}
  \captionof{figure}{A transition of TA (left) and of a \GTA\ (right)}
  \label{fig:ta-transition}
\end{figure}

We generalize this to our setting with history and future clocks but also to allow
arbitrary interleaving of guards and changes\footnote{To model this with a TA one may use 
a sequence of multiple transitions without delays in-between.}.  Let us formalize this.
An {\em instantaneous timed program} is generated by the following grammar:
\begin{align}
  \prog  &:= \guard \mid \chg \mid \prog; \prog
  \qquad\text{ where,}  \\
  & \quad \guard =g\in \Phi(X)\\
  & \quad \chg   = [R] \text{ for some } R\subseteq X
\end{align}
While $\guard$ and $\chg$ are atomic programs, $\prog;\prog$ refers to sequential
composition.  The set of all programs generated by the above grammar will be denoted
$\Prog$.  Then on a transition, we simply have a pair of letter label and an 
instantaneous timed program,
e.g., $(a, \prog)$ in Figure~\ref{fig:ta-transition} (right).

The semantics for programs on a transition must generalize semantics for guards (defined
using satisfaction relation $\models$ above) and resets/release (defined using $[R]$
above).  But there is an obvious difference between these two: a guard may be crossed 
only if the valuation before the guard satisfies it, whereas a \emph{change} (reset or 
release) defines a relation between the valuations before and after the change.
To capture both in a uniform way, we define the semantics of programs as relations on
pairs of valuations.  Formally, for $v,v'\in \V$, $\prog\in\Prog$ we say $(v,v')\models
\prog$, more conveniently written as $v\xrightarrow{\prog} v'$, inductively:
\begin{itemize}
\item $v\xrightarrow{g}v'$ if $v\models g$ and $v'=v$,
\item $v\xrightarrow{[R]}v'$ if $v'\in [R]v$,
\item $v\xrightarrow{\prog_1;\prog_2} v'$ if $\exists v''\in \V$ such that $v\xrightarrow{\prog_1} v''$ and $v''\xrightarrow{\prog_2} v'$.
\end{itemize}

\subsection{Extending the automaton model}
Now, we have all the pieces necessary to define our generalized model.

\begin{definition}[\GTAfull]\label{defn:gta}
  A \emph{\GTA}\ $\A$ is given by a tuple
  $(Q, \Sigma, X, \Delta, (q_0, g_0), (Q_{f}, g_f))$, where
  \begin{itemize}
    \item $Q$ is a finite set of states,
    
    \item $\Sigma$ is a finite alphabet of actions,
    
    \item $X=X_F \uplus X_H$ is a set of clocks partitioned into future and history clocks,
    
    \item the initialization condition is a pair comprising of an initial state $q_0 \in Q$ 
    and an initial guard $g_0\in \Phi(X)$ which should be satisfied by initial valuations,
    
    \item similarly, the final condition is a pair comprising of a set of final states
    $Q_{f} \subseteq Q$ along with a final guard $g_f$ that must be satisfied by final
    valuations,
    
    \item $\Delta \subseteq (Q \times \Sigma \times \Prog \times Q)$ is a finite set of
    transitions.  $\Delta$ contains transitions of the form $(q, a, \prog, q')$, where $q$
    is the source state, $q'$ is the target state, $a$ is the action triggering the
    transition, and $\prog$ is the instantaneous timed program that is executed
    in sequence (from left to right) while firing the transition.
  \end{itemize}
\end{definition}

\begin{definition}[Semantics of \GTA]\label{defn:gta-semantics}
  The semantics of a \GTA\
  $\A=(Q, \Sigma, X, \Delta, (q_0, g_0), (Q_f, g_f))$ 
  is given by a transition system $\TS_{\A}$ whose states are \emph{configurations}
  $(q,v)$ of $\A$, where $q \in Q$ and $v\in\V$ is a valuation.
  \begin{itemize}
    \item   A configuration $(q,v)$ is initial if $q=q_{0}$ and $v\models g_0$.
      \item A configuration $(q,v)$ is accepting if $q\in Q_{f}$ and $v\models g_f$.
      \item   Transitions of $\TS_{\A}$ are of two forms:
  \begin{itemize}
    \item \textbf{Delay transition}:
    $(q,v) \xra{\delta} (q, v + \delta)$ if $(v + \delta) \models X_{F} \leq 0$.
    
    \item \textbf{Discrete transition}: $(q,v) \xra{t} (q',v')$ if 
    $t=(q,a,\prog,q')\in\Delta$ and $v\xrightarrow{\prog} v'$.
    \end{itemize}
  \end{itemize}
\end{definition}

Thus, a discrete transition $t=(q,a,\prog,q')$, where $\prog=\prog_1;\ldots;\prog_n$ can
be taken from $(q,v)$ if there are valuations $v_1,\ldots v_n$ such that
$v\xrightarrow{\prog_1}v_1 \xrightarrow{\prog_2}v_2 \ldots \xrightarrow{\prog_n} v_n=v'$.

A \emph{run} of a \GTA\ is a finite sequence of transitions from an initial configuration of $\TS_{\A}$.   A run is said to be \emph{accepting} if its last configuration is accepting.  

\section{Expressivity of \GTA\ and examples}\label{sec:examples}

The \GTA\ model defined above is rather expressive. Figure~\ref{fig:an-bn}
illustrates an example which accepts words of the form $a^n b^m$ with $m \le n$, where
each $a$ occurs at time $0$, after which $b$'s are seen one by one, with distance $1$
between them.  The history clock $x$ is used to ensure the timing constraint.  For every
$a$ that is read, the future clocks $y, z$ decrease by $1$.  Hence the future clocks $y,
z$ maintain the opposite of the number of $a$'s seen.  When the automaton starts reading
$b$, the future clocks also start elapsing time and since they cannot go above $0$, the
number of $b$'s is at most the number of $a$'s.  Such a language cannot be accepted by
timed automata since the untimed language obtained by removing the time stamps needs to be
regular in the case of timed automata.  The \GTA\ model is not only expressive, it is also
convenient for use.  To see this we now show that three classical models of timed systems
can be easily captured using \GTA. We also illustrate the modeling convenience provided by
\GTA\ in Section~\ref{sec:experiments} based on experiments.

\begin{figure}[htbp]
    \centering
      \scalebox{1}{
      \begin{tikzpicture}[state/.style={circle, draw, inner sep=5pt,
        thick}]
      \begin{scope}[every node/.style={state}]
        \node (0) at (-1,0) {\large \bf $q_0$};
        \node [double] (1) at (2,0) {\large \bf $q_1$};
      \end{scope}
      \begin{scope}[->, >=stealth, thick, auto]
        \draw (-1.75, 0) to (0);
        \draw (0) to [loop above] node {\large \bf $a, \prog_1$} (1);
        \draw (0) to node {\large \bf $b, \prog_2$} (1);
        \draw (1) to [loop above] node {\large \bf  $b,
          \prog_2$} (1); 
      \end{scope}
  
      \node [right] at (4, 1) {\large \bf History clocks: $\{x\}$, \quad Future clocks: $\{y, z \}$};
      \node [right] at (4, 0.5) {\large \bf $\prog_1: \langle x = 0; 
       ~[y]; ~y = z - 1; ~[z]; z = y \rangle$ };
      \node [right] at (4, 0) {\large \bf $\prog_2: \langle x = 1;
        ~[x] \rangle$};
      \node [right] at (4, -0.5) {\large \bf Initial condition:
        $y = z = 0$}; 
      \node [right] at (4, -1) {\large \bf Final condition:
        \emph{true} };
    \end{tikzpicture}
      }
    \caption{Example of a \GTA}
    \label{fig:an-bn}
  \end{figure}

\subsection{Timed automata}

Timed automata (TA) of Alur-Dill~\cite{AD94} can be modeled as a \GTA\ as follows:
\begin{itemize}
  \item The set of states of the \GTA\ is the same as the set of states of the TA.
  
  \item There are no future clocks in the \GTA\ and its history clocks are the clocks of the TA.

  \item Each transition of the form $q\xrightarrow{a,g,R} q'$ in a TA , where $g$ is a guard, $a$ a 
  letter and $R$ a subset of clocks to be reset, is replaced by a transition
  $q \xrightarrow{a, \prog} q'$ where $\prog= \langle g;[R] \rangle$.
  
  \item Initially, all clocks must be 0, captured by setting 
  $g_{0}=(X_{H}=0)$. 
  \item The final guard is empty: $g_f=\mathtt{True}$.
\end{itemize}

\subsection{Event-clock automata}
Event clock automata (ECA) of~\cite{AFH99} can be modeled as a \GTA\ as follows:
\begin{itemize}
  \item The set of states of the \GTA\ is the same as the set of states of the ECA.

  \item For each $a\in \Sigma$, the \GTA\ has a history clock $\history{a}$ and a future clock $\prophecy{a}$.
  
  \item Each transition of the form $q\xrightarrow{a,g} q'$ in a ECA, where $g$ is a
  guard of the ECA, $a$ a letter, is replaced by a transition $q \xrightarrow{a,\prog} q'$
  where $\prog:= \langle(\prophecy{a}=0); [\prophecy{a}]; g; [\history{a}]\rangle$.

  \item At initialization, history clocks must be undefined (set to $\infty$),
  captured by $g_{0}=(X_{H}=\infty)$. 
 
  \item At acceptance, all future clocks must be undefined, i.e., $g_{f}=(X_{F}=-\infty)$.
\end{itemize}

\subsection{Automata with timers}
The third model we consider is that of automata with timers.  Timers are timing constructs
that are started/intialized with a certain time value at some point/event and {\em count
down} to 0.  They measure the time from when they were started till the timer hits 0,
where the event of hitting 0 being called a {\em time-out} event.
However, they can be stopped using a {\em stop} event at any intermediate point instead and in which case the timer must be freed for reuse later.
Timers are a common construct in protocol specification, e.g., the ITU standard which uses timers rather than clocks~\cite{ITU} and
Mealy machines with timers~\cite{MMT}.   

In our setting, a timer can be seen as a specific instance of a future clock.  More
precisely Automata with timers ($\Atim$) can be modeled as \GTA\ as follows: \begin{itemize}
  \item The set of states of the \GTA\ is the same as the set of states of $\Atim$.

  \item The future clocks of \GTA\ are the timers of $\Atim$ and there are no history clocks.
  
  \item A transition of $\Atim$ with action $a$ from $q$ to $q'$ is encoded as a 
  $q\xrightarrow{a,\prog}q'$ where $\prog$ is defined as follows:
  \begin{itemize}
    \item if the transition starts timer $x$ with value $c\in\Rpos$, then 
    $\prog=\langle x=-\infty;[x];x=-c \rangle$.

    \item if the transition is guarded by $\mathtt{timeout}(x)$, then $\prog=\langle
    x=0;[x];x=-\infty \rangle$.
    
    \item  if the transition stops timer $x$, then $\prog=\langle [x];x=-\infty \rangle$.
  \end{itemize}
  \item Initially, the timers are undefined, captured by $g_{0}=(X_{F}=-\infty)$ and the final guard is empty, i.e., $g_{f}=\mathtt{True}$.  
\end{itemize}

We note that the timer above differs from a prophecy-event-clock (of ECA) though both are
future clocks.  Prophecy-clocks are released only when the event is seen, so at that point
the value of the prophecy-clock must be 0.  On the other hand timers can be stopped and
released even when their value is not 0.  This subtle difference has a surprising impact
when we allow diagonal guards as we will see shortly.

\section{The reachability problem for \GTA}
\label{sec:reachability}

We are interested in the \emph{reachability problem} for \GTA.  Formally, 

\begin{definition}[Reachability problem for \GTA]\label{defn:reach-problem}
  The reachability problem for a \GTA\ $\A$ is to decide whether $\A$ has an accepting run.
\end{definition}

For normal TA, the reachability problem is decidable and PSPACE complete as shown
in~\cite{AD94}.  This was shown using the so-called region abstraction, by proving the
existence of a finite time-abstract bisimulation.  However, this is not the case for \GTA.
As explained in the previous subsection, \GTA\ capture ECA, and as shown
in~\cite{GeeraertsRS11,GeeraertsRS14}, there exists ECA for which there is no finite
time-abstract bisimulation.  However, reachability is still decidable in the specific case
of ECA, as again shown in~\cite{AFH99}.
We note that for ECA model of~\cite{GeeraertsRS11,GeeraertsRS14} there are no diagonal
constraints.  In this case they show decidability via zone-extrapolation.
In~\cite{Concur22}, another approach for decidability via zone simulations is shown.  But
again even in this model diagonal constraints are disallowed.  Even more critically in
\GTA, we can capture timers and a priori we can have diagonal constraints even among
timers.  So, the question we ask is whether reachability is still decidable for \GTA.
Surprisingly, the answer is no.  The intuition is that with future clocks and diagonal
constraints, we get the ability to count (cf.\ Figure~\ref{fig:an-bn}).

\begin{theorem}
    Reachability for \GTA\ is undecidable.
\end{theorem}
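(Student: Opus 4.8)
The plan is to reduce the halting problem for two-counter (Minsky) machines to reachability for \GTA, exploiting the observation already flagged in the text: future clocks combined with diagonal constraints can count. First I would fix a two-counter machine $M$ with counters $c_1, c_2$ and a finite set of instructions (increment, decrement-or-test-zero, halt). I would build a \GTA\ $\A_M$ whose discrete states essentially track the program counter of $M$, and whose clocks encode the counter values. The key encoding idea, mirroring Figure~\ref{fig:an-bn}: use two future clocks $y_1, y_2$ (one per counter) together with a history clock $x$ that is reset to $0$ and checked against $1$ on every ``tick'' transition, so that between two consecutive ticks exactly one time unit elapses. Each tick that should increment $c_i$ runs a program of the form $\langle x=1; [x]; y_i = y_i' - 1; [y_i']; \ldots\rangle$ (using an auxiliary released clock to shift $y_i$ down by $1$ while keeping the other counter untouched), so that $v(y_i)$ equals $-c_i$ at all times. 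Because future clocks may never exceed $0$, letting time elapse ``consumes'' the stored negative offset, which is exactly what lets us test and decrement.

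The heart of the construction is simulating the three instruction types. For an \textbf{increment} of $c_i$: append one tick whose program decreases $y_i$ by $1$ (and leaves $y_{3-i}$ fixed via a diagonal constraint $y_{3-i} = y_{3-i}' $ against a freshly released auxiliary clock). For a \textbf{zero-test}: $c_i = 0$ iff $v(y_i) = 0$, which is checkable directly by the guard $y_i = 0$; the nonzero branch uses $y_i < 0$, i.e.\ $y_i \leq -1$ since all reachable values are integers (enforced by the $x=1$ ticks). For a \textbf{decrement}: on the nonzero branch, a tick whose program raises $y_i$ by $1$ — concretely release $y_i$ and re-impose $y_i = y_i^{old} + 1$ via a diagonal constraint against a stored copy of the old value — while forcing $y_{3-i}$ to stay put. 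Reaching the halt state of $M$ corresponds to reaching an accepting configuration of $\A_M$. The delicate point is bookkeeping with the release-then-constrain idiom: a single program step releases a future clock and then pins it with a diagonal equality to the (still-available) value of another clock, so the net effect is a deterministic shift; I would spell out that $[y_i']$ followed by $y_i' = y_i - 1$ followed by a copy-back indeed forces $v(y_i')$ to the intended value, using Definition~\ref{def:diagonal-semantics} and the fact that the clocks involved are defined (finite) throughout the simulation.

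I would then argue correctness in the two standard directions: every halting run of $M$ lifts to an accepting run of $\A_M$ (each configuration $(p, n_1, n_2)$ of $M$ corresponds to a configuration $(p, v)$ with $v(y_i) = -n_i$, $v(x)$ suitably placed between ticks), and conversely every accepting run of $\A_M$, by induction on its length and using that the $x=1$ ticks keep all $y_i$ values integral and that future clocks stay $\leq 0$, projects onto a halting run of $M$. Hence $M$ halts iff $\A_M$ has an accepting run, and since halting of two-counter machines is undecidable, so is reachability for \GTA.

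The main obstacle I anticipate is \textbf{faithfully implementing the decrement / shift-up operation} on a future clock: releasing a future clock makes it nondeterministically any value in $[-\infty, 0]$, so to get a clean $+1$ shift I must retain the old value somewhere (another auxiliary future or history clock) before releasing, then re-anchor $y_i$ to it with a diagonal equality — and I have to make sure no spurious run can ``cheat'' by choosing a different released value, which is exactly where the diagonal equality constraint does its work, and where the distinction highlighted in the paper between timers and prophecy-clocks (timers can be released at nonzero value) becomes essential. A secondary nuisance is ruling out time-elapse moves that would desynchronize the two counters; pinning the passive counter with a diagonal equality on every tick, plus the $x = 1$ guard forcing a full unit of delay, handles this, but it needs to be stated carefully.
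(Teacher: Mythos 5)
Your high-level reduction is the same as the paper's: encode a two-counter machine, represent each counter $c_i$ by a future clock whose value is $-c_i$, realize deterministic shifts by the release-then-pin idiom (release a future clock, then immediately constrain it by a diagonal equality such as $z=y_i-1$ so the nondeterministic release is forced to a unique value), and test zero with the guard $y_i=0$. That gadget is exactly the paper's $\mathtt{inc}_C=\langle [z];\,z=y_C-1;\,[y_C];\,y_C=z;\,[z];\,z=0\rangle$. The difference — and the gap — is your timing scaffolding. The paper makes the whole simulation happen at a single time instant: the auxiliary future clock $z$ is re-pinned to $0$ after every transition, and since future clocks may not exceed $0$, no time can ever elapse; the invariant $v(y_C)=-c_C$ is then trivially preserved between transitions. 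You instead force exactly one time unit to elapse between consecutive transitions via a history clock $x$ reset to $0$ and a guard $x=1$ on every tick, and this breaks the construction. Concretely: whenever some counter equals $0$, its clock satisfies $y_i=0$, and the mandatory delay of $1$ is disabled (the delay transition requires $X_F\leq 0$ after the elapse), so the simulating automaton deadlocks on perfectly legitimate runs of the machine — already at the start, where both counters are $0$. Hence the forward direction ``$M$ halts $\Rightarrow$ $\A_M$ has an accepting run'' fails. Moreover, even when all counters are positive, the one-unit drift per tick means your stated shifts are off by one: with elapse $+1$ followed by ``decrease $y_i$ by $1$'' the net effect on $c_i$ is zero, and ``keep $y_{3-i}$ fixed'' silently decrements the passive counter; your claimed invariant ``$v(y_i)=-c_i$ at all times'' is inconsistent with any nonzero elapse, and the zero test $y_i=0$ is then evaluated at the wrong moment.

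The repair is small and brings you back to the paper's proof: do not let time elapse at all. Either guard every transition with $x=0$ (never resetting the history clock $x$, or resetting it and demanding $x=0$ again), or, as the paper does, keep an auxiliary future clock pinned at $0$ by ending every program with $[z];\,z=0$ — the constraint $z\leq 0$ on future clocks then forbids any positive delay. With elapse eliminated, your increment, decrement (guarded by $y_i\leq -1$) and zero-test gadgets are exactly the paper's, the passive counter needs no pinning at all, and both directions of the correctness argument go through by a straightforward induction. As written, though, the mandatory $x=1$ ticks make the reduction unsound, so the proposal does not yet establish undecidability.
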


\begin{proof}
  We will do a reduction from counter machines.  Given a counter machine, we will build a
  \GTA\ with one future clock $y_{C}$ for each counter $C$ and one extra future clock $z$.
  The reduction uses diagonal constraints between $z$ and the future clocks $y_{C}$.
  
  Initially and after each transition, the value of the future clock $z$ will be $0$. 
  Since a future clock has to be non-positive, time elapse is impossible. As an 
  invariant, the value of the future clock $y_{C}$ is the opposite of the value of counter
  $C$. The operations on counter $C$ are encoded with the following programs: \begin{align*}
    \mathtt{zero}_{C} & = \langle y_{C}=0 \rangle \\
    \mathtt{inc}_{C} & = \langle [z]; z=y_{C}-1; [y_{C}]; y_{C}=z; [z]; z=0 \rangle \\
    \mathtt{dec}_{C} & = \langle y_{C}\leq -1; [z]; z=y_{C}+1; [y_{C}]; y_{C}=z; [z]; z=0 \rangle 
  \end{align*}
  In the programs $\mathtt{inc}_{C}$ and $\mathtt{dec}_{C}$, each release of a future 
  clock is followed by a constraint which restricts the value non-deterministically 
  chosen during the release. For instance, $[z]; z=y_{C}-1$ is equivalent to 
  $z:=y_{C}-1$. Hence, the overall effect of $\mathtt{inc}_{C}$ is $y_{C}:=y_{C}-1$, 
  maintaining all other clocks unchanged, including the invariant $z=0$.
\end{proof}

Given this negative result, what can we do?  A careful observation of the proof tells us
that it is the interplay between diagonal constraints and arbitrary releases of future
clocks that leads to undecidability.  More precisely, the encoding depends on the
fact that clocks $z$ and $y_{C}$ which are used in diagonal constraints ($z=y_{C}-1$,
$z=y_{C}+1$ and $y_{C}=z$) may have arbitrary values when they are released.
This suggests a restricted subclass that we formalize next.

\subsection{$X_D$-Safe \GTA}

\begin{definition}~\label{def:safe-program}
  Let $X_D\subseteq X_F$ be a subset of future clocks.
  
  A program $\prog = \langle g_{1};[R_{1}]; g_{2}; [R_{2}];\ldots; g_{k}; [R_{k}]; g_{k+1}
  \rangle$ is $X_{D}$-safe if 
  \begin{itemize}
    \item  diagonal constraints between future clocks are restricted to clocks in $X_{D}$:
    if $x-y\leqlt c$ with $x,y\in X_{F}$ occurs in some $g_{i}$ then $x,y\in X_{D}$;
  
    \item  clocks in $X_{D}$ should be $0$ or $-\infty$ before being released:
    if $x\in X_{D}\cap R_{i}$ then $x=0$ or $x=-\infty$ occurs in $g_{i}$.
  \end{itemize}
  A \GTA\ $\Aa$ is $X_{D}$-safe if it only uses $X_{D}$-safe programs on its transitions
  and the initial guard $g_0$ sets each history clock to either $0$ or $\infty$.
\end{definition}

\begin{proposition}Timed automata, ECA (possibly with diagonal constraints) and Automata with timers (but without diagonal constraints) can all be captured by $X_D$-safe \GTA.\end{proposition}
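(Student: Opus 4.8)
The statement is essentially a bookkeeping claim: each of the three encodings already given in Section~\ref{sec:examples} needs to be checked against the two conditions in Definition~\ref{def:safe-program}, with an appropriate choice of the distinguished set $X_D \subseteq X_F$. The natural choice is $X_D = \emptyset$ in all three cases, since none of the three source models involves diagonal constraints between two future clocks. I would organize the proof as three short paragraphs, one per model, each verifying (i) that every program $\prog$ appearing on a transition is $X_D$-safe and (ii) that the initial guard $g_0$ sets each history clock to $0$ or $\infty$.

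\textbf{Timed automata.} Here $X_F = \emptyset$, so vacuously there are no diagonal constraints between future clocks and no clocks in $X_D$ to release; both bullets of Definition~\ref{def:safe-program} hold trivially. The programs are of the form $\langle g; [R]\rangle$ with $R \subseteq X_H$, and $g$ ranges over $\Phi(X_H)$ --- these may include diagonal constraints, but only between history clocks, which the definition allows. Finally $g_0 = (X_H = 0)$, which sets every history clock to $0$, so the last condition of the definition is met.

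\textbf{Event-clock automata.} Take $X_D = \emptyset$. Each program is $\langle (\prophecy{a}=0); [\prophecy{a}]; g; [\history{a}]\rangle$. The only released future clock is $\prophecy{a}$, and since $X_D = \emptyset$ the second bullet (``clocks in $X_D$ must be $0$ or $-\infty$ before release'') is vacuous; likewise the first bullet is vacuous since $g$ is an ECA guard, which constrains a single event clock against a constant and hence contains no future-future diagonal. The initial guard is $g_0 = (X_H = \infty)$, which sets each history clock to $\infty$, as required. (If one wants to cover ECA \emph{with} diagonals, one must be slightly more careful: a diagonal there is of the form $x_a - x_b \leqlt c$, which after the translation may become a constraint between two history clocks, two future clocks, or one of each; the first is always allowed, and the potentially problematic future-future case is handled by noting that a prophecy clock $\prophecy a$ is released only immediately after the guard $\prophecy a = 0$ in the \emph{next} transition reading $a$, so one can put these released prophecy clocks into $X_D$ and check the second bullet holds --- this is the one place the argument needs thought.)

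\textbf{Automata with timers.} Again take $X_D = \emptyset$; then both bullets are vacuous, and one only needs to observe that in the three program shapes $\langle x=-\infty;[x];x=-c\rangle$, $\langle x=0;[x];x=-\infty\rangle$, $\langle [x];x=-\infty\rangle$ the released clock $x$ is a future clock and no guard is a future-future diagonal (there are no diagonal constraints in automata with timers by hypothesis). Since $X_H = \emptyset$, the condition on $g_0$ is vacuous. The main obstacle, such as it is, is the parenthetical ECA-with-diagonals case: one must argue that after translation every future-future diagonal constraint sits on a transition whose released prophecy clocks are guaranteed (by the ECA semantics encoded into the program $\langle (\prophecy a = 0); [\prophecy a]; g; [\history a]\rangle$) to equal $0$ at the moment of release, so that putting all prophecy clocks into $X_D$ validates the second bullet of Definition~\ref{def:safe-program}. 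Everything else is routine unwinding of the definitions already set up in Section~\ref{sec:examples}.
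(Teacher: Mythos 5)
Your proposal is correct and takes essentially the same approach as the paper: check the two bullets of Definition~\ref{def:safe-program} for each encoding, vacuously for timed automata and for timers (with $X_D=\emptyset$), and for ECA by putting all prophecy clocks into $X_D$ and noting that each release $[\prophecy{a}]$ in the translated program is immediately preceded by the guard $\prophecy{a}=0$. The only presentational difference is that the paper sets $X_D=X_F$ for ECA from the outset, thereby covering diagonal constraints directly, whereas you begin with $X_D=\emptyset$ and only recover the with-diagonals case in your parenthetical remarks; the substance is identical.
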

\begin{proof}
  The proof follows by observing that in all the three cases, the safety condition holds.
  Timed automata do not have future clocks so the condition is vacuously true.  In ECA,
  event-predicting clocks are always checked for 0 before being released, hence they are
  safe as well with $X_D=X_{F}$.  Automata with timers without diagonal constraints are also
  trivially safe with $X_D=\emptyset$.  
\end{proof}

The importance of safety is the following theorem which is the center-piece of this
article.

\begin{theorem}
    Reachability for $X_D$-safe \GTA\ is decidable.
\end{theorem}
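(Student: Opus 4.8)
The natural route is to show that the symbolic zone-enumeration procedure for $X_D$-safe \GTA\ terminates, and that it is sound and complete for reachability. Concretely, I would proceed in three stages. First, set up a zone graph: a symbolic state is a pair $(q, Z)$ where $Z \subseteq \V$ is a \emph{zone}, i.e.\ a set of valuations definable by a conjunction of atomic constraints $x - y \leqlt c$ over $X \cup \{0\}$ (with weights drawn from the extended algebra of Remark~\ref{rem:extended-addition}). The initial symbolic state is $(q_0, Z_0)$ with $Z_0$ obtained from $g_0$ by elapsing time; the successor under a transition $t = (q,a,\prog,q')$ is $(q', Z')$ where $Z'$ is the time-elapse closure of the set $\{v' \mid \exists v \in Z,\ v \xrightarrow{\prog} v'\}$. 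The first thing to verify is that zones are closed under the three program primitives (guard intersection, reset/release of $R$, sequential composition) and under time elapse subject to $X_F \leq 0$ — this is routine manipulation of difference constraints, but one has to be careful that the release operation, which sets a future clock non-deterministically in $[-\infty,0]$, still yields a zone (it does: it amounts to projecting out the clock and re-adding the bounds $x \leq 0$, $-x \leq \infty$), and that the non-distributivity of unary minus over the extended addition (the footnote after Remark~\ref{rem:extended-addition}) does not break canonicalization. Soundness and completeness of the zone graph w.r.t.\ $\TS_\A$ are then the usual simulation/bisimulation arguments lifted to this algebra.

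**The core: finiteness.** The whole difficulty is termination, since (as the introduction stresses, and as the undecidability theorem shows) the raw zone graph is infinite in general. The plan is to equip the zone graph with a simulation relation $\preceq$ between symbolic states — in the style of the $\lu$ / $\am$ simulations from the timed-automata literature and their ECA adaptation in \cite{Concur22} — such that (i) $\preceq$ is a simulation on $\TS_\A$ (so pruning a state covered by an already-explored one preserves reachability), and (ii) $\preceq$ has finite index on the set of \emph{reachable} zones. For (ii) the argument must exploit $X_D$-safety in an essential way: the point is that for a future clock that is \emph{not} in $X_D$, diagonal constraints involving it never appear in guards, so its only role is the unary bound $x \leq 0$ together with non-diagonal constraints against $0$; and a clock in $X_D$, by the second safety clause, is always $0$ or $-\infty$ at the moment it is released, so the value re-chosen after release is tied to the (bounded) constants of the automaton rather than to an unbounded accumulated offset. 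History clocks behave as ordinary clocks (initialized to $0$ or $\infty$), for which the classical bounded-difference / maximal-constant arguments apply. Assembling these observations, one shows the reachable zones, viewed up to $\preceq$, take only finitely many values — this is exactly the "new termination analysis in the presence of diagonal constraints" the authors advertise, and it is where the real work lies; I expect the heart of it to be a bound on the differences $v(x) - v(y)$ that can appear in a reachable zone, proved by induction on the run, with the safety clauses supplying the inductive invariant that keeps $X_D$-clocks pinned to constant-sized values across releases.

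**Wrapping up.** Given (i) and (ii), the algorithm is: do a forward exploration of the zone graph, and whenever a new symbolic state $(q,Z)$ is generated such that $(q,Z) \preceq (q,Z')$ for some already-stored $(q,Z')$, discard it; stop when no new states are produced. Finiteness of the index of $\preceq$ on reachable zones guarantees this halts; the simulation property guarantees that an accepting symbolic state (one with $q \in Q_f$ and $Z \cap \sem{g_f} \neq \emptyset$) is reached iff $\A$ has an accepting run. That gives decidability. I would expect the write-up to defer the detailed constructions to the later sections the excerpt promises: the zone/DBM machinery to Sections~\ref{sec:DBM}--\ref{sec:dagger}, and the finiteness proof to Section~\ref{sec:finiteness}; here the proof of the theorem is really a forward reference — "immediate from the correctness of the symbolic enumeration (Section~\ref{sec:simulation}) together with its termination (Section~\ref{sec:finiteness})" — with the understanding that the single genuinely new ingredient, and the main obstacle, is the termination argument that isolates why $X_D \subseteq X_F$ (diagonals restricted away from timers) is exactly the right restriction.
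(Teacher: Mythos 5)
Your proposal matches the paper's approach essentially step for step: the paper also treats this theorem as a forward reference to a zone-graph enumeration pruned by the constraint-set ($\Gg$-)simulation of Section~\ref{sec:simulation}, with soundness/completeness inherited from the standard simulation argument and termination being the real content, established later by showing that $X_D$-safety forces all reachable zones to satisfy bounded-weight invariants (the $(\dagger)$ conditions of Lemma~\ref{lem:dagger-diagonals}, proved via an invariant closed under the safe zone operations) and then that the simulation has finite index on such zones (Lemma~\ref{lem:sim_M main property}, Theorem~\ref{thm:simulation finite}). Your identification of the inductive safety invariant bounding clock differences in reachable zones as the heart of the termination argument is exactly where the paper's work lies, so the proposal is correct and takes the same route.
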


We will establish this theorem by showing a finite, sound and complete zone based
reachability algorithm for $X_D$-safe \GTA. If the given \GTA\ is not $X_D$-safe, then we
lose proof of termination (unsurprisingly, since the problem is undecidable), but we still
maintain soundness.  Thus, even for such \GTA\ when our algorithm does terminate it will give
 the correct answer.

\section{Symbolic enumeration}\label{sec:simulation}

We adapt the $\Gg$-simulation framework presented in~\cite{GastinMS20} for timed automata
with diagonal constraints to \GTA. Diagonal constraints offer succinct
modeling~\cite{Chevalier-Bouyer},
 but are quite challenging to handle efficiently in
zone-based algorithms, and have led to pitfalls in the past:~\cite{Bouyer04} showed that
the erstwhile algorithm based on zone-extrapolations that was implemented in tools is
incorrect for models with diagonal constraints; moreover no extrapolation based method can
work for automata with diagonal constraints.  The simulation framework by-passes this
impossibility result and is the state-of-the-art for timed automata with diagonal
constraints.  The framework was extended to event-clock automata without diagonal
constraints in~\cite{Concur22}.  We show that the ideas from~\cite{GastinMS20}
and~\cite{Concur22} can be suitably combined to give an effective procedure for safe
\GTAs.  This extension to \GTAs\ enables us to understand the mechanics of diagonal
constraints in future clocks.

The algorithm based on the $\Gg$-simulation framework involves:
\begin{enumerate}
  \item computation of a set of constraints at every state of the
    automaton by a \emph{static analysis} of the model,
  \item a symbolic enumeration using \emph{zones} to compute the
    \emph{zone graph},
  \item a \emph{simulation relation} between zones to ensure
    termination of the enumeration. 
\end{enumerate}

We will next adapt the static analysis to the \GTA\ setting.  The algorithm for the zone
graph computation and the implementation of the simulation relation over zones is taken
off-the-shelf from \cite{GastinMS20} and \cite{Concur22}, except for a minor adaptation to
include diagonal constraints involving future clocks.
What is absent, and requires a
non-trivial analysis, is the proof of termination.  Therefore, we will mainly focus on this
aspect 
and devote Section~\ref{sec:finiteness} for the termination argument.

\subsection{A concrete simulation relation for \GTA}\label{sec:concrete-simulation}

We fix a \GTA\ $\A=(Q, \Sigma, X, T, (q_0, g_0), (Q_{f}, g_{f}))$ for this section.  Our goal in this section is to define a simulation relation on the semantics of $\A$, i.e., on $\TS(\A)$. In the subsequent sections we will lift this to zones and show its finiteness. 
A simulation relation on $\TS(\A)$ is a reflexive,
  transitive relation $(q, v) \preceq (q, v')$ relating configurations
  with the same control state and (1) for every
  $(q, v) \xra{\delta} (q, v+\delta)$, we have
  $(q, v') \xra{\delta} (q, v'+\delta)$ and
  $(q, v+\delta) \preceq (q, v'+ \delta)$, (2) for every transition
  $t$, if $(q, v) \xra{t} (q_1, v_1)$ for some valuation $v_{1}$, then
  $(q, v') \xra{t} (q_1, v'_1)$ for some valuation $v'_{1}$ 
  with $(q_1, v_1) \preceq (q_1, v'_1)$.

  For any set $G$ of atomic constraints, we define a {\em preorder $\preceq_{G}$} on valuations by
$$
v\preceq_{G}v' \qquad \text { if }
\forall\varphi\in G,~\forall\delta\geq0,\qquad v+\delta\models\varphi
\implies v'+\delta\models\varphi \,.
$$
Notice that in the definition above, we \emph{do not} restrict $\delta$ to those such that
$v+\delta$ is a valuation: we may have $v(x)+\delta>0$ for some $x \in X_{F}$.  In usual
timed automata, this question does not arise, as elapsing any $\delta$ from any given
valuation always results in a valuation.  But this is crucial for the proof of
Theorem~\ref{thm:simulation} below.

Intuitively, the preorder above is a simulation w.r.t. the constraints in $G$ even after time
elapse.  But we need this to also be a simulation w.r.t. discrete transitions.  To achieve this,
the set of constraints $G$ should depend on the available discrete transitions.  In fact, we define a map $\G$ from states to set of
constraints, in such a way that it captures the simulation w.r.t. the discrete actions.  In
other words, our focus will be to choose state-dependent sets of constraints (given by the
map $\G$) depending on $\A$ such that the resulting preorder induces a simulation on
$\TS(\A)$.

As a first step towards this, we define, for any set $G$ of constraints and any program 
$\prog$, a set of constraints $G'=\pre{\prog}{G}$ such that, if $v\preceq_{G'}v'$ and 
$v\xra{\prog}v_{1}$ then there exists $v'\xra{\prog}v'_{1}$ such that 
$v_{1}\preceq_{G}v'_{1}$.
This set is defined inductively as follows ($G$ is a set of atomic constraints, $R$ is a
set of clocks, $g$ is an \emph{arbitrary} constraint, $y-x\leqlt c$ is an \emph{atomic}
constraint):
\begin{align*}
  \pre{\prog_1;\prog_2}{G} &= \pre{\prog_1}{\pre{\prog_2}{G}} \\
  \pre{g}{G} &= \asplit(g)\cup G \\
  \pre{[R]}{G} &= \bigcup_{\varphi\in G}\pre{[R]}{\{\varphi\}} \\
  \pre{[R]}{\{y-x\leqlt c\}} &=
  \begin{cases}
    \{y-x \leqlt c\} & \text{if } x,y \notin R \\
    \{y \leqlt c\} & \text{if } x \in R, y \notin R \\
    \{-x \leqlt c\} & \text{if } x \notin R, y \in R \\
    \emptyset & \text{if } x,y \in R 
  \end{cases}
\end{align*}
where $\asplit(g)$ is the set of atomic constraints occurring in $g$.

Now, the choice of suitable $G$ will be obtained by static analysis, on the lines of what was done for timed automata with diagonals~\cite{GastinMS18,GastinMS19,GastinMS20}, but adapted to our more powerful model.
More precisely, we define the map $\G$ from $Q$ to sets of atomic constraints as the least fixpoint of the set of equations:
\begin{align}
  \label{eq:fixpt}
  \G(q) = \{ x \leq 0 \mid x \in X_{F}\} \cup \bigcup_{q\xrightarrow{a,\prog} q'} \pre{\prog}{\G(q')}
\end{align}

Finally, based on $\preceq_G$ and the $\Gg(q)$ computation, we can define a preorder
$\preceq_{\A}$ between configurations of $\TS(\A)$ as $(q,v)\preceq_{\A}(q',v')$ if
$q=q'$ and $v\preceq_{\G(q)}v'$.

We will need the following technical lemma.

\begin{lemma}~\label{lem:pre-computation-general}
  Let $G$ be a set of atomic constraints and $G'=\pre{[R]}{G}$.  Let $R$ be a set of
  clocks.  Let $v_{1},v_{2}\in\V$ be valuations and let $v'_1\in [R]v_1$ and $v'_2\in
  [R]v_2$ be such that $v'_{2}\da_{R} = v'_{1}\da_{R}$.  Then, $v_1 \preceq_{G'} v_2$ 
  implies $v'_1 \preceq_{G} v'_2$.
\end{lemma}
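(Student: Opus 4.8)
The plan is to peel off the definition of $\preceq_{G}$ and check, for each atomic constraint $\varphi = (y-x\leqlt c)\in G$ and each $\delta\geq 0$, that $v'_1+\delta\models\varphi$ entails $v'_2+\delta\models\varphi$. First I would dispose of trivial $\varphi$: if $\varphi$ holds in every valuation there is nothing to prove, and if it holds in no valuation then the hypothesis $v'_1+\delta\models\varphi$ can never be met; so afterwards $\varphi$ may be assumed non-trivial.

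Two structural facts will drive everything. From $v'_1\in[R]v_1$, $v'_2\in[R]v_2$ and $v'_2\da_R = v'_1\da_R$, the valuations $v'_1$ and $v'_2$ agree on every clock of $R$ (history clocks reset to $0$, future clocks released to the same value) as well as on the constant clock $0$, while on the clocks of $X\setminus R$ they coincide with $v_1$, resp.\ $v_2$; moreover any value a clock of $R$ receives under $[R]$ lies in $[-\infty,0]$ and is therefore non-positive --- this non-positivity is exactly what will make the time shifts below legal. I would then split on whether $x$ and $y$ lie in $R$, mirroring the four cases in the definition of $\pre{[R]}{\{\varphi\}}$.

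If $x,y\notin R$ (this allows $x$ or $y$ to be the constant $0$), then $\varphi\in\pre{[R]}{\{\varphi\}}\subseteq G'$, and since $v'_i+\delta$ and $v_i+\delta$ agree on $\{x,y\}$, satisfaction of $\varphi$ passes from $v'_1+\delta$ to $v_1+\delta$, then to $v_2+\delta$ via $v_1\preceq_{G'}v_2$, then to $v'_2+\delta$. If $x,y\in R$ then $v'_1+\delta$ and $v'_2+\delta$ agree on $\{x,y\}$, so they satisfy $\varphi$ simultaneously, matching $\pre{[R]}{\{\varphi\}}=\emptyset$. If $x\in R$, $y\notin R$, then $x\neq 0$ and the weakened constraint $(y\leqlt c)\in G'$ is available; when $y=0$ the two valuations agree on $\{x,0\}$ and we are done, so assume $y\neq 0$ and put $b:=v'_1(x)=v'_2(x)\in[-\infty,0]$. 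A release $b=-\infty$ would force $(v'_1+\delta)(y)-(v'_1+\delta)(x)=+\infty$ and hence make $\varphi$ trivial, so $b$ is finite; then, $b+\delta$ being finite, Remark~\ref{rem:extended-addition} licenses cancelling the common $\delta$, turning $v'_i+\delta\models\varphi$ into the equivalent statement $v_i(y)+(-b)\leqlt c$, i.e., into $v_i+\delta'\models(y\leqlt c)$ for the legal delay $\delta':=-b\geq 0$; applying $v_1\preceq_{G'}v_2$ at delay $\delta'$ then concludes. The last case $x\notin R$, $y\in R$ is symmetric: here $y\neq 0$, the weakened constraint is $(0-x\leqlt c)\in G'$, the shift is $\delta':=-v'_1(y)$, the subcase $x=0$ is immediate, and the subcase where $y$ has been released to $-\infty$ (so the naive shift would be infinite) is settled apart by reading off directly, via Remark~\ref{rem:extended-addition}, exactly when $y-x\leqlt c$ holds.

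I expect the genuinely delicate part to be the $\pm\infty$ bookkeeping rather than the case structure: the extended addition is not cancellative and the strict order is monotone only on finite values (Remark~\ref{rem:extended-addition}), so every ``cancel the shared $\delta$'' step and every use of the time shift $\delta'$ has to be justified by distinguishing finite from infinite clock values, and the infinite corners --- a future clock released to $-\infty$, a history clock equal to $+\infty$, or $c=\pm\infty$ --- must be checked one by one. The clean skeleton (cancel the common delay, then recognise the residual single-clock constraint as the original $G'$-constraint evaluated at an adjusted delay $\delta'$) is short; keeping it correct through those corners is where the actual work lies.
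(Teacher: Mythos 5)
Your proposal takes essentially the same route as the paper's proof: reduce to a single atomic constraint, split into the four cases of $\pre{[R]}{\cdot}$, and in the two mixed cases turn the diagonal constraint into the weakened single-clock constraint of $G'$ via the delay shift $\delta'=-v'_1(x)$ (resp.\ $\delta'=-v'_1(y)$) before invoking $v_1\preceq_{G'}v_2$. Your explicit treatment of the $\pm\infty$ corners is if anything more careful than the paper's (which applies the shift without comment even where it would be $+\infty$); the one spot you leave under-argued --- the subcase $y\in R$ released to $-\infty$, where ``reading off'' still requires transferring $v_1(x)\neq-\infty$ to $v_2(x)\neq-\infty$ by applying the $G'$-constraint $-x\leqlt c$ at a suitably chosen finite delay --- is likewise elided in the paper's own proof.
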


\begin{proof}
  Since $v'_1\in [R]v_1$ and $v'_2\in [R]v_2$, we have $v'_1\da_{X\setminus R} =
  v_1\da_{X\setminus R}$ and $v'_{2}\da_{X\setminus R} = v_{2}\da_{X\setminus R}$.
  Moreover, from our assumption, we have $v'_{2}\da_{R} = v'_{1}\da_{R}$.
  
  We have $v\preceq_{G}v'$ iff $v\preceq_{\varphi}v'$ for all $\varphi\in G$.
  Hence, it is sufficient to prove the lemma when $G=\{\varphi\}$ where $\varphi$ is an 
  atomic constraint $y - x \leqlt c$. Let $G'=\pre{[R]}{G}$, which is either $\emptyset$ 
  when $x,y\in R$ or a singleton $\{\varphi'\}$. 
  Suppose $v_{1}\preceq_{G'}v_{2}$.
  \begin{itemize}
    \item Suppose that $x,y\in R$.  In this case, we have $v'_2(y)=v'_1(y)$ and
    $v'_{2}(x)=v'_{1}(x)$.  We then have $v'_1 \preceq_{\varphi} v'_2$ (we do not need any 
    hypothesis on $v_{1},v_{2}$).

    \item Suppose that $x,y\notin R$.  In this case, we have $\varphi'=\varphi$,
    $v'_1(y)=v_1(y)$ and $v'_{2}(y)=v_{2}(y)$, $v'_1(x)=v_1(x)$ and $v'_{2}(x)=v_{2}(x)$.
    Since $v_1 \preceq_{\varphi} v_2$, it follows that $v'_1 \preceq_{\varphi} v'_2$.
  
    \item Suppose that $x\in R$ and $y\notin R$. 
    In this case, we have $v_1 \preceq_{\varphi'} v_2$ with $\varphi'= y \leqlt c$.
    We need to show that $v'_1 \preceq_{\varphi} v'_2$. Let $\delta\geq0$ and assume that
    $v'_{1}+\delta\models\varphi$, i.e., $v'_1(y)-v'_1(x)\leqlt c$.
    We have to show that $v'_2(y) - v'_2(x) \leqlt c$.
    
    We have $v'_1(y)=v_1(y)$, $v'_{2}(y)=v_{2}(y)$ and $v'_{2}(x)=v'_{1}(x)\leq 0$.
    Let $\delta'=-v'_{1}(x)\geq 0$. We get $v_{1}+\delta'\models\varphi'$.
    We deduce that $v_{2}+\delta'\models\varphi'$, i.e., $v'_2(y)-v'_2(x)\leqlt c$ as
    desired.
    
    \item Suppose that $y\in R$ and $x\notin R$. 
    The proof is symmetric to the case above, and proceeds by similar arguments. 
    In this case, we have $v_1\preceq_{\varphi'} v_2$ where $\varphi'= -x \leqlt c$.
    We need to show that $v'_1 \preceq_{\varphi} v'_2$. Let $\delta\geq0$ and assume that
    $v'_{1}+\delta\models\varphi$, i.e., $v'_1(y)-v'_1(x)\leqlt c$.
    We have to show that $v'_2(y)-v'_2(x)\leqlt c$.
    
    We have $v'_1(x)=v_1(x)$, $v'_{2}(x)=v_{2}(x)$ and $v'_{2}(y)=v'_{1}(y)\leq 0$.
    Let $\delta'=-v'_{1}(y)\geq 0$. We get $-(v_{1}+\delta')(x)\leqlt c$, i.e.,
    $v_{1}+\delta'\models\varphi'$.
    We deduce that $v_{2}+\delta'\models\varphi'$, i.e., $v'_2(y)-v'_2(x)\leqlt c$ as
    desired.
    \qedhere
  \end{itemize}
\end{proof}

Now, let us prove that $\preceq_\A$ defined above is indeed a simulation relation.
\begin{theorem}\label{thm:simulation}
  The relation $\preceq_{\A}$ is a simulation on the transition system $\TS_{\A}$ of \GTA\ $\A$.
\end{theorem}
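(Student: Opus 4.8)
\textbf{Proof plan for Theorem~\ref{thm:simulation}.}
The plan is to verify directly that $\preceq_{\A}$ satisfies the three requirements of a simulation: it is a preorder, it is compatible with delay transitions, and it is compatible with discrete transitions. Reflexivity and transitivity of $\preceq_{\A}$ are immediate from the definition of $\preceq_{\G(q)}$ on valuations, since ``$\forall\varphi\,\forall\delta$'' properties are preserved under reflexivity and composition; so the two substantial points are (1) and (2) above, and the main effort goes into (2).

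First I would treat delay transitions. Suppose $(q,v)\preceq_{\A}(q,v')$, i.e.\ $v\preceq_{\G(q)}v'$, and suppose $(q,v)\xra{\delta}(q,v+\delta)$, meaning $v+\delta\models X_F\leq 0$. I must show $v'+\delta$ is also a valuation (i.e.\ $v'+\delta\models X_F\leq0$) and that $v+\delta\preceq_{\G(q)}v'+\delta$. For the first part, note that for each $x\in X_F$ the constraint $x\leq 0$ lies in $\G(q)$ by the fixpoint equation~\eqref{eq:fixpt}; since $v+\delta\models x\leq0$, i.e.\ $(v+\delta)+0\models x\leq 0$, the definition of $\preceq_{\G(q)}$ gives $(v'+\delta)+0\models x\leq0$, so $v'+\delta$ is a valuation and the delay transition is enabled from $(q,v')$. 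For the second part, I use the key feature emphasised in the text: $\preceq_G$ quantifies over \emph{all} $\delta\geq0$, not only those keeping us inside $\V$. So from $v\preceq_{\G(q)}v'$ and any $\varphi\in\G(q)$, $\delta'\geq0$ with $(v+\delta)+\delta'\models\varphi$, we rewrite $(v+\delta)+\delta' = v+(\delta+\delta')$, apply $v\preceq_{\G(q)}v'$ with the shift $\delta+\delta'\geq0$, and get $v'+(\delta+\delta')\models\varphi$, i.e.\ $(v'+\delta)+\delta'\models\varphi$. Hence $v+\delta\preceq_{\G(q)}v'+\delta$, as required; associativity of extended addition (Remark~\ref{rem:extended-addition}) is what makes this rewriting legitimate.

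Next, and this is the heart of the argument, I would handle discrete transitions. Take $t=(q,a,\prog,q')\in\Delta$ and suppose $(q,v)\xra{t}(q_1,v_1)$ with $q_1=q'$ and $v\xra{\prog}v_1$; assume $v\preceq_{\G(q)}v'$. Since $\pre{\prog}{\G(q')}\subseteq\G(q)$ by~\eqref{eq:fixpt}, and since $\preceq_G$ is antitone in $G$ (a larger constraint set gives a finer preorder), we have $v\preceq_{\pre{\prog}{\G(q')}}v'$. It therefore suffices to prove the stated property of $\pre{\cdot}{\cdot}$: for every set $G$ of atomic constraints and every program $\prog$, if $v\preceq_{\pre{\prog}{G}}v'$ and $v\xra{\prog}v_1$ then there is $v'_1$ with $v'\xra{\prog}v'_1$ and $v_1\preceq_G v'_1$; applying this with $G=\G(q')$ and using antitonicity once more yields $v_1\preceq_{\G(q')}v'_1$, i.e.\ $(q',v_1)\preceq_{\A}(q',v'_1)$. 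I would prove the $\pre{\cdot}{\cdot}$ property by induction on the structure of $\prog$ following the inductive definition of $\pre{\cdot}{\cdot}$. The sequential-composition case $\prog=\prog_1;\prog_2$ is routine from the identity $\pre{\prog_1;\prog_2}{G}=\pre{\prog_1}{\pre{\prog_2}{G}}$ and two applications of the induction hypothesis, threading an intermediate valuation. The guard case $\prog=g$: here $v_1=v$ and $v\models g$; since $\pre{g}{G}=\asplit(g)\cup G$ and $v\preceq_{\asplit(g)}v'$, taking $\delta=0$ shows $v'\models\varphi$ for every atomic $\varphi$ in $g$, hence $v'\models g$, so we may take $v'_1=v'$, and $v\preceq_G v'$ gives $v_1\preceq_G v'_1$. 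The change case $\prog=[R]$: this is exactly where Lemma~\ref{lem:pre-computation-general} is used. Given $v'_1\in[R]v$, I choose $v'_1{}'\in[R]v'$ that agrees with $v'_1$ on $R$ (i.e.\ $v'_1{}'\da_R = v'_1\da_R$) — this is possible precisely because the release operation lets each future clock in $R$ take an arbitrary value in $[-\infty,0]$, and for $x\in R\cap X_H$ both reset to $0$ automatically — and then Lemma~\ref{lem:pre-computation-general} applied with $v_1\mapsto v$, $v_2\mapsto v'$, $v'_1\mapsto v'_1$, $v'_2\mapsto v'_1{}'$ turns $v\preceq_{\pre{[R]}{G}}v'$ into $v'_1\preceq_G v'_1{}'$. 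I expect the change case to be the main obstacle, or rather its only delicate point: one must be careful that the witness $v'_1$ chosen for the simulating run actually agrees with $v_1$ on the released clocks so that the hypothesis of Lemma~\ref{lem:pre-computation-general} is met, and one must check this choice is legal (values in $[-\infty,0]$ for future clocks, $0$ for history clocks), which it is by the definition of $[R]$. Everything else is bookkeeping with the extended algebra, already packaged in Remark~\ref{rem:extended-addition} and Lemma~\ref{lem:pre-computation-general}.
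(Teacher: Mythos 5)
Your proposal is correct and follows essentially the same route as the paper: the delay case via the constraints $x\leq 0$ for $x\in X_F$ in $\G(q)$ and the all-$\delta$ quantification, and the discrete case via the fixpoint inclusion $\pre{\prog}{\G(q')}\subseteq\G(q)$ together with Lemma~\ref{lem:pre-computation-general} for the change step, choosing the simulating successor to agree with the original one on the released clocks. The only difference is organizational — you induct on the structure of $\prog$ while the paper inducts along the alternating guard/change decomposition — which changes nothing of substance.
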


\begin{proof}
  Assume that $(q,v_{1})\preceq_{\A}(q,v_1')$, i.e., $v_{1}\preceq_{\G(q)}v_1'$.
  \begin{description}
    \item[Delay transition]  Assume that $(q,v_{1})\xra{\delta}(q,v_{1}+\delta)$ is a transition of $\TS_{\A}$. 
    Then, $v_{1}+\delta\models X_{F} \leq 0$.
    Since $\G(q)$ contains $x \leq0$ for all $x\in X_{F}$ and 
    $v_{1}\preceq_{\G(q)}v_{2}$, we deduce that 
    $v_1'+\delta\models X_{F}\leq0$.
    Therefore, $(q,v_1')\xra{\delta}(q,v_1'+\delta)$ is a transition in $\TS_{\A}$.
    It is easy to see that $v_{1}+\delta\preceq_{\G(q)}v_1'+\delta$.
  
    \item[Discrete transition]  
      Let $(q_1,v_1)\xra{a,\prog} (q,v)\in \TS_\A$ for some transition
      $t=(q_1,a,\prog,q)$ of $\A$.  Then we need to show that there exists $v'$ such that
      $(q,v)\preceq_\A(q,v')$ and $(q_1,v'_1)\xra{a,\prog} (q,v')$.  
      Wlog, we can assume that $\prog=\langle g_1;[R_1];\ldots g_k;[R_k]\rangle$ i.e., an alternating sequences of guards and changes (reset/release).  By
      definition, this means that there are $v_{2},\ldots,v_{k+1}$ with $v_{k+1}=v$ and
      $v_i\xrightarrow{g_i;[R_i]}v_{i+1}$ for all $1\leq i\leq k$.  This means that for 
      all $1\leq i\leq k$ we have $v_{i}\models g_{i}$ and $v_{i+1}\in[R_{i}]v_{i}$.

      Define $G_{k+1}=\G(q)$ and $G_{i}=\pre{\langle g_{i};[R_{i}] \rangle}{G_{i+1}}$ for
      $1\leq i\leq k$ so that $G_{1}=\G(q_{1})$.  Now, for each $1\leq i\leq k$ we
      construct below by induction valuations $v'_{2},\ldots,v'_{k+1}$ such that
      $v'_i\xrightarrow{g_i;[R_i]}v'_{i+1}$ and $v_{i+1}\preceq_{G_{i+1}}v'_{i+1}$.
      With $v'=v'_{k+1}$ we get $(q_1,v'_1)\xra{a,\prog} (q,v')$ and 
      $(q,v)\preceq_\A(q,v')$ as desired.
      
      For $i=1$, we have $v_{1}\preceq_{G_{1}}v'_{1}$ by hypothesis.
      Now, assume that $v_{i}\preceq_{G_{i}}v'_{i}$ for some $1\leq i\leq k$.
      Since $\asplit(g_{i})\subseteq G_{i}$ and $v_{i}\models g_{i}$, we deduce that 
      $v'_{i}\models g_{i}$. Now, let $v'_{i+1}$ be defined by 
      $v'_{i+1}\da_{R_{i}}=v_{i+1}\da_{R_{i}}$ and 
      $v'_{i+1}\da_{X\setminus R_{i}}=v'_{i}\da_{X\setminus R_{i}}$.
      We have $v'_{i+1}\in[R_{i}]v'_{i}$ and since $v'_{i}\models g_{i}$
      we deduce that $v'_i\xrightarrow{g_i;[R_i]}v'_{i+1}$.
      Notice that $\pre{[R_{i}]}{G_{i+1}}\subseteq G_{i}$.
      Hence, using Lemma~\ref{lem:pre-computation-general},
      $v'_{i+1}\da_{R_{i}}=v_{i+1}\da_{R_{i}}$ and $v_{i}\preceq_{G_{i}}v'_{i}$ we can
      conclude that $v_{i+1}\preceq_{G_{i+1}}v'_{i+1}$,
      which completes the proof.  \qedhere
  \end{description}
\end{proof}

\subsection{Zones for \GTA\ and a symbolic reachability algorithm}\label{sec:gta-zones}
The most widely used approach for checking reachability in a timed automaton (and more recently in event-clock automata) is based on reachability in a graph called the \emph{zone graph} of a timed automaton~\cite{Daws}.
Roughly, \emph{zones}~\cite{Bengtsson:LCPN:2003} are sets of valuations that can be represented efficiently using constraints between differences of clocks.  
In this section, we introduce an analogous notion for \GTAfull.  We consider \emph{\GTA\ zones}, which are special sets of valuations of \GTAfull.

\begin{definition}[\textbf{\GTA\ zones}]\label{defn:gen-zones}
  A \GTA\ zone is a set of valuations satisfying a conjunction of constraints of the form
  $y-x \leqlt c$, where $x,y \in X\cup\{0\}$,
  $c\in\overline{\mathbb{Z}}$ and
  ${\leqlt}\in\{\leq,<\}$.
\end{definition}

Thus zones are an abstract representation of sets of valuations. Then, an abstract configuration, also called a {\em node}, is a pair consisting of a state and a zone. Firing a transition $t:= (q,a,\prog,q')$ in a \GTA\  $\A$ from node $(q,Z)$ will result in another node following a sequence of operations that we now define.

\begin{definition}[\textbf{Operations on \GTA\ zones}]
  \label{def:oponzones}
  Let $g$ be a guard, $R\subseteq X$ be a set of clocks and $Z$ be a \GTA\ zone.
  \begin{itemize}
  \item Guard intersection:
    $Z \wedge g := \{v \mid v \in Z \text{ and } v \models g\}$

  \item Release/Reset: $[R]Z= \bigcup_{v\in Z} [R]v$ (as defined in 
  Section~\ref{sec:prelims})

  \item Time elapse:
    $\elapse{Z} = \{v+\d \mid v\in Z, \d\in\Rpos \text{ s.t.\ }
    v+\d\models (X_{F} \leq 0) \}$
  \end{itemize}
\end{definition}
From the above definition, it is easy to see that starting from a \GTA\ zone $Z$, the successors after the above operations are also \GTA\ zones. A guard $g$ can be seen as yet another \GTA\ zone and hence guard intersection is just an intersection operation between two \GTA\ zones. 
Similarly, the change operation preserves \GTA\ zones.  Finally, as is usual with timed automata, zones are closed under the time elapse operation.

Thus, for a transition $t:=(q,a,\prog,q')$ and a node $(q,Z)$, we can define the
successor node $(q',Z')$, and we write $(q, Z) \xra{t} (q',Z')$, where $Z'$ is the zone computed by the following sequence of operations: Let $\prog=\prog_1;\ldots;\prog_n$, where each $\prog_i$ is an atomic program, i.e., a guard $g$ or a change $R\subseteq X$. Then
we define zones $Z_1,\ldots, Z_{n+1}$ where, $Z_1=Z$, $Z'=\elapse{Z_{n+1}}$, and for each $1\leq i\leq n$,
\[Z_{i+1}= \begin{cases}
  Z_i\land \prog_i & \text{ if }\prog_i \text{ is a guard}\\
  [\prog_i]Z_{i} & \text{ if }\prog_i \text{ is a change}\\
\end{cases}
\]

Now, we can lift zone graphs, simulations from TA to \GTA\ and obtain a symbolic reachability algorithm for \GTA.

\begin{definition}[\textbf{\GTA\ zone graph}]
  Given a \GTA\ $\A$, its {\em \GTA\ zone graph, denoted \gzg($\A$)}, is defined as follows:
  Nodes are of the form $(q, Z)$ where $q$ is a state and $Z$ is a \GTA\ zone.  
  The initial node is $(q_0, \elapse{Z_0})$ where $q_0$ is the initial state and $Z_0$ is
  given by $g_{0}\wedge\big(X_{F} \leq 0\big)\wedge \big(X_H \geq 0\big)$
  ($Z_{0}$ is the set of all valuations which satisfy the \emph{initial constraint} $g_0$).
  For every node $(q, Z)$ and every
  transition $t := (q, a, \prog, q')$ there is a transition $(q, Z) \xra{t} (q', Z')$ in the \GTA\ zone graph.
  A node $(q,Z)$ is accepting if $q\in Q_f$ and $Z\cap g_{f}$ is non-empty, 
  i.e., there exists a valuation in $Z$ satisfying the final constraint.
\end{definition}

Similar to the case of zone graphs for timed automata and event zone graphs for event-clock automata, the \GTA\ zone graph can be used to decide reachability for \GTAfull. A node $(q,Z)$ is said to be
reachable (in $\A)$ if there is a path from the initial node
$(q_0,\elapse{Z_0})$ to $(q,Z)$ in $\gzg(A)$. Thus, reachability of a
final state in $\A$ reduces to checking reachability of an accepting
node in $\gzg(A)$. However, as in the case of zone graphs for timed
automata, $\gzg(A)$ is also not guaranteed to be finite. Hence, we
need to compute a finite truncation of the \GTA\ zone graph, which is
still sound and complete for reachability.

\begin{definition}[\textbf{Simulation on \GTA\ zones and finiteness}]
  Let $\preceq$ be a simulation relation on $\TS(\A)$. For two \GTA\
  zones $Z, Z'$, we say $(q, Z) \preceq (q, Z')$ if for every
  $v \in Z$ there exists $v' \in Z'$ such that
  $(q, v) \preceq (q, v')$. The simulation $\preceq$ is said to be
  finite if for every sequence $(q, Z_1), (q, Z_2), \dots$ of
  \emph{reachable} nodes, there exists $j > i$ such that
  $(q, Z_j) \preceq (q, Z_i)$.
\end{definition}

Now, the reachability algorithm, as in TA, enumerates the nodes of the \GTA\ zone graph and uses $\preceq$ to truncate nodes that are smaller with respect to the simulation.

\begin{definition}[\textbf{Reachability algorithm}] \label{def:reach-algo}
  Let $\Aa$ be a \GTA\ and $\preceq$ a simulation relation on $\TS(\A)$. 
  Add the initial node of the \GTA\ zone graph $(q_0, \elapse{Z_0})$ to a Waiting list. 
  Repeat the following until Waiting list is empty:
  \begin{itemize}
    \item Pop a node $(q, Z)$ from the Waiting list and add it to the
    Passed list.
    \item For every $(q, Z) \xra{t} (q_1, Z_1)$: if there exists a
    $(q_1, Z'_1)$ in the Passed or Waiting lists such that
    $(q_1, Z_1) \preceq (q_1, Z'_1)$, discard $(q_1, Z_1)$; else add
    $(q_1, Z_1)$ to the Waiting list.
  \end{itemize}
  If some accepting node is reached, the algorithm terminates and returns a Yes.
  Else, it continues until there are no further nodes to be explored and returns a No answer.
\end{definition}

The correctness of the above algorithm, follows from the correctness of the simulation approach in timed automata, with termination guaranteed when the simulation used is finite~\cite{HerbreteauSW12}. Thus, the following theorem is a straightforward adaptation of the corresponding proof~\cite{Daws,HerbreteauSW12} from timed automata.

\begin{theorem} Given $\GTA$ $\A$,
  \begin{enumerate}
  \item $\gzg(A)$ is sound and complete for reachability, i.e., an accepting node is reachability in $\gzg(A)$ iff an accepting state is reachable in $\A$.
  \item $\A$ has an accepting run iff the reachability algorithm returns Yes.
  \item The reachability algorithm is guaranteed to terminate, if the simulation $\preceq$ used is finite.
    \end{enumerate}
\end{theorem}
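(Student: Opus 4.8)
The plan is to follow the classical development for timed automata~\cite{Daws,HerbreteauSW12} and check that the \GTA\ zone operations of Definition~\ref{def:oponzones} and the simulation $\preceq_{\A}$ of Theorem~\ref{thm:simulation} behave as in that setting. For Part~(1) I would first prove a bridging lemma by induction on path length: for every path $(q_0,\elapse{Z_0}) \xra{t_1} (q_1,Z_1) \xra{t_2} \cdots \xra{t_n} (q_n,Z_n)$ in $\gzg(\A)$, the zone $Z_n$ is exactly the set of valuations $v$ for which $\TS_{\A}$ has a run from an initial configuration to $(q_n,v)$ using the discrete transitions $t_1,\dots,t_n$ (interleaved with delays). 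The inductive step reduces to the observation that each operation of Definition~\ref{def:oponzones} is the pointwise lift to sets of the corresponding semantic operation: $Z\wedge g=\{v\in Z\mid v\models g\}$, $[R]Z=\bigcup_{v\in Z}[R]v$ is the change relation, and $\elapse{Z}$ is the delay-closure (here one uses that delays compose and that a zero delay is permitted, so that applying $\elapse{\cdot}$ at the initial node and after every discrete transition still captures all reachable configurations); composing these along a program $\prog=\prog_1;\dots;\prog_m$ gives the step. Since $(q,Z)$ is an accepting node iff $q\in Q_f$ and some $v\in Z$ satisfies $g_f$, the bridging lemma immediately yields Part~(1): $\gzg(\A)$ has a reachable accepting node iff $\TS_{\A}$ has a reachable accepting configuration, i.e., iff $\A$ has an accepting run.

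For Part~(2), soundness (answer Yes $\Rightarrow$ accepting run) is immediate from Part~(1), since the algorithm answers Yes only upon reaching an accepting node of $\gzg(\A)$. For completeness I would first prove a monotonicity lemma: if $(q,Z)\preceq_{\A}(q,W)$ and $(q,Z)\xra{t}(q',Z')$, then $(q,W)\xra{t}(q',W')$ with $(q',Z')\preceq_{\A}(q',W')$; this follows from $\preceq_{\A}$ being a simulation on $\TS_{\A}$ (Theorem~\ref{thm:simulation}) together with the bridging lemma, which identifies $Z'$ and $W'$ with the sets of valuations reachable from $Z$, $W$ via $t$ and a delay. Assuming the algorithm halts with answer No, a standard covering argument by induction on the distance of a node from $(q_0,\elapse{Z_0})$ in $\gzg(\A)$ then shows that every reachable node $(q,Z)$ of $\gzg(\A)$ satisfies $(q,Z)\preceq_{\A}(q,Z^{*})$ for some $(q,Z^{*})$ in the final Passed list, using transitivity of $\preceq$ to absorb discarded nodes, the monotonicity lemma for the inductive step, and the fact that a node, once added, is never removed from Passed $\cup$ Waiting. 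One also needs $\preceq_{\A}$ to be compatible with the final guard, which can be ensured by seeding $\G(q)$ with $\asplit(g_f)$ for $q\in Q_f$ in the fixpoint definition~\eqref{eq:fixpt} (equivalently, adding a $g_f$-guarded self-loop at each final state), so that $(q,Z)\preceq_{\A}(q,Z^{*})$ and $(q,Z)$ accepting imply $(q,Z^{*})$ accepting. Then, if $\A$ has an accepting run, Part~(1) gives a reachable accepting node of $\gzg(\A)$, which is covered by an accepting node in Passed, contradicting the No answer; with soundness this proves Part~(2).

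For Part~(3), assume $\preceq$ is finite but the algorithm does not terminate. The nodes ever added to the Waiting list form a tree rooted at $(q_0,\elapse{Z_0})$, where the parent of a node is the node whose processing generated it; this tree is finitely branching since each node has finitely many outgoing transitions in $\gzg(\A)$. Non-termination forces infinitely many added nodes, hence an infinite tree, and by K\"onig's lemma an infinite branch $n_0,n_1,n_2,\dots$ along which $n_{k+1}$ is added after $n_k$. As $Q$ is finite, infinitely many of these share a control state $q$, giving reachable nodes $(q,Z_{k_1}),(q,Z_{k_2}),\dots$ added in this order; by finiteness of $\preceq$ there are $a<b$ with $(q,Z_{k_b})\preceq(q,Z_{k_a})$. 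But when $n_{k_b}$ was generated, $n_{k_a}$ was already in Passed $\cup$ Waiting, so $n_{k_b}$ would have been discarded rather than added --- a contradiction. Hence the algorithm terminates.

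The step requiring the most care is the monotonicity lemma together with compatibility of $\preceq_{\A}$ with acceptance: one must verify that the zone successors for interleaved programs of Definition~\ref{def:oponzones}, computed over the extended algebra of weights with its infinities and non-deterministic releases, still mirror the configuration-level simulation exactly --- in particular that Lemma~\ref{lem:pre-computation-general} and Theorem~\ref{thm:simulation} lift cleanly from single valuations to zones. The bridging lemma, the covering argument, and the K\"onig's-lemma termination argument then transfer essentially verbatim from the timed-automata setting.
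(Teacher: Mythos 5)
Your proposal is correct and follows essentially the same route the paper takes, which is to invoke the classical zone-graph soundness/completeness, covering, and finiteness-based termination arguments from the timed-automata literature (\cite{Daws,HerbreteauSW12}) and check that the \GTA\ zone operations and the simulation $\preceq_{\A}$ lift them unchanged; the paper gives no further details. Your extra observation that $\preceq_{\A}$ must also be compatible with the final guard $g_f$ (e.g.\ by seeding $\G(q)$ with $\asplit(g_f)$ at final states) is a legitimate detail the paper leaves implicit, not a different approach.
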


For the simulation, we will use $\preceq_\A$ as defined in the previous section. But we still need to show that it is finite. This is the hardest and most technical part of this paper and will form the bulk of Sections~\ref{sec:dagger},~\ref{sec:finiteness}. Before that, in Section~\ref{sec:DBM}, we first address the question of implementability of the above algorithm for $\GTA$ and the data structures needed for it, in particular the notion of distance graphs. Importantly, the properties that we show on the distance graphs will also be used in showing finiteness later.

\section{Computing with \GTA\ zones using distance graphs}\label{sec:DBM}

One of the main innovations towards implementability of timed automata was the development of Difference-Bound-Matrices (DBMs) as efficient data structures to represent and manipulate zones~\cite{Bengtsson:LCPN:2003}. For this, the central step was to view zones as {\em distance graphs} that could be immediately represented as DBMs. To have a practical implementation of \GTA, and to use the vast repertoire of existing tools and techniques for DBMs, a vital step is to be able to encode \GTA\ zones as distance graphs. 

We now show that \GTA\ zones can be represented using Difference-Bound-Matrices (DBMs) and the operations required for the reachability algorithm can be implemented using DBMs. 
The first hurdle is that for normal timed automata, each edge of the distance graph (i.e., entry in a DBM) encodes a constraint of the form $x - y \leqlt c$, where edges/entries are $(<,\infty)$ or $(\leqlt, c)$ with $c \in \mathbb{R}$ and ${\leqlt} \in \{ <, \leq \}$.  But for \GTA, we need to deal with valuations $+\infty$ or $-\infty$. 
For this purpose, we use more general weights as introduced in Definition~\ref{def:weights} and we extend the algebra of weights to the new entries in a natural way.
Before discussing the representation of \GTA\ zones as distance graphs, we briefly recall the extended algebra and some of the results we will use.  

\begin{definition}[\cite{eca-simulations-arxiv}][Order and sum of weights]\label{def:order-sum-weights}
  Let $(\leqlt,c), (\leqlt',c')\in\mathcal{C}$ be weights.
  
  \noindent\textbf{Order.} Define
  $(\leqlt, c) < (\leqlt', c')$ when either (1) $c < c'$,
  or (2) $c = c'$ and $\leqlt$ is $<$ while $\leqlt'$ is $\leq$. 
  This is a total order with
  $(<,-\infty) < (\le, -\infty) < (\leqlt, c) < (<, \infty) < (\le, \infty)$ 
  for all $c \in \mathbb{R}$.
 
  \smallskip\noindent\textbf{Sum.}
  We define the \emph{commutative} sum operation as follows.
  \begin{align*}
    (<,-\infty)+\alpha &= (<,-\infty) 
    &&\text{if } \alpha\in\mathcal{C}
    \\
    (\leq,\infty)+\alpha &= (\leq,\infty) 
    &&\text{if } \alpha\in\mathcal{C}\setminus\{(<,-\infty)\}
    \\
    (\leq,-\infty)+\alpha &= (\leq,-\infty) 
    &&\text{if } \alpha\in\mathcal{C}\setminus\{(<,-\infty),(\leq,\infty)\}
    \\
    (<,\infty)+\alpha &= (<,\infty)
    &&\text{if } \alpha\in\mathcal{C}\setminus\{(<,-\infty),(\leq,-\infty),(\leq,\infty)\}
    \\
    (\leqlt,c) + (\leqlt',c') &= (\leqlt'',c+c')
    &&\text{if } c,c'\in\mathbb{R} \text{ and } 
    {\leqlt''}={\leq} \text{ if } {\leqlt} = {\leqlt}' = {\leq}
    \text{ and } {\leqlt''}={<} \text{ otherwise.}
  \end{align*}
  Notice that sum of weights is an associative operation and $\alpha+(\leq,0)=\alpha$ for
  all $\alpha\in\mathcal{C}$.
\end{definition}

The intuition behind the above definition of order is that when $(\leqlt,c)<(\leqlt',c')$, the set of valuations that satisfies a constraint $x - y \leqlt c$ is contained in the solution set of $x - y \leqlt' c'$.  For the sum, the following lemma gives the idea behind our choice of definition.
 
\begin{lemma}[\cite{eca-simulations-arxiv}]\label{lem:sum-weights}
  Let $x, y, z\in X\cup\{0\}$ be clocks, $(\leqlt_1, c_1), (\leqlt_2, c_2)\in\mathcal{C}$ be weights and $(\leqlt, c) = (\leqlt_1, c_1) + (\leqlt_2, c_2)$.
  For all valuations $v\in\V$, if $v\models y-x \leqlt_1 c_1$ and $v\models z-y \leqlt_2 c_2$, then $v\models z-x \leqlt c$.
\end{lemma}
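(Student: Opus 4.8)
Lemma~\ref{lem:sum-weights} is the triangle-inequality property for the extended weight algebra, so the plan is to prove it directly by a case analysis on whether the clock value $v(y)$ is finite or infinite, since the subtlety in the extended algebra lives entirely at the infinite values.

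First I would dispose of the trivial-constraint cases. If $(\leqlt_1,c_1)=(<,-\infty)$ or $(\leqlt_2,c_2)=(<,-\infty)$ then by Definition~\ref{def:diagonal-semantics}/the subsequent remark the corresponding hypothesis $v\models y-x\leqlt_1 c_1$ (resp.\ $v\models z-y\leqlt_2 c_2$) is false, so there is nothing to prove; note the sum is then $(<,-\infty)$ as well, but we don't even need that. Dually, if $(\leqlt_i,c_i)=(\leq,+\infty)$ for some $i$, then the sum $(\leqlt,c)$ equals the other weight (or $(\leq,+\infty)$ if both are), and the hypothesis involving that weight is vacuous, so the conclusion reduces to the remaining hypothesis, which is exactly one of the given facts. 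So from now on assume both weights are non-trivial, i.e.\ in $\{(\leq,+\infty)\}^c \cap \{(<,-\infty)\}^c$, and both could still carry $c_i\in\{+\infty\}$ only via $(<,+\infty)$ or $c_i=-\infty$ only via $(\leq,-\infty)$; most cleanly, I would first handle the fully finite case $c_1,c_2\in\mathbb{Z}$ and then the residual infinite-$c$ cases.

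For the core finite-$c_1,c_2$ case: I would split on $v(y)$. If $v(y)\in\mathbb{R}$ is finite, then from Remark~\ref{rem:extended-addition}(5) the hypotheses $v(y)-v(x)\leqlt_1 c_1$ and $v(z)-v(y)\leqlt_2 c_2$ convert (using finiteness of $v(y)$ to justify the converse direction) to $v(y)\leqlt_1 v(x)+c_1$ and $v(z)\leqlt_2 v(y)+c_2$; adding $c_2$ (monotonicity, Remark~\ref{rem:extended-addition}(3),(4)) to the first and chaining via transitivity of the total order on $\overline{\mathbb{R}}$ gives $v(z)\leqlt v(x)+c_1+c_2$, whence $v(z)-v(x)\leqlt c$ — but I need to be careful turning that last inequality back into a difference, which again is fine because the sum's $\leqlt$ is $\leq$ iff both $\leqlt_i$ are. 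If $v(y)=+\infty$: the second hypothesis $v\models z-v(y)\leqlt_2 c_2$ with $v(y)=+\infty$ forces $v(z)<+\infty$ (by the remark after Definition~\ref{def:diagonal-semantics}), and the first hypothesis $v(y)-v(x)\leqlt_1 c_1$ with $v(y)=+\infty$ and $(\leqlt_1,c_1)$ non-trivial forces $v(x)=+\infty$; then $v(z)<+\infty=v(x)$ gives $v\models z-x\leqlt c$ directly. If $v(y)=-\infty$: the first hypothesis with $v(y)=-\infty$ and $(\leqlt_1,c_1)\neq(\leq,+\infty)$ forces either $v(x)=+\infty$ (and then $v(z)\le+\infty=v(x)$ — wait, need $v(z)<+\infty$, which comes from... hmm, actually if $v(x)=+\infty$ I still need the non-trivial-$\leqlt$ conclusion, so I'd instead note $v(z)-v(x)\leqlt c$ holds whenever $v(x)=+\infty$ and $v(z)<+\infty$; and $v(z)<+\infty$ because $v(z)=+\infty$ would make $v\models z-y\leqlt_2 c_2$ fail since... no — I need to double-check: $v(z)=+\infty, v(y)=-\infty$ gives $+\infty - (-\infty)=+\infty$, satisfying only $(\leq,+\infty)$, contradiction unless $(\leqlt_2,c_2)$ trivial) so indeed $v(z)<+\infty$, done; otherwise $v(x)=-\infty$ and then the hypothesis $y-x$ at $-\infty-(-\infty)=+\infty$ only satisfies $(\leq,+\infty)$, contradiction) — I would organize these as clean sub-cases using precisely the bulleted characterizations in the Remark.

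The main obstacle is bookkeeping the infinite-value edge cases correctly: the extended addition is non-cancellative and the minus operation does not distribute, so every conversion between a difference constraint $v(z)-v(x)\leqlt c$ and an additive form $v(z)\leqlt v(x)+c$ must cite the right half of Remark~\ref{rem:extended-addition}(5) and respect the finiteness side-condition. I would therefore state once, as a sub-claim, the equivalence ``$v\models z-x\leqlt c$ iff ($v(z)<+\infty=v(x)$) or ($v(x)$ finite and $v(z)\leqlt v(x)+c$) or ($v(x)=-\infty$ and the constraint is trivially true)'' extracted verbatim from the Remark, and then every case above is a one- or two-line verification against it. I expect the finite-$v(y)$ case to be the genuinely algebraic one and all infinite cases to be pure case-checking that reduces to ``$v(x)$ and $v(z)$ are forced to such values that the conclusion is automatic.''
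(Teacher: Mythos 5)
The paper itself does not prove this lemma --- it is imported verbatim from \cite{eca-simulations-arxiv} --- so your proposal has to stand on its own, and while its overall strategy (case analysis on infinite values, organized around the characterization of non-trivial constraints and Remark~\ref{rem:extended-addition}) is the right one, there are concrete defects. First, you misquote the weight algebra: by Definition~\ref{def:order-sum-weights}, $(\leq,+\infty)$ is \emph{absorbing}, so when one weight is $(\leq,+\infty)$ the sum is $(\leq,+\infty)$, not ``the other weight''. That case is then trivial, because the conclusion is the constraint \emph{true}; your claimed reduction ``the conclusion reduces to the remaining hypothesis'' is unsound (from $v\models z-y\leqlt_2 c_2$ one cannot infer $v\models z-x\leqlt_2 c_2$) and would actually be needed if the sum were what you say it is. Second, the non-trivial infinite weights $(\leq,-\infty)$ and $(<,+\infty)$ are announced (``the residual infinite-$c$ cases'') but never carried out; these are precisely where the extended algebra's quirks live (e.g.\ $(\leq,-\infty)+(<,+\infty)=(\leq,-\infty)$), and a complete proof must check them.

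Third, and most importantly, in your core case ($c_1,c_2$ finite, $v(y)$ finite) the last step --- turning $v(z)\leqlt v(x)+c_1+c_2$ back into $v(z)-v(x)\leqlt c$ --- uses the \emph{converse} direction of Remark~\ref{rem:extended-addition}(5), which is valid only when $v(x)$ is finite. The hypotheses exclude $v(x)=-\infty$ but not $v(x)=+\infty$ (a history clock may be undefined): in that sub-case you must argue separately that $v(z)<+\infty$ (it follows from $v(z)-v(y)\leqlt_2 c_2$ with $v(y)$ finite) and conclude via the disjunct $v(z)<+\infty=v(x)$. You flag only the strictness bookkeeping at this step, and you attribute the finiteness requirement to $v(y)$, where none is needed for the forward conversions. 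Finally, the infinite-$v(y)$ sub-cases are mishandled: with $v(y)=+\infty$ the first hypothesis is simply unsatisfiable (it does not ``force $v(x)=+\infty$''), and with $v(y)=-\infty$ the second hypothesis is unsatisfiable, since $v(z)-v(y)=v(z)+(+\infty)=+\infty$ for \emph{every} $v(z)$, whereas your case split there also omits the possibility that $v(x)$ is finite. Those cases are vacuous, so the lemma is not endangered, but the deductions as written are incorrect. All of this is fixable by doing what your final paragraph promises: state the characterization sub-claim once and run every case --- including the infinite-weight ones --- through it.
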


Equipped with the weights and the arithmetic over it, we can now define the representation of \GTA\ zones as distance graphs.  

\subsection{Distance graphs over the extended algebra}

\begin{definition}[Distance graphs]\label{def:distance-graph}
  A distance graph $\GG$ is a weighted directed graph without self-loops, with vertex set
  being $X\cup\{0\}=X_{F} \cup X_{H} \cup \{0\}$, edges being labeled with weights from
  $\mathcal{C}\setminus\{(<,-\infty)\}$.\footnote{
    If we allowed an edge with weight $\GG_{xy}=(<,-\infty)$ then we would get
    $\sem{\GG}=\emptyset$ since the constraint $y-x<-\infty$ is equivalent to false.}
  We define $\sem{\GG} := \{ v\in\V \mid v \models y - x \leqlt c \text{ for all edges } 
  x \xra{\leqlt\,c} y \text{ in } \GG \}$.
  The weight of edge $x\to y$ is denoted $\GG_{xy}$ and we set 
  $\GG_{xy}=(\leq,\infty)$ if there is no edge $x\to y$.
  The weight of a path is the sum of the weights of its edges.  A cycle in $\GG$ is said to be negative if its weight is strictly less than $(\le, 0)$.

  We say that $\GG$ is in standard form if it satisfies the following conditions;
  \begin{enumerate}
    \item $\GG_{0x}\leq(\leq,0)$ for all $x\in X_{F}$ and
    $\GG_{x0}\leq(\leq,0)$ for all $x\in X_{H}$.

    \item For all $x,y\in X$, if $\GG_{xy}\neq(\leq,\infty)$ 
    then $\GG_{x0}\neq(\leq,\infty)$ and $\GG_{0y}\neq(\leq,\infty)$.
  \end{enumerate}
  
  We extend the order on weights to distance graphs pointwise:
  Let $\GG$, $\GG'$ be distance graphs, we write $\GG\leq\GG'$ when 
  $\GG_{xy}\leq\GG'_{xy}$ for all edges $x\to y$. Notice that this implies 
  $\sem{\GG}\subseteq\sem{\GG'}$.

\end{definition}

The intuition of the standard form comes from the semantics of a distance graph.  
In classical timed automata the distance graph has no negative cycles iff its semantics is non-empty.  However, for distance graphs over the extended algebra, we will see that this is true only when it is in standard form.
To illustrate the need for general form, we provide an example from~\cite{eca-simulations-arxiv} here.

  \begin{example}\cite{eca-simulations-arxiv}~\label{ex:standardization}
    Suppose that valuations are finite and that
  we have constraints: $y - x \le 1$ and $-y \le 2$. 
    From these constraints, we can infer 
    $-x\le 3$ just by adding the inequalities.
    If there was another constraint $x\le -4$, 
    we will get unsatisfiability. In the language of
    distance graphs, the two initial constraints correspond to edges $x \xra{\le
    1} y$ and $y \xra{\le 2} 0$. The derived constraint $-x\le 3$ is
  obtained as the edge $x \xra{\le 3} 0$. The constraint $x \le -4$
  corresponds to $0 \xra{\le -4} x$ and the unsatisfiability is witnessed by
  a negative cycle $0 \xra{\le -4} x \xra{\le 3} 0$. Basically, adding
  the weights of $x \to y$ and $y \to 0$, we get the strongest possible constraint
  about $x \to 0$ resulting from the two constraints $x \to y$ and $y
  \to 0$.
  
  This holds no more in the extended algebra, due to the fundamental 
  difference while adding weight $(\le, \infty)$.
  Consider the constraints $y-x\le 1$ and $0-y\leq\infty$, corresponding to the edges
  $x\xra{\leq 1}y$ and $y\xra{\leq\infty}0$ in the distance graph.
  Adding the two weights gives the edge $x\xra{\leq\infty}0$, corresponding to the 
  constraint $0-x\leq\infty$. This is \emph{not} the 
  strongest possible constraint on $x$ induced by the constraints
  $y-x\le 1$ and $0-y\leq\infty$. Indeed, a valuation $v$ with $v(x)=-\infty$ satisfies the 
  constraint $0-x\leq\infty$. But, any valuation $v$ satisfying $y-x\le 1$ should have
  $v(x) \neq -\infty$, irrespective of the value of $v(y)$. 
  Now, if we also had a constraint $x-0\le -\infty$, corresponding to edge
  $0\xra{\leq-\infty}x$, we have no negative cycle in the corresponding distance graph.  
  But the set of constraints $y - x \le 1$, $0 - y \le \infty$ and $x-0 \le -\infty$ is not
  feasible.  In order to get a correspondence between negative cycles and empty solution
  sets, we propose the standard form. The standard form 
  equips the graph with the additional information that when $y - x$ is
  bounded by a finite value, that is, edge $x \to y$ does not have weight
  $(\le, \infty)$, the constraint $0 - x$ (and also $y-0$) is at most $(<,\infty)$. 
  With this information, we get negative cycles whenever there is a contradiction.
  \end{example}

Fortunately, it turns out that each distance graph $\GG$ can be transformed into an equivalent distance graph $\GG'$ which is in standard form. By equivalent, we mean $\sem{\GG}=\sem{\GG'}$.
First, we set $\GG'_{0x}=\min(\GG_{0x},(\leq,0))$ for $x\in X_{P}$ and
$\GG'_{x0}=\min(\GG_{x0},(\leq,0))$ for $x\in X_{H}$.
Moreover, if $x\in X_{P}$ then we set $\GG'_{x0}=\min(\GG_{x0},(<,\infty))$ if
$\GG_{xy}\neq(\leq,\infty)$ for some $y\neq x$, otherwise we keep $\GG'_{x0}=\GG_{x0}$.
Similarly, if $y\in X_{H}$ then we set $\GG'_{0y}=\min(\GG_{0y},(<,\infty))$ if
$\GG_{xy}\neq(\leq,\infty)$ for some $x\neq y$, otherwise we keep $\GG'_{0y}=\GG_{0y}$.
Finally, for $x,y\in X$ with $x\neq y$ we set $\GG'_{xy}=\GG_{xy}$. 
The graph $\GG'$ constructed above is called the standardization of $\GG$.

\begin{lemma}\label{lem:standard-form}~\cite{eca-simulations-arxiv}
  The standardization $\GG'$ of a distance graph $\GG$ is in standard form and
  $\sem{\GG'}=\sem{\GG}$. Moreover, $\sem{\GG}\neq\emptyset$ iff
  $\GG$ has no negative cycles.
\end{lemma}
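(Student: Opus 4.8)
The plan is to prove the statement in three parts: (i) $\GG'$ is in standard form; (ii) $\sem{\GG'}=\sem{\GG}$; and (iii) \emph{any} distance graph in standard form has non-empty semantics iff it has no negative cycle. Applying (iii) to $\GG'$ and combining with (ii) then yields the ``moreover'' clause. For (i), I would simply check the two defining conditions against the construction. Condition~1 holds by fiat, since the construction sets $\GG'_{0x}=\min(\GG_{0x},(\le,0))$ for $x\in X_F$ and $\GG'_{x0}=\min(\GG_{x0},(\le,0))$ for $x\in X_H$. For condition~2, note that $\GG'_{xy}=\GG_{xy}$ for $x\neq y$ in $X$, so $\GG'_{xy}\neq(\le,\infty)$ implies $\GG_{xy}\neq(\le,\infty)$, which is precisely the side condition under which the construction lowers $\GG'_{x0}$ and $\GG'_{0y}$ to at most $(<,\infty)$ in the future-clock case; in the history-clock case $\GG'_{x0}\le(\le,0)<(\le,\infty)$ anyway. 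Either way $\GG'_{x0},\GG'_{0y}\neq(\le,\infty)$.

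For (ii), the inclusion $\sem{\GG'}\subseteq\sem{\GG}$ is immediate since $\GG'\le\GG$ pointwise (Definition~\ref{def:distance-graph}). For the reverse, take $v\in\sem{\GG}$ and verify $v$ satisfies each edge of $\GG'$ that was strengthened. The $(\le,0)$-strengthenings hold for \emph{every} valuation, because future clocks are $\le 0$ and history clocks are $\ge 0$. A $(<,\infty)$-strengthening was introduced on an edge $x\to 0$ (resp.\ $0\to y$) only when $\GG_{xy}\neq(\le,\infty)$ for some $y$ (resp.\ some $x$); since edge weights are never $(<,-\infty)$, such a constraint $y-x\leqlt c$ is non-trivial, so by the remark following Definition~\ref{def:diagonal-semantics}, $v\models y-x\leqlt c$ forces $v(x)\neq-\infty$ (resp.\ $v(y)\neq+\infty$), i.e.\ $v\models 0-x<\infty$ (resp.\ $v\models y-0<\infty$). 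Hence $v\in\sem{\GG'}$.

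Part (iii) is the heart of the argument. For ``$\Rightarrow$'', suppose $v\in\sem{H}$ and let $C\colon x_0\to x_1\to\dots\to x_k=x_0$ be a cycle of weight $(\leqlt,c)$; iterating Lemma~\ref{lem:sum-weights} along $C$ (using associativity of $+$, Definition~\ref{def:order-sum-weights}) gives $v\models x_0-x_0\leqlt c$, and since $v(x_0)-v(x_0)$ is $0$ when $v(x_0)$ is finite and $+\infty$ otherwise, $\leqlt c$ forces $(\leqlt,c)\ge(\le,0)$, so $C$ is not negative. For ``$\Leftarrow$'', I would follow the classical canonicalisation: replace $H$ by its shortest-path closure $H^*$ (again in standard form, since the first and last edges of any path of weight $\neq(\le,\infty)$ are themselves $\neq(\le,\infty)$, and with $H^*_{xx}=(\le,0)$ for every vertex because $H$ has no negative cycle, and $\sem{H^*}\subseteq\sem{H}$); then construct a valuation where a clock $x$ with $H^*_{0x}=H^*_{x0}=(\le,\infty)$ receives $+\infty$ if $x\in X_H$ and $-\infty$ if $x\in X_F$, while every other clock receives a finite value in the interval cut out by the upper bound $H^*_{0x}$ and the lower bound recorded in $H^*_{x0}$ — non-empty because $H^*_{x0}+H^*_{0x}\ge H^*_{xx}=(\le,0)$ — with strictness chosen compatibly. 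One then checks each edge $x\to y$: when both endpoints are finite this is the usual DBM computation; when an endpoint is infinite, the standard-form conditions guarantee that $H^*_{xy}$ is either $(\le,\infty)$ (nothing to check) or else has both $H^*_{x0},H^*_{0y}\neq(\le,\infty)$, and then $H^*_{xy}\le H^*_{x0}+H^*_{0y}$ together with the arithmetic of Remark~\ref{rem:extended-addition} yields $v\models y-x\,\leqlt\,c$.

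The main obstacle is exactly this last verification: making sure the part of the valuation set to $\pm\infty$ and the finite part are mutually consistent, and that the standard-form conditions are strong enough to bridge ``no negative cycle'' and ``satisfiable'' — the failure of this implication for \emph{non}-standard graphs being illustrated by Example~\ref{ex:standardization}. Everything else — parts (i) and (ii) and the ``$\Rightarrow$'' direction of (iii) — is a routine case analysis over clock types ($X_F$ versus $X_H$) and weight forms, using only the arithmetic facts already collected in Remark~\ref{rem:extended-addition} and Lemma~\ref{lem:sum-weights}.
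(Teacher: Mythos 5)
Your parts (i), (ii) and the forward direction of (iii) are correct (note the paper itself only cites~\cite{eca-simulations-arxiv} for this lemma, so there is no in-paper proof to compare against), but the valuation construction in the backward direction of (iii) has a genuine gap. Your rule assigns an infinite value to a clock $x$ only when $H^*_{0x}=H^*_{x0}=(\le,\infty)$. By the standard-form condition~1 that you yourself verified in part (i), this situation can never occur: for $x\in X_F$ one always has $H^*_{0x}\le(\le,0)$, and for $x\in X_H$ one always has $H^*_{x0}\le(\le,0)$. So your construction in fact assigns a finite value to \emph{every} clock. This cannot work, because in this algebra clocks can be \emph{forced} to be infinite by $(\le,-\infty)$-weighted edges: the one-edge standard graph $y\xra{(\le,-\infty)}0$ with $y\in X_H$ has no negative cycle, yet its only solutions have $v(y)=+\infty$; dually, an edge $0\xra{(\le,-\infty)}x$ with $x\in X_F$ forces $v(x)=-\infty$. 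For such clocks the ``interval cut out by $H^*_{0x}$ and $H^*_{x0}$'' contains no finite point, so the step ``every other clock receives a finite value'' breaks down; the inequality $H^*_{x0}+H^*_{0x}\ge(\le,0)$ does not save it, since the witness $\alpha\in\overline{\mathbb{R}}$ it produces (Lemma~\ref{lem:weight-properties}) may be infinite. These $(\le,-\infty)$ weights are not a corner case in this paper: they are exactly how undefined clocks are encoded (e.g.\ $g_0=(X_H=\infty)$ for ECA, timers at $-\infty$).

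The criterion must be corrected, e.g.\ set $v(y)=+\infty$ for $y\in X_H$ iff $H^*_{0y}=(\le,\infty)$, and $v(x)=-\infty$ for $x\in X_F$ iff $H^*_{x0}=(\le,\infty)$; this covers the forced clocks, because $H^*_{y0}=(\le,-\infty)$ together with no negative cycle implies $H^*_{0y}=(\le,\infty)$ (and dually). With this choice, standard-form condition~2 guarantees that every non-trivial edge has both endpoints among the finitely valued clocks, so the second branch of your final check (``$H^*_{xy}\neq(\le,\infty)$ with an infinite endpoint, then $H^*_{xy}\le H^*_{x0}+H^*_{0y}$ yields satisfaction'') simply cannot arise -- and as written that branch is wrong anyway: a non-trivial constraint $y-x\leqlt c$ is unsatisfiable whenever $v(x)=-\infty$ or $v(y)=+\infty$, so no arithmetic can rescue it. Finally, for the finite clocks you should not pick values independently inside their intervals and then ``check'' the diagonal edges: independent choices can violate diagonals even in the classical setting. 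The finite part of the valuation has to be built clock by clock, using the triangle inequality of the closed graph $H^*$ to keep each successive interval non-empty, which is the actual content of ``the usual DBM computation'' you invoke.
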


  Now, suppose $\GG'$ (in standard form) has no negative cycles, then we construct $\GG''$ by replacing the weight of an edge $x\to y$ by the minimum of the weights of the paths from $x$ to $y$ in $\GG'$. 
  Such a $\GG''$ is called the \emph{normalization} of $\GG'$ and has several useful properties.

  \begin{lemma}[Normalization]\label{lem:normalization}~\cite{eca-simulations-arxiv}
    Let $\GG$ be a standard distance graph with no negative cycles.
    The normalization $\GG'$ of $\GG$ is normalized and $\sem{\GG}=\sem{\GG'}$.
  \end{lemma}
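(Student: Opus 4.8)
The plan is to establish three properties of the normalization $\GG'$ of a standard, negative-cycle-free distance graph $\GG$: that $\sem{\GG'}=\sem{\GG}$, that $\GG'$ is again in standard form, and that $\GG'$ is normalized, i.e.\ re-applying the normalization operation to $\GG'$ leaves it unchanged (equivalently, $\GG'_{xy}\leq\GG'_{xz}+\GG'_{zy}$ for all vertices). Two preliminary facts will be used repeatedly. First, since $\GG$ has no negative cycle, excising a cycle from a walk never increases its weight (monotonicity of $+$, Remark~\ref{rem:extended-addition}(3), plus associativity), so the infimum of the weights of the walks from $x$ to $y$ coincides with the minimum over the finitely many \emph{simple} paths; hence $\GG'_{xy}$ is well defined and is realized by an actual path of $\GG$ --- one point where the extended algebra must be checked rather than taken for granted. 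Second, a single edge is a path, so $\GG'_{xy}\leq\GG_{xy}$ for every $x\neq y$, i.e.\ $\GG'\leq\GG$ pointwise.

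For the semantic equality: $\sem{\GG'}\subseteq\sem{\GG}$ follows from $\GG'\leq\GG$ and the pointwise-order observation in Definition~\ref{def:distance-graph}. Conversely, fix $v\in\sem{\GG}$ and an edge $x\to y$ of $\GG'$ with weight $(\leqlt,c)=\GG'_{xy}$; if $(\leqlt,c)=(\leq,\infty)$ the constraint $y-x\leqlt c$ is trivial, so assume otherwise and pick a path $x=x_0\to\dots\to x_k=y$ of $\GG$ realizing $\GG'_{xy}$. Then $v$ satisfies the constraint attached to each edge of this path, and iterating Lemma~\ref{lem:sum-weights} along it yields $v\models y-x\leqlt c$. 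Hence $v\in\sem{\GG'}$.

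For standard form: condition~(1) of Definition~\ref{def:distance-graph} passes from $\GG$ to $\GG'$ directly, since $\GG'\leq\GG$. Condition~(2) is the delicate part and rests on a sub-claim about the extended sum: \emph{a path of $\GG$ has weight $(\leq,\infty)$ iff one of its edges has weight $(\leq,\infty)$}. This is proved by tracking the partial sums along the path using Definition~\ref{def:order-sum-weights}: one verifies that a partial sum of edge weights is never $(<,-\infty)$, that once a partial sum reaches $(\leq,\infty)$ it remains $(\leq,\infty)$ (edges are never $(<,-\infty)$), and that if no edge equals $(\leq,\infty)$ then the partial sums never reach $(\leq,\infty)$. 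Granting the sub-claim, suppose $\GG'_{xy}\neq(\leq,\infty)$ and let $x=x_0\to\dots\to x_k=y$ realize it; all its edges then have weight $\neq(\leq,\infty)$, so condition~(2) for $\GG$ applied to the first edge $x_0\to x_1$ gives $\GG_{x0}\neq(\leq,\infty)$, whence $\GG'_{x0}\leq\GG_{x0}\neq(\leq,\infty)$; the last edge $x_{k-1}\to x_k$ symmetrically gives $\GG'_{0y}\leq\GG_{0y}\neq(\leq,\infty)$.

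Finally, for normalized: $\GG'$ has no negative cycle, because a cycle of $\GG'$ has the same weight --- after replacing each of its edges by the $\GG$-path realizing it and using associativity --- as a closed walk of $\GG$, which is $\geq(\leq,0)$ since $\GG$ has no negative cycle. Re-normalizing $\GG'$ then changes nothing: any path of $\GG'$ from $x$ to $y$ unfolds into a walk of $\GG$ from $x$ to $y$, whose weight is $\geq\GG'_{xy}$ by the first preliminary fact, while the single edge $x\to y$ of $\GG'$ already has weight exactly $\GG'_{xy}$; so the minimum over $\GG'$-paths is $\GG'_{xy}$. I expect the real work to be in the book-keeping just described --- the sub-claim on $(\leq,\infty)$-weighted paths and the careful verification that path-minima are attained --- since this is exactly where the monoid $(\overline{\mathbb{R}},+,0)$ diverges from the classical $(\mathbb{R}_{\geq 0}\cup\{\infty\},+,0)$: it is not a group, cancellation fails at infinite values, and $(\leq,\infty)$ is absorbing only up to the exception $(<,-\infty)$. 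Everything else is the standard Floyd--Warshall/canonical-form argument for difference-bound matrices.
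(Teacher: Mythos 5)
Your proof is correct and is essentially the expected argument: the paper itself gives no in-text proof of this lemma (it is imported from~\cite{eca-simulations-arxiv}), and your route — realizing each normalized weight by a simple path via cycle-excision under "no negative cycles", transferring constraints along paths with Lemma~\ref{lem:sum-weights} for $\sem{\GG}\subseteq\sem{\GG'}$, using $\GG'\leq\GG$ for the converse and for standard-form condition~(1), the sub-claim that a path has weight $(\leq,\infty)$ iff some edge does for condition~(2), and unfolding $\GG'$-paths into $\GG$-walks for idempotence — is exactly the bookkeeping the extended algebra requires. The only cosmetic point is that the monotonicity you invoke is the order-compatibility of the sum on weights $\mathcal{C}$ (the analogue of Remark~\ref{rem:extended-addition}(3), easily checked case by case), and in the condition~(2) step the first or last vertex of the realizing path may already be $0$, in which case the conclusion is immediate rather than an application of condition~(2) of $\GG$.
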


Let $Z$ be a nonempty zone.  Writing the constraints in $Z$ as a distance
graph, followed by standardizing and normalizing it, results in {\em its canonical distance graph $\graph{Z}$}: $\sem{\graph{Z}}=Z$ and $\graph{Z}$ is minimal among the standard graphs $G$ with $\sem{G}=Z$.
We denote by $Z_{xy}$ the weight of the edge $x\to y$ in $\graph{Z}$.
Formally,
  
\begin{lemma}\label{lem:canonical}~\cite{eca-simulations-arxiv}
  Let $\GG,\GG'$ be two distance graphs where $\GG$ is normalized. If 
  $\sem{\GG}\subseteq\sem{\GG'}$ then $\GG\leq\GG'$.   
  In particular, if both $\GG,\GG'$ are normalized and if $\sem{\GG}=\sem{\GG'}$ then 
  $\GG=\GG'$.
\end{lemma}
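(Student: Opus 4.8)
The plan is to prove the two statements in turn, with the second being an immediate corollary of the first. For the first statement, I assume $\GG$ is normalized and $\sem{\GG}\subseteq\sem{\GG'}$, and I aim to show $\GG_{xy}\leq\GG'_{xy}$ for every pair of vertices $x,y\in X\cup\{0\}$. Fix such a pair. If $\GG'_{xy}=(\leq,\infty)$ there is nothing to prove, so assume $\GG'_{xy}=(\leqlt,c)$ is a non-trivial weight. The standard strategy is to exhibit, for each weight $(\leqlt_1,c_1)$ that is \emph{strictly} below $\GG_{xy}$, a valuation $v\in\sem{\GG}$ that \emph{violates} the constraint $y-x\leqlt_1 c_1$; since $v\in\sem{\GG}\subseteq\sem{\GG'}$, it satisfies $y-x\GG'_{xy}$, and letting $(\leqlt_1,c_1)$ approach $\GG_{xy}$ from below forces $\GG_{xy}\leq\GG'_{xy}$. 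Concretely, one shows that the ``tightest'' constraint on $y-x$ realizable inside $\sem{\GG}$ is exactly $\GG_{xy}$, which is where normalization is used: because $\GG$ is normalized (all edge weights equal shortest-path weights), the graph has no negative cycles, $\sem{\GG}\neq\emptyset$ by Lemma~\ref{lem:standard-form}, and one can build a canonical valuation $v^{xy}\in\sem{\GG}$ attaining $v^{xy}(y)-v^{xy}(x)$ as close to $\GG_{xy}$ as the weight permits (exactly $c$ if $\leqlt$ is $\leq$, or arbitrarily close from below if it is $<$).

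The construction of this witnessing valuation is the technical heart, and it must be done carefully in the extended algebra because of infinite weights. I would proceed as follows. Since $\GG$ is normalized and in standard form, for every vertex $z$ we have $\GG_{0z}$ and $\GG_{z0}$ recording the tightest upper bounds on $v(z)$ and $-v(z)$ respectively. Split into cases according to whether $\GG_{xy}$ is finite or infinite. If $\GG_{xy}$ is finite, then by the standard-form condition both $\GG_{x0}$ and $\GG_{0y}$ are finite (strictly below $(\leq,\infty)$), and a routine argument (shifting the classical ``shortest-path'' valuation $v(z)=-\GG_{z0}$, or rather picking $v$ so that $v(x)$ is some convenient finite value and $v(y)=v(x)+c$) produces $v\in\sem{\GG}$ with $v(y)-v(x)$ equal to (or within $\varepsilon$ of) the target; one must check all triangle inequalities $v(z)-v(w)\leqlt'\GG_{wz}$, which hold precisely because $\GG$ is closed under the shortest-path operation. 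If $\GG_{xy}$ is infinite, i.e.\ $\GG_{xy}=(\leq,-\infty)$ (the only infinite non-trivial weight allowed on edges), then by the characterization in the Remark after Definition~\ref{def:diagonal-semantics} this says every $v\in\sem{\GG}$ has $v(x)=+\infty$ with $v(y)<+\infty$, or $v(y)=-\infty<v(x)$; I would again build an explicit $v\in\sem{\GG}$ realizing this (setting the relevant clocks to $\pm\infty$ consistently with the other edges, using standard form to ensure consistency), so that $v$ forces $\GG'_{xy}$ to also be $\leq(\leq,-\infty)$.

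For the ``in particular'' clause: if both $\GG,\GG'$ are normalized and $\sem{\GG}=\sem{\GG'}$, then applying the first statement twice gives $\GG\leq\GG'$ and $\GG'\leq\GG$, hence $\GG=\GG'$ by antisymmetry of the pointwise order. The main obstacle I anticipate is the witness-valuation construction in the presence of $\pm\infty$: in the classical theory one simply takes $v(z)=-\GG_{z0}$ and verifies triangle inequalities, but here the extended addition is not cancellative and the unary minus does not distribute over $+$ (as flagged in Remark~\ref{rem:extended-addition}), so one cannot blindly reuse the classical computation. The correct approach is to separate the clocks into ``defined'' ones (which get finite values exactly as in the classical case, relying on the finite-weight shortest paths among them) and ``undefined'' ones forced to $+\infty$ or $-\infty$ by infinite edges; standard form is exactly the property guaranteeing these two groups do not interact inconsistently. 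This case split, rather than any single hard inequality, is where the real work lies; each individual verification is a short calculation using Lemma~\ref{lem:sum-weights} and the properties listed in Remark~\ref{rem:extended-addition}.
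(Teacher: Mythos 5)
Your proposal is correct and takes essentially the route of the proof this paper imports from~\cite{eca-simulations-arxiv}: argue contrapositively edge by edge, realizing inside $\sem{\GG}$ a difference $v(y)-v(x)$ that violates any weight strictly below $\GG_{xy}$ --- exactly what Lemma~\ref{lem:weight-properties}(2) combined with Lemma~\ref{lem:fixing-values-distance-graph} furnish for normalized standard graphs --- and then conclude $\GG=\GG'$ by antisymmetry of the pointwise order. The only slip is your parenthetical claim that $(\leq,-\infty)$ is the sole non-finite non-trivial edge weight: the weight $(<,\infty)$ also occurs (and when $\GG_{xy}=(\leq,\infty)$ you must still force $\GG'_{xy}=(\leq,\infty)$), but these cases are dispatched by the same realizability argument using arbitrarily large or infinite differences, so nothing essential is missing.
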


\subsection{Successor computation for \GTA\  zones}

Next, we show how we can perform \GTA\ zone operations on the respective distance graphs
of the zones.  Thanks to the algebra over the new weights, the arguments are very similar
to the cases for normal timed automata and event-clock
automata~\cite{Concur22,eca-simulations-arxiv}.  One important technical difference from
these earlier works is that due to the presence of diagonal constraints among future
clocks, after a guard intersection, we need to explicitly standardize the zone in order to
check its emptiness by looking for a negative cycle.  When we had only non-diagonal
guards, this was not necessary, as non-diagonal guards cannot change the weight of $x \to
y$ edges.

\begin{definition}[\textbf{Operations on distance graphs}]\label{defn:dist-graph-ops}
  Let $\GG$ be a normalized distance graph, let $g$ be a guard and let $R\subseteq X$ 
  be a set of clocks.
  \begin{itemize}
    \item Guard intersection: a distance graph $\GG_g$ is
    obtained from $\GG$ as follows,
    \begin{itemize}
      \item for each constraint $y - x \leqlt c$ in $g$, replace
      weight of edge $x \to y$ with $\min(\GG_{xy},(\leqlt, c))$,
      \item standardize the graph obtained in the above step,
      \item normalize the resulting graph if it has no negative cycles.
    \end{itemize}
    
    \item Release/Reset: a distance graph $[R]\GG$ is obtained
    from $\GG$ by 
    \begin{itemize}
      \item removing all edges involving clocks $x\in R$ and then
      \item adding the edges $0 \xra{(\leq,0)} x$ and $x \xra{(\leq,\infty)} 0$ for all
      $x\in R_{F}$, 
      \item adding the edges $0 \xra{(\leq,0)} x$ and $x \xra{(\leq, 0)} 0$ for all
      $x\in R_{H}$, and then
      \item normalizing the resulting graph.
    \end{itemize}
    
  \item Time elapse: the distance graph $\elapse{\GG}$ is
    obtained by the following transformation:
    \begin{itemize}
    \item for all history clocks $x$, if $\GG_{0x}\neq(\leq,\infty)$ then replace it with $(<,\infty)$,
      
    \item for all future clocks $x$, if $\GG_{0x}\neq(\leq,-\infty)$ then replace it with $(\leq,0)$,

    \item normalize the resulting graph.
    \end{itemize}
  \end{itemize}
\end{definition}

The theorem below says that the operations on \GTA\ zones translate easily to operations on distance graphs and that the successor of a \GTA\ zone is a \GTA\ zone. Except for the release operation $[R_{F}]\GG$, the rest of the operations are standard in timed automata, but they do not use weights $(\le, -\infty), (\le, +\infty)$.  
We can perform all these operations in the new algebra with quadratic complexity, matching the best-known complexity for timed
automata without diagonal constraints~\cite{DBLP:journals/ipl/ZhaoLZ05}.

\begin{theorem}\label{thm:distance-graph-operations}
  Let $\GG$ be a normalized distance graph,
  $g$ be a guard and $R\subseteq X$ be a set of clocks.  
  We can compute, in $\mathcal{O}(|X|^2)$ time, normalized distance graphs 
  $\GG_g$, $[R]\GG$ 
  and $\elapse{\GG}$, such that
  $\sem{\GG} \land g = \sem{\GG_g}$,
  $[R]\sem{\GG} = \sem{[R]\GG}$, and
  $\elapse{\sem{\GG}} = \sem{\elapse{\GG}}$.
\end{theorem}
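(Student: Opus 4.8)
The plan is to handle the three operations separately, showing for each that the described graph transformation yields a normalized distance graph whose semantics matches the corresponding zone operation, and that the whole thing runs in $\mathcal{O}(|X|^2)$. For \emph{guard intersection}, I would argue that replacing $\GG_{xy}$ with $\min(\GG_{xy},(\leqlt,c))$ for each atomic constraint $y-x\leqlt c$ in $g$ produces a (not necessarily standard) distance graph whose semantics is exactly $\sem{\GG}\cap\{v\mid v\models g\}$; this is immediate from the definition of $\sem{\cdot}$ as a conjunction of edge constraints. Then Lemma~\ref{lem:standard-form} tells us standardization preserves semantics and detects emptiness via a negative cycle, and Lemma~\ref{lem:normalization} (together with Lemma~\ref{lem:canonical} for minimality/canonicity) gives a normalized graph with the same semantics. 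The only subtlety to flag — already noted in the text before Definition~\ref{defn:dist-graph-ops} — is that with diagonal constraints among future clocks the intermediate graph may fail to be standard even though $\GG$ was normalized, so the explicit standardization step is genuinely needed here, unlike in the non-diagonal case.

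For \emph{release/reset}, I would show $\sem{[R]\GG}=[R]\sem{\GG}$ by unfolding the definition of $[R]v$ from Section~\ref{sec:prelims}: removing all edges touching $x\in R$ discards every constraint mentioning $x$, which matches "$v'(x)$ arbitrary in $[-\infty,0]$" for future clocks and "reset to $0$" pending the re-added edges; adding $0\xra{(\leq,0)}x$ and $x\xra{(\leq,\infty)}0$ for $x\in R_F$ encodes exactly $-\infty\le x\le 0$, and adding $0\xra{(\leq,0)}x$, $x\xra{(\leq,0)}0$ for $x\in R_H$ encodes $x=0$. One direction is a direct valuation-level check; the other uses that in a normalized graph the removed edges were implied by remaining edges composed through $0$ (so no information is lost about clocks outside $R$), plus Lemma~\ref{lem:sum-weights} for soundness of the triangle inequalities after normalization. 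Since the re-added constraints on future clocks are consistent with any valuation of the other clocks, the resulting graph has no new negative cycles, so normalization is well-defined. For \emph{time elapse}, I would check directly against Definition~\ref{def:oponzones}: $\elapse{Z}=\{v+\delta\mid v\in Z,\ \delta\ge 0,\ v+\delta\models X_F\le 0\}$ means the lower bounds on history clocks ($\GG_{x0}$) and the lower bounds on future clocks ($\GG_{x0}$) are unaffected, upper bounds on history clocks become $(<,\infty)$ unless already $(\leq,\infty)$ (uniformly shifting all clocks up destroys finite upper bounds but keeps them non-trivial), upper bounds on future clocks become $(\leq,0)$ unless the clock is forced to $-\infty$ (i.e.\ $\GG_{0x}=(\leq,-\infty)$), and \emph{diagonal} constraints $x-y$ are preserved since time elapse shifts all clocks equally. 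Here I would be careful to justify why it suffices to modify only the $0\to x$ edges and re-normalize, rather than touching $x\to 0$ edges, by comparing with the classical timed-automata argument and invoking Remark~\ref{rem:extended-addition} for the edge cases involving $\pm\infty$.

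The main obstacle I anticipate is the release operation for future clocks, $[R_F]\GG$ — specifically proving $[R_F]\sem{\GG}\subseteq\sem{[R_F]\GG}$ and, conversely, that $[R_F]\GG$ is not "too large." Because the release is nondeterministic and each future clock in $R_F$ can independently take any value in $[-\infty,0]$, one must show that simply deleting all edges incident to clocks in $R_F$ (before re-adding the trivial bounds) does not accidentally drop a constraint that should survive between two clocks \emph{outside} $R_F$; this is where normalization of the input $\GG$ is essential, since it guarantees every implied constraint on the surviving clocks is already present as a direct edge. The infinite weights make the cancellation reasoning from Remark~\ref{rem:extended-addition}(2) delicate — one cannot always cancel an infinite $\alpha$ in $\alpha+\beta=\alpha+\gamma$ — so the argument that "removing edges through a released clock loses nothing" must be made carefully rather than by naive path algebra. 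For the complexity claim, each operation touches $\mathcal{O}(|X|^2)$ edges and normalization (all-pairs shortest paths via the incremental Floyd–Warshall-style update after a bounded-rank change) is $\mathcal{O}(|X|^2)$, following~\cite{DBLP:journals/ipl/ZhaoLZ05}; I would note that the extended algebra does not change this since sum and $\min$ of weights are still $\mathcal{O}(1)$.
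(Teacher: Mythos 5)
Your plan matches the paper's own proof: the theorem is established by splitting into the three operations (Lemmas~\ref{lem:guard}, \ref{lem:release-reset} and~\ref{lem:time-elapse-prop}), with the guard case handled exactly as you describe — take edge-wise minima, standardize (genuinely needed here because of diagonals among future clocks), normalize, and recover the quadratic bound by incremental per-constraint updates as in~\cite{DBLP:journals/ipl/ZhaoLZ05} — while the release/reset and time-elapse cases follow the valuation-level arguments you sketch (the paper defers their detailed proofs to~\cite{eca-simulations-arxiv}). The subtleties you flag, namely the explicit standardization step after diagonal guard intersection and the projection argument for release relying on the input graph being normalized together with care around infinite weights, are precisely the points the paper itself highlights, so your proposal is correct and essentially the same approach.
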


The proof of the theorem follows from Lemmas~\ref{lem:guard},~\ref{lem:release-reset}
and~\ref{lem:time-elapse-prop} which we state below for completeness. 
We give only the proof of Lemma~\ref{lem:guard} which is a bit more involved than the non-diagonal case handled in~\cite{eca-simulations-arxiv}. 
The proofs of Lemma~\ref{lem:release-reset}
and~\ref{lem:time-elapse-prop} can be found in \cite{eca-simulations-arxiv}.

\begin{lemma}\label{lem:guard}
  Let $\GG$ be a normalized distance graph and let $g$ be a guard.  
  Then $\sem{\GG_g}=\sem{\GG}\wedge g$, 
  and $\GG_{g}$ can be computed in time 
    $\mathcal{O}(|g|+ |X|^{2})$.
\end{lemma}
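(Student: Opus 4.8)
The plan is to prove the two assertions of Lemma~\ref{lem:guard} — semantic correctness ($\sem{\GG_g}=\sem{\GG}\wedge g$) and the time bound — by reducing the general guard $g$ to its atomic conjuncts and analysing the three-step construction (add atomic edges, standardize, normalize) in Definition~\ref{defn:dist-graph-ops}.

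For the semantic part, first I would observe that $g$ is a conjunction of atomic constraints $y-x\leqlt c$, and for each such conjunct the step ``replace $\GG_{xy}$ by $\min(\GG_{xy},(\leqlt,c))$'' produces a distance graph $\GG_1$ with $\sem{\GG_1}=\sem{\GG}\cap\{v\mid v\models g\}=\sem{\GG}\wedge g$; this is immediate from the definition of $\sem{\cdot}$ since intersecting solution sets corresponds to taking pointwise minima of edge weights (and adding an edge where none existed, i.e.\ weakening from $(\leq,\infty)$, is harmless). Next, standardization preserves semantics by Lemma~\ref{lem:standard-form}, and it also gives the correspondence ``$\sem{\GG_1}\neq\emptyset$ iff no negative cycle.'' Finally, normalization (applied only in the non-empty case) preserves semantics by Lemma~\ref{lem:normalization}. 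Chaining these three semantic-preservation facts gives $\sem{\GG_g}=\sem{\GG}\wedge g$. I would stress the point flagged in the text before the lemma: when $g$ contains a diagonal constraint between future clocks, the naive edge update can create a zone whose emptiness is \emph{not} witnessed by a negative cycle unless one first standardizes — so the explicit standardize step is genuinely needed here, unlike in the non-diagonal timed-automata case, and the proof should make clear that it is Lemma~\ref{lem:standard-form} that licenses checking emptiness via negative cycles after standardization.

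For the complexity bound, I would account for the three steps. Adding the atomic edges of $g$ costs $\mathcal{O}(|g|)$ (one $\min$ per atomic constraint). Standardization, as described just before Lemma~\ref{lem:standard-form}, only adjusts the $0$-incident edges $\GG_{0x}$ and $\GG_{x0}$ by comparing against $(\leq,0)$ or $(<,\infty)$, which is $\mathcal{O}(|X|)$; checking for a negative cycle and, if none, recomputing all shortest paths (normalization) is a single all-pairs shortest-path computation, which on a graph with $|X|+1$ vertices is $\mathcal{O}(|X|^3)$ by Floyd--Warshall — but since $\GG$ is already normalized and only the $\mathcal{O}(|g|)$ edges of $g$ plus the $0$-edges changed, one can instead do an incremental normalization: for each modified edge $x\to y$ run the standard $\mathcal{O}(|X|^2)$ update that propagates the new bound (this is exactly the quadratic incremental re-canonicalization used for timed automata, e.g.\ in~\cite{DBLP:journals/ipl/ZhaoLZ05}), giving $\mathcal{O}(|g|\cdot|X|^2)$, or $\mathcal{O}(|X|^2)$ when $|g|$ is bounded, hence the stated $\mathcal{O}(|g|+|X|^2)$ after folding $|g|$ into the count of modified edges. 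I would state the incremental normalization step as the one piece carried over verbatim from~\cite{eca-simulations-arxiv}/\cite{GastinMS20}, adapted only to the extended weight algebra, whose associativity and monotonicity (Remark~\ref{rem:extended-addition}, Lemma~\ref{lem:sum-weights}) are what make the shortest-path reasoning go through with $\pm\infty$ weights.

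The main obstacle I anticipate is not the arithmetic but getting the emptiness/standardization interaction exactly right in the presence of $(\leq,-\infty)$ and $(\leq,+\infty)$ edges created by a future-clock diagonal in $g$: one must verify that after the edge-insertion step the \emph{only} failure mode is captured by standardize-then-look-for-negative-cycle, i.e.\ that Lemma~\ref{lem:standard-form} really does apply to the (possibly non-standard) graph $\GG_1$ and not just to graphs arising from zones. Since $\GG_1$ is by construction a distance graph in the sense of Definition~\ref{def:distance-graph} (no $(<,-\infty)$ edges appear, as $g$'s weights lie in $\mathcal{C}$ with integer $c$, and the only way to get $(<,-\infty)$ would be a trivially-false conjunct which we can discard upfront), Lemma~\ref{lem:standard-form} applies directly, and this is the crux the proof should isolate and dispatch first before turning to the routine semantic bookkeeping and the complexity count.
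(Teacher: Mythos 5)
Your proposal is correct and follows essentially the same route as the paper's proof: the semantic claim is obtained by chaining the edge-minimum update (which clearly yields $\sem{\GG}\wedge g$) with Lemma~\ref{lem:standard-form} for standardization and Lemma~\ref{lem:normalization} for normalization, and the time bound is obtained by avoiding a cubic renormalization via the $\mathcal{O}(|X|^2)$ treatment of non-diagonal modifications from the ECA paper plus a quadratic incremental update per diagonal constraint of $g$ (the paper's $\min(\GG''_{x'y'},\GG''_{x'x}+(\leqlt,c)+\GG''_{yy'})$ step). Your slight looseness in folding the per-diagonal quadratic updates into the stated $\mathcal{O}(|g|+|X|^2)$ bound mirrors the paper's own accounting, so there is no substantive divergence.
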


\begin{proof}
  According to the definition, we first construct an intermediate graph $\GG'$ by replacing weights of edges of the form 
  $x \to y$ depending on the atomic constraints in $g$.  
  It is easy to see that $\sem{\GG'}=\sem{\GG}\wedge g$ 
  and that $\GG'$ is computed from $\GG$ in time $\mathcal{O}(|g|+|X|^{2})$.  The
  standardization process computes in time $\mathcal{O}(|X|^{2})$ a graph $\GG''$ is
  standard form with the same solution set. If $\GG''$ has no negative cycle,
  the normalization process does not change the solution set.
  
  In general, checking for negative cycle and normalization of $\GG''$ may take time
  $\mathcal{O}(|X|^{3})$.  Alternatively, we can start by handling the modification of
  non-diagonal edges as we did in the ECA paper: see below how to check for negative
  cycles and normalize in time $\mathcal{O}(|X|^{2})$.  Then, for each diagonal constraint
  $y-x\leqlt c$ in $g$, we reduce the weight of each edge $x'\to y'$ to
  $\min(\GG''_{x'y'},\GG''_{x'x}+(\leqlt ,c)+\GG''_{yy'})$.
\end{proof}

\begin{lemma}\label{lem:release-reset}
  Let $\GG$ be a normalized distance graph and $R\subseteq X$.  
  Then, $\sem{[R]\GG}=[R]\sem{\GG}$,
  and $[R]\GG$ can be computed in time $\mathcal{O}(|X|^{2})$.
  
  Moreover, the weight $\GG'_{xy}$ of edge $x\to y$ in $[R]\GG$ is given by
  $$
  \GG'_{xy}=
  \begin{cases}
    (\leq,\infty) & \text{if } x\in R_{F} \\
    (\leq,0) & \text{if } x\in R_{H}, y\in R \\
    \GG_{0y} & \text{if } x\in R_{H}, y\notin R \\
    \GG_{x0} & \text{if } x\notin R, y\in R \\
    \GG_{xy} & \text{if } x,y\notin R 
  \end{cases}
  $$
\end{lemma}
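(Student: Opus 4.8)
The statement has three parts: the semantic identity $\sem{[R]\GG}=[R]\sem{\GG}$, the closed-form table for the weights of $[R]\GG$, and the $\mathcal{O}(|X|^2)$ bound. The plan is to prove the semantic identity first on the \emph{pre-normalization} graph — call it $\GG^{\sharp}$, obtained from $\GG$ by deleting every edge incident to a clock of $R$ and then adding $0\xra{(\leq,0)}x$, $x\xra{(\leq,\infty)}0$ for $x\in R_F$ and $0\xra{(\leq,0)}x$, $x\xra{(\leq,0)}0$ for $x\in R_H$ — since Lemma~\ref{lem:normalization} guarantees that the final normalization step preserves the solution set. I would then compute the shortest-path closure of $\GG^{\sharp}$ by hand to get the displayed table; this simultaneously shows $[R]\GG$ is available in closed form, so it is computable in $\mathcal{O}(|X|^2)$ time without an $\mathcal{O}(|X|^3)$ closure pass.

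For $[R]\sem{\GG}\subseteq\sem{\GG^{\sharp}}$: take $v\in\sem{\GG}$ and $v'\in[R]v$. On $X\setminus R$ the valuations $v$ and $v'$ agree, so $v'$ satisfies every edge $\GG^{\sharp}$ inherits from $\GG$; for $x\in R_F$ the release forces $v'(x)\in[-\infty,0]$, which is exactly what $0\xra{(\leq,0)}x$ and $x\xra{(\leq,\infty)}0$ express; for $x\in R_H$ the reset forces $v'(x)=0$, satisfying the two $(\leq,0)$-edges. The converse $\sem{\GG^{\sharp}}\subseteq[R]\sem{\GG}$ is the crux. Take $v'\in\sem{\GG^{\sharp}}$; since $\GG^{\sharp}$ contains all edges of $\GG$ among $X\setminus R\cup\{0\}$, and $\GG$ is normalized (so these edges already realise the shortest-path weights), the restriction of $v'$ to $X\setminus R\cup\{0\}$ satisfies $\GG$ restricted to that set. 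I would extend this partial valuation to the clocks of $R$ one at a time. When adding a clock $r$ to the current domain $S\ni 0$, a suitable value exists iff the interval cut out by the constraints $v(r)-v(a)\leqlt\GG_{ar}$ and $v(a)-v(r)\leqlt\GG_{ra}$ for $a\in S$ is non-empty, and by the triangle inequalities of the normalized $\GG$ this reduces precisely to the fact, already in hand, that $v$ satisfies $\GG$ on $S$ (this is where Lemma~\ref{lem:sum-weights} and the properties of normalized/standard graphs are used). Instantiating $a=0$ and using that $\GG$ is in standard form ($\GG_{0r}\leq(\leq,0)$ for $r\in X_F$, $\GG_{r0}\leq(\leq,0)$ for $r\in X_H$) shows every feasible value lands automatically in the correct sign range, so the extension is a genuine valuation; the resulting $v$ lies in $\sem{\GG}$ and, since its $R_H$-clocks are $0$ and its $R_F$-clocks are in $[-\infty,0]$, satisfies $v'\in[R]v$.

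For the table I would analyse shortest paths in $\GG^{\sharp}$, using three facts: (i) a normalized $\GG$ realises each weight $\GG_{xy}$ by a direct edge, which survives the edge deletion whenever $x,y\notin R$; (ii) $(\leq,0)$ is the neutral element of weight addition; (iii) $(\leq,\infty)+\alpha=(\leq,\infty)$ for every admissible $\alpha$. Since every out-edge of $x\in R_F$ is the single edge $x\xra{(\leq,\infty)}0$, all shortest paths from such an $x$ have weight $(\leq,\infty)$, giving $\GG'_{xy}=(\leq,\infty)$. A clock of $R$ is attached to the rest of the graph only through $0$, with weight $(\leq,0)$ on every such edge except the $R_F$-to-$0$ ones; combining (i)–(iii) with the triangle inequalities of $\GG$ shows that routing a path through a clock of $R$ never beats the direct $\GG$-edge, and one reads off $\GG'_{xy}=\GG_{xy}$ for $x,y\notin R$, $\GG'_{xy}=\GG_{x0}$ for $x\notin R,y\in R$, $\GG'_{xy}=\GG_{0y}$ for $x\in R_H,y\notin R$, and $\GG'_{xy}=(\leq,0)$ for $x\in R_H,y\in R$. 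Each entry is then obtained in constant time, so $[R]\GG$ is produced in $\mathcal{O}(|X|^2)$ time.

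I expect the main obstacle to be the one-clock-at-a-time extension in the second paragraph: making the ``non-empty interval $\Leftrightarrow$ the partial valuation is consistent'' argument fully rigorous in the extended weight algebra, where one must track $\pm\infty$ endpoints and the interplay of strict and non-strict bounds, rather than the shortest-path bookkeeping for the table, which is routine once facts (i)–(iii) are in place.
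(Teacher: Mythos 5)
The paper does not actually spell out a proof of this lemma (it defers it to the cited arXiv version), so I can only judge your plan on its own terms. Its skeleton is the natural one and matches the definition of $[R]\GG$: the easy inclusion $[R]\sem{\GG}\subseteq\sem{\GG^{\sharp}}$ is fine, and your shortest-path analysis of $\GG^{\sharp}$ does yield the displayed table (every path leaving $x\in R_F$ starts with the $(\leq,\infty)$ edge and $(\leq,\infty)+\alpha=(\leq,\infty)$; clocks of $R$ are attached only through $0$ by $(\leq,0)$ edges; detours $0\to r\to 0$ are non-negative; the direct $\GG$-edges survive and are already path-minimal), giving the $\mathcal{O}(|X|^2)$ bound. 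You should, however, explicitly check that $\GG^{\sharp}$ is in standard form and has no negative cycles before invoking Lemma~\ref{lem:normalization} -- both are easy but are preconditions of that lemma.

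The genuine gap is where you expect it, and it is more than bookkeeping: the claim that non-emptiness of the feasible set for the new clock $r$ ``reduces precisely, by the triangle inequalities of the normalized $\GG$, to the fact that $v$ satisfies $\GG$ on $S$'' is not a correct justification in the extended algebra. Single constraints need not induce interval bounds of the form $u(a)+\GG_{ar}$ at all (when $u(a)$ is infinite they become trivial, or of the form $\alpha\neq\pm\infty$, or unsatisfiable on their own), and an incompatible pair of an upper and a lower bound need not be refuted by the triangle through $r$ on that same pair. Concretely, suppose $a,b\in S$ with $u(a)=+\infty$, $u(b)$ finite, $\GG_{ar}$ non-trivial and $\GG_{rb}=(\leq,-\infty)$: the edge $a\to r$ forces $\alpha<+\infty$ while $r\to b$ forces $\alpha=+\infty$, yet the triangle $a\to r\to b$ only gives $\GG_{ab}=(\leq,-\infty)$, which $u$ \emph{does} satisfy (since $u(b)+(-u(a))=-\infty$), so no contradiction is produced by the pairwise argument you describe. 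The configuration is in fact impossible, but to see it you must route through the zero column: standard-form condition (2) applied to the non-trivial edge $a\to r$ gives $\GG_{0r}\neq(\leq,\infty)$, whence $\GG_{0b}\leq\GG_{0r}+\GG_{rb}=(\leq,-\infty)$, and this $S$-edge is what $u$ violates. The same mechanism (standardness forcing finite $0$-edges, plus the special sum rules for $(\leq,\pm\infty)$ and $(<,\infty)$) is also what trivializes the otherwise unsatisfiable constraints coming from neighbours with $u(a)=-\infty$ or $u(b)=+\infty$. So your extension step needs a systematic case analysis over the values $\{-\infty,\text{finite},+\infty\}$ of the neighbours and the weights $\{(\leq,-\infty),\text{finite},(<,\infty),(\leq,\infty)\}$, with standard form invoked throughout -- this case analysis is exactly the substance of the omitted proof, and as written your proposal asserts rather than supplies it.
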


\begin{lemma}\label{lem:time-elapse-prop}~\cite{eca-simulations-arxiv}
  Let $\GG$ be a normalized distance graph.
  Then, $\sem{\elapse{\GG}}=\elapse{\sem{\GG}}$,
  and $\elapse{\GG}$ can be computed in time $\mathcal{O}(|X|^{2})$.
\end{lemma}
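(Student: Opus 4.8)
The plan is to prove the two assertions separately, following the standard DBM argument for time elapse but carried out in the extended weight algebra (as in~\cite{eca-simulations-arxiv}). Write $\GG^{te}$ for the graph obtained from $\GG$ by applying only the first two bullets of the time-elapse operation of Definition~\ref{defn:dist-graph-ops} (relaxing the edges $0\to x$), so that $\elapse{\GG}$ is the normalization of $\GG^{te}$. Since normalization preserves the solution set (Lemma~\ref{lem:normalization}), it suffices to show $\sem{\GG^{te}}=\elapse{\sem{\GG}}$; the complexity bound is handled at the end.

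For the inclusion $\elapse{\sem{\GG}}\subseteq\sem{\GG^{te}}$, take $v'=v+\delta$ with $v\in\sem{\GG}$, $\delta\geq 0$ and $v'\models X_{F}\leq 0$, and verify that $v'$ satisfies every edge of $\GG^{te}$. The diagonal edges $x\to y$ ($x,y\neq 0$) and the lower-bound edges $x\to 0$ are inherited unchanged from $\GG$; for a diagonal the increment $\delta$ cancels in the difference, and for a lower bound it only helps, the only care being the treatment of $\pm\infty$, which is routine via Remark~\ref{rem:extended-addition}. For a relaxed edge $0\to x$ with $x\in X_{H}$: if $\GG_{0x}\neq(\leq,\infty)$ then $v(x)$ is finite, hence $v'(x)=v(x)+\delta$ is finite and satisfies $(<,\infty)$; otherwise the edge remains trivial. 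For a relaxed edge $0\to x$ with $x\in X_{F}$: if $\GG_{0x}\neq(\leq,-\infty)$ the new weight $(\leq,0)$ holds because $v'\models X_{F}\leq 0$; otherwise $v(x)=-\infty=v'(x)$ and the weight $(\leq,-\infty)$ still holds.

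The converse inclusion $\sem{\GG^{te}}\subseteq\elapse{\sem{\GG}}$ is the heart of the proof: given $v'\in\sem{\GG^{te}}$ one must ``rewind time'', i.e.\ produce $\delta\geq 0$ with $v:=v'-\delta\in\sem{\GG}$; note $v'\models X_{F}\leq 0$ is immediate since every edge $0\to x$ with $x\in X_{F}$ of $\GG^{te}$ has weight at most $(\leq,0)$. The admissible $\delta$ form a one-variable system of difference constraints: $\delta\geq 0$; a lower bound on $\delta$ from each clock whose upper bound in $\GG$ was relaxed in $\GG^{te}$ (history clocks with a finite $\GG_{0x}$, and future clocks with $\GG_{0x}\neq(\leq,-\infty)$) so that $v'-\delta$ does not overshoot it; and an upper bound on $\delta$ from each lower-bound edge $x\to 0$ of $\GG$ (in particular keeping history clocks of $v'-\delta$ nonnegative, which is subsumed since $\GG$ standard gives $\GG_{x0}\leq(\leq,0)$). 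Feasibility of this system amounts to the absence of a negative cycle, and this is exactly what $v'$ already guarantees: because $\GG$ is normalized, for every relevant pair of an ``upper'' edge $0\to x$ and a ``lower'' edge $y\to 0$ of $\GG$, the triangle inequality gives $\GG_{yx}\leq\GG_{y0}+\GG_{0x}$, and this edge $y\to x$ survives into $\GG^{te}$, so $v'$ satisfies it; this yields precisely the inequalities placing every lower bound on $\delta$ below every upper bound (with the usual care for strict versus non-strict weights, resolved by density of $\RR$), and it also handles the $\pm\infty$ cases uniformly since an undefined clock in $v'$ contributes no effective bound. Picking any such $\delta$ gives $v\in\sem{\GG}$ with $v'=v+\delta$, as required.

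For the complexity, relaxing the $|X|$ edges $0\to x$ costs $\mathcal{O}(|X|)$. For the normalization, the point is that only the edges leaving $0$ were modified and $\GG$ had no negative cycles: a shortest path in the modified graph may be taken simple, so it either avoids $0$ (and then uses only inherited edges, whose shortest paths are already recorded in $\GG$) or passes through $0$ once; hence one round of relaxation through $0$ suffices to recompute all entries, for a total of $\mathcal{O}(|X|^{2})$. The main obstacle is the rewinding direction, specifically arguing feasibility of the $\delta$-system uniformly across finite and infinite valuations and across both history and future clocks; the remaining steps are direct, if somewhat tedious, verifications in the extended algebra.
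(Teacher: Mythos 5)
Your proposal cannot be checked against an in-paper proof: for this lemma the paper deliberately gives none, importing both the statement and its proof from~\cite{eca-simulations-arxiv} (only Lemma~\ref{lem:guard} is reproved there because diagonals make it different). Your argument is the standard DBM-style one that the cited reference follows, and it is essentially correct: the forward inclusion is the routine verification you describe; the ``rewinding'' direction correctly reduces to a one-variable feasibility problem for $\delta$, whose pairwise compatibilities are exactly witnessed by $v'$ satisfying the inherited edges $y\to x$ together with normalization $\GG_{yx}\leq\GG_{y0}+\GG_{0x}$; and the complexity claim is sound precisely because $\GG$ is normalized, so each recorded entry $\GG_{zy}$ is itself an edge avoiding $0$ that survives into the relaxed graph, whence one relaxation round through $0$ recomputes the only entries that can change (those leaving $0$). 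Two small points you leave implicit and should state: (i) the compatibility of the lower and upper bound on $\delta$ coming from the \emph{same} clock $x$ is not covered by an edge $y\to x$ (there are no self-loops); it follows instead from the non-negativity of the cycle $0\to x\to 0$ in $\GG$, i.e.\ $(\leq,0)\leq\GG_{0x}+\GG_{x0}$, which holds since $\GG$ is normalized; and (ii) the compatibility of the constraint $\delta\geq 0$ with each upper bound is exactly $v'$ satisfying the unchanged edge $y\to 0$. Both are immediate, and the remaining $\pm\infty$ case distinctions you defer to the extended algebra (Remark~\ref{rem:extended-addition}, Lemma~\ref{lem:weight-properties}) do go through as you claim.
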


\section{Safely reachable \GTA\ zones and their properties}\label{sec:dagger}
Till now, we have shown properties of distance graphs for \GTA\ zones in general. In this section, we show that \GTA\ zones that are reachable from the initial zone in an $X_D$-safe \GTA\ have additional special properties. As in normal TA, we also use the fact the maximal constant occurring in the programs (in the transitions) of a \GTA. 

Let $M\in\mathbb{N}$.  We say that a constraint $x - y \leqlt c$ is \emph{$M$-bounded} if either $c \in \mathbb{R}$ is such that $-M\leq c\leq M$ or $(\leqlt,c) \in \{(\leq,-\infty),(<,\infty),(\leq,\infty)\}$. We say that a program is $M$-bounded if each of its constraints is $M$-bounded. Recall the definition of $X_{D}$-safe programs from Definition~\ref{def:safe-program}. We say that a program is $(X_{D},M)$-safe if it is both $M$-bounded and $X_{D}$-safe.

\begin{definition}[$(X_{D},M)$-safe operations]\label{def:safe-operations}
  The following zone operations are $(X_{D},M)$-safe:
  \begin{itemize}
    \item Guard intersection with a safe guard:
    $Z \wedge g$, where $g$ is $(X_{D},M)$-safe.

    \item Reset of a history clock $x$ or release of a future clock $x \notin X_{D}$: $[x]Z$, where $x \notin X_{D}$.
    
    \item Release of a future clock $x \in X_{D}$ when its value is $0$ or $-\infty$: $[x](Z \wedge (x = c))$, where $x \in X_{D}$ and $c \in \{0,-\infty\}$.

    \item Time elapse: $\elapse{Z}$.
  \end{itemize}  
\end{definition}

We say that a zone $Z$ is \emph{$(X_{D},M)$-safely reachable} if 
\begin{itemize}
  \item the initialization guard $g_0$ sets each history clock to either $0$ or $\infty$. 
  \item if $Z$ can be obtained starting from the initial zone $Z_{0}$ and applying only $(X_D,M)$-safe zone operations.
\end{itemize}

\begin{lemma}
  If $\Aa$ is an $X_{D}$-safe \GTA\ in which the maximum constant used is $M$, then its reachable zones are \emph{$(X_{D},M)$-safe}.   
\end{lemma}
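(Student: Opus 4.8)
The plan is to prove, by induction on the length of the path in $\gzg(\Aa)$ reaching a node $(q,Z)$, the slightly more precise statement that every reachable zone is $(X_D,M)$-\emph{safely reachable} in the sense of the definition preceding the lemma; this is exactly what is meant by ``$(X_D,M)$-safe'' here. Throughout, the operations are considered at the level of sets of valuations, so the intermediate standardization/normalization steps of Definition~\ref{defn:dist-graph-ops} are irrelevant, since they do not change the semantics.

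For the base case, the initial node of $\gzg(\Aa)$ is $(q_0,\elapse{Z_0})$ with $Z_0 = g_0\wedge(X_F\le 0)\wedge(X_H\ge 0)$. Since $\Aa$ is $X_D$-safe, its initial guard $g_0$ sets each history clock to $0$ or $\infty$, so the side condition on $g_0$ in the definition of safe reachability holds. The zone $\elapse{Z_0}$ is obtained from the initial zone $Z_0$ by a single application of time elapse, which is an $(X_D,M)$-safe operation by Definition~\ref{def:safe-operations}; hence $\elapse{Z_0}$ is $(X_D,M)$-safely reachable.

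For the inductive step, suppose $(q,Z)$ is reachable with $Z$ being $(X_D,M)$-safely reachable, and let $(q,Z)\xra{t}(q',Z')$ with $t=(q,a,\prog,q')$. Because $\Aa$ is $X_D$-safe and all its constants are bounded by $M$, the program $\prog$ is $(X_D,M)$-safe; after merging consecutive atomic programs (guards into conjunctions, changes into unions) and padding with trivial guards where necessary, we may write it in the alternating form $\langle g_1;[R_1];\ldots;g_k;[R_k];g_{k+1}\rangle$ of Definition~\ref{def:safe-program} without affecting the computed zone, and $Z'$ is obtained by successively intersecting with the $g_i$, applying the $[R_i]$, and finally elapsing time. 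Each guard intersection $Z_i\wedge g_i$ is an $(X_D,M)$-safe operation, because every diagonal constraint between future clocks in $g_i$ uses only clocks of $X_D$ (first safety condition) and all constants of $g_i$ are $M$-bounded. Each change $[R_i]$ is decomposed into single-clock releases/resets, performed in an arbitrary order, using $[R'\cup R'']W=[R']([R'']W)$: resetting a history clock, and releasing a future clock outside $X_D$, are directly $(X_D,M)$-safe; for $x\in X_D\cap R_i$, the second safety condition guarantees that $g_i$ contains $x=0$ or $x=-\infty$, so after the intersection with $g_i$ the zone implies $x=c$ for some $c\in\{0,-\infty\}$, and since releasing or resetting any clock distinct from $x$ leaves the $x$-coordinate of every valuation unchanged, the zone $\tilde Z$ present just before $x$ is released still implies $x=c$; hence $[x]\tilde Z=[x](\tilde Z\wedge(x=c))$ is an $(X_D,M)$-safe operation. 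The final time elapse is $(X_D,M)$-safe. Thus $Z'$ is obtained from $Z$ by a sequence of $(X_D,M)$-safe operations, so it is $(X_D,M)$-safely reachable, completing the induction.

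The only step that needs genuine care — the rest being bookkeeping about normal forms and about the zone-graph construction — is the case of releasing a future clock $x\in X_D$: one must use that releasing or resetting any other clock leaves the value of $x$ untouched to conclude that the precondition ``$x$ equals $0$ or $-\infty$'' of the corresponding safe operation still holds at the precise moment $x$ is released, even though other clocks of $R_i$ may have been released in between. The other point worth spelling out is that a program generated by the general grammar can always be rewritten, without changing its semantics, into the alternating normal form assumed by Definition~\ref{def:safe-program}.
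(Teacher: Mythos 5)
Your proposal is correct, and it fills in an argument the paper itself leaves implicit: the lemma is stated without proof, being regarded as an immediate unfolding of Definitions~\ref{def:safe-program} and~\ref{def:safe-operations} along the successor computation of Section~\ref{sec:gta-zones}, which is exactly the induction you carry out. The one point where real care is needed — that when a block $[R_i]$ is decomposed into single-clock changes, the clocks released or reset before an $x\in X_D$ do not alter the value of $x$, so the zone still implies $x=0$ or $x=-\infty$ at the moment $x$ is released and the step matches the safe operation $[x](Z\wedge(x=c))$ — is precisely what the paper's definitions require, and you handle it correctly.
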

In other words, for these systems, we need to only reason about \emph{$(X_{D},M)$-safely reachable} zones.
When $X_{D}$ and $M$ are clear from the context, we will sometimes abuse notation and just say \emph{safely reachable} zone instead of $(X_{D},M)$-safely reachable zone, and use \emph{safe} programs, constraints, accordingly.

\subsection{Valuations of safely reachable zones}

Next, we define an equivalence relation $\simeq$ between valuations that relates valuations that agree on value of history clocks, and satisfy the same set of safe constraints involving non-history clocks.
\begin{definition}\label{defn:eq-diagonals}
  $v_1 \simeq v_2$ if $v_1 \da_{X_H} = v_2 \da_{X_H}$ and, for all $x,y \in X_{F} \cup \{0\}$ and for all $(X_{D},M)$-safe constraints $y - x \leqlt c$, we have $v_{1} \models y - x \leqlt c$ if and only if $v_{2} \models y - x \leqlt c$.
\end{definition}

Let $x\in X_{F}$ and $v_{1}\simeq v_{2}$. 
Then $v_{1}(x)=-\infty$ iff $v_{2}(x)=-\infty$. 
This is because $x-0\leq-\infty$ is an $(X_{D},M)$-safe constraint.
Further, $v_{1}(x)=v_2(x)$ if $-M\leq v_1(x)\leq 0$.
This is because $x-0\leq v_{1}(x)$ and $0-x\leq -v_{1}(x)$ are both $(X_{D},M)$-safe.
It follows that $-\infty < v_1(x) < -M$ if and only if 
$-\infty < v_2(x) < -M$ as well.

Definition~\ref{defn:eq-diagonals} requires that $v_1$ and $v_2$ satisfy the
same set of $(X_{D},M)$-safe constraints involving non-history clocks.  
We will now show that if $v_1 \simeq v_2$, then $v_1$ and $v_2$ satisfy the same set of $(X_{D},M)$-safe constraints involving any pair of clocks.

\begin{lemma}\label{lem:simeq-all-clocks}
  If $v_1 \simeq v_2$ then, for all $x,y \in X \cup \{0\}$ and for all $(X_{D},M)$-safe constraints $y - x \leqlt c$, we have $v_{1} \models y - x \leqlt c$ if and only if
  $v_{2} \models y - x \leqlt c$.
\end{lemma}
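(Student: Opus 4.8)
The plan is to prove the statement by a case analysis on whether each of $x$ and $y$ is a history clock, a future clock, or the constant $0$, reducing every case either to the defining clause of $\simeq$ (which already handles $x,y\in X_{F}\cup\{0\}$) or to the equality $v_{1}\da_{X_{H}}=v_{2}\da_{X_{H}}$. I first dispose of the cases in which neither clock is a future clock: if $x,y\in X_{H}\cup\{0\}$ then $v_{1}(x)=v_{2}(x)$ and $v_{1}(y)=v_{2}(y)$, so $v_{1}(y)-v_{1}(x)=v_{2}(y)-v_{2}(x)$ and the two valuations agree on $\varphi:=(y-x\leqlt c)$; this also covers the mixed cases $x=0,\,y\in X_{H}$ and $x\in X_{H},\,y=0$. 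The case $x,y\in X_{F}\cup\{0\}$ is exactly the hypothesis of $\simeq$.

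The only real work is a genuine history/future diagonal, say $x\in X_{H}$ and $y\in X_{F}$; the symmetric case $x\in X_{F}$, $y\in X_{H}$ is handled the same way. I would branch on the weight $(\leqlt,c)$ using the explicit membership criteria in the remark following Definition~\ref{def:diagonal-semantics}. The trivial weight $(\leq,+\infty)$ is immediate. For $(\leq,-\infty)$ and $(<,+\infty)$, satisfaction of $\varphi$ reduces to a boolean combination of the predicates ``$v(x)=+\infty$'', ``$v(x)=-\infty$'', ``$v(y)=+\infty$'', ``$v(y)=-\infty$'', each of which is $\simeq$-invariant: the predicates about the history clock $x$ because $v_{1}$ and $v_{2}$ agree on $X_{H}$, and the predicates about the future clock $y$ because $y-0\leq-\infty$ is an $(X_{D},M)$-safe constraint (these are the invariance facts recorded just after Definition~\ref{defn:eq-diagonals}). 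Since $\varphi$ is $M$-bounded, the remaining weights have $c\in\mathbb{R}$ with $-M\le c\le M$. Here, if $x$ is infinite in one valuation it is infinite in the other and $\varphi$ holds in both (as $v(y)<+\infty=v(x)$); otherwise $v_{1}(x)=v_{2}(x)=:h\ge 0$, and I split on $y$: if $y=-\infty$ in one valuation then in both, and $\varphi$ holds in both (since $h+c$ is finite); if $-M\le v_{1}(y)\le 0$ then $v_{1}(y)=v_{2}(y)$ and the two sides coincide by direct computation; and if $v_{1}(y)<-M$ then also $v_{2}(y)<-M$, and in both cases $v_{i}(y)<-M\le h+c$, so $\varphi$ holds. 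The three alternatives ``$v(y)=-\infty$'', ``$v(y)\in[-M,0]$'', ``$v(y)\in(-\infty,-M)$'' are $\simeq$-invariant, so the overall truth value agrees for $v_1$ and $v_2$.

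I expect the finite-bound history/future case to be the only delicate point: one must observe that the ``large-magnitude'' regime of a future clock forces the diagonal constraint to one fixed truth value. Concretely, for $x\in X_{H},\,y\in X_{F}$ with $|c|\le M$, the inequalities $v(y)<-M$, $c\ge -M$ and $v(x)\ge 0$ give $v(y)-v(x)\le v(y)<-M\le c$, so $\varphi$ automatically holds; dually, in the symmetric case $x\in X_{F},\,y\in X_{H}$, the inequalities $v(x)<-M$, $c\le M$ and $v(y)\ge 0$ give $v(y)-v(x)\ge v(y)\ge 0>v(x)+c$, so $\varphi$ automatically fails (and when the future clock is in $[-M,0]$ the two valuations agree on it exactly). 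Everything else is routine bookkeeping with the extended algebra of Remark~\ref{rem:extended-addition} and the $\simeq$-invariance facts already established after Definition~\ref{defn:eq-diagonals}.
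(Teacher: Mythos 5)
Your proof is correct and takes essentially the same route as the paper's: dispose of the cases where both clocks are non-history or both are non-future via the definition of $\simeq$ and agreement on $X_H$, and for a mixed history/future constraint use that when the future-clock values differ they both lie in $(-\infty,-M)$, which forces the truth value of any $M$-bounded constraint since the history clock is $\geq 0$ or $+\infty$. The only difference is organizational (you branch on the weight and on the regime of the future clock, while the paper branches on whether the future-clock values coincide), not substantive.
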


\begin{proof}
  The claim follows from Definition~\ref{defn:eq-diagonals} for $(X_{D},M)$-safe constraints
  involving non-history clocks.  Since $v_1 \simeq v_2$ implies $v_1 \da_{X_H} = v_2
  \da_{X_H}$, the claim is easy to see for $(X_{D},M)$-safe constraints (in fact all
  safe constraints, not just $(X_{D},M)$-safe constraints) not involving future clocks.
  Finally, we consider $M$-bounded constraints involving a history clock $y$ and a future clock $x$.
  \begin{itemize}
    \item Suppose that $v_1(x) = v_2(x)$.
    In this case, it is easy to see that $v_1$ and $v_2$ satisfy the same constraints involving $y$ and $x$.  

    \item Suppose that $v_1(x) \neq v_2(x)$.
    Then, since $v_1 \simeq v_2$, this implies that $-\infty < v_1(x) < -M$ and $-\infty < v_2(x) < -M$.
    \begin{itemize}
      \item $y - x \leqlt c$.
      Suppose $v_1(y) = v_2(y) = \infty$.  
      Then, $v_{1}(y)-v_{1}(x)=\infty=v_{2}(y)-v_{2}(x)$.
      Otherwise, $0\leq v_1(y)=v_2(y)<\infty$. 
      Then, $M<v_{1}(y)-v_{1}(x)<\infty$ and 
      $M<v_{2}(y)-v_{2}(x)<\infty$. In both cases, we obtain
      $v_{1}\models y - x \leqlt c$ if and only if 
      $v_{2}\models y - x \leqlt c$.

      \item $x - y \leqlt c$.  We argue similarly, distinguishing two cases 
      depending on whether $v_1(y) = v_2(y)$ is finite or not.
      \qedhere
    \end{itemize}
  \end{itemize}
\end{proof}

We will now state a lemma which highlights an important property of future clocks in safely reachable \GTA\ zones - namely, that safely reachable zones are closed under $\simeq$-equivalence.  
The proof follows from the observation that the property is true in the initial zone, and is invariant under the zone operations.

\begin{lemma}\label{lem:FutureClocks-diagonals}
  For all safely reachable zones $Z$, if $v \in Z$ and $v \simeq v'$, then $v' \in Z$.
\end{lemma}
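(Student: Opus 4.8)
The plan is to prove this by induction on the length of the $(X_D,M)$-safe derivation of $Z$ from $Z_0$, i.e. on the number of safe zone operations applied. Formally, I would establish the stronger statement: every $(X_D,M)$-safely reachable zone $Z$ is closed under $\simeq$, where $\simeq$ is as in Definition~\ref{defn:eq-diagonals}. For the base case, $Z_0 = g_0 \wedge (X_F \le 0) \wedge (X_H \ge 0)$ together with (for the initial node) a time elapse; since $g_0$ sets each history clock to $0$ or $\infty$, the constraint on history clocks is fixed, and the only other constraints are $X_F \le 0$ and the non-diagonal bounds coming from $g_0$ on future clocks. If $v \in Z_0$ and $v \simeq v'$, then $v'$ agrees with $v$ on all history clocks and, by Lemma~\ref{lem:simeq-all-clocks}, satisfies the same $(X_D,M)$-safe constraints on all pairs of clocks; since every defining constraint of $Z_0$ is $(X_D,M)$-safe, $v' \in Z_0$. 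The same argument handles $\elapse{Z_0}$ once we know the inductive step covers time elapse.

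For the inductive step I would treat each of the four $(X_D,M)$-safe operations of Definition~\ref{def:safe-operations} separately, assuming closure of the predecessor zone $Z$ and proving closure of the result $Z'$. Take $v_1' \in Z'$ and $v_1' \simeq v_2'$; we must show $v_2' \in Z'$. (i) \emph{Guard intersection $Z' = Z \wedge g$ with $g$ $(X_D,M)$-safe:} here $v_1' \in Z$, so by the induction hypothesis $v_2' \in Z$; and since every atomic constraint of $g$ is $(X_D,M)$-safe, Lemma~\ref{lem:simeq-all-clocks} gives $v_2' \models g$, hence $v_2' \in Z'$. (ii) \emph{Time elapse $Z' = \elapse{Z}$:} write $v_1' = v + \delta$ with $v \in Z$ and $v+\delta \models X_F \le 0$; the subtle point is to produce, from $v_2' \simeq v+\delta$, a valuation $v_2 \in Z$ and a delay $\delta'$ with $v_2' = v_2 + \delta'$. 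I would set $\delta'$ to agree with the "amount of elapsing" recorded by $v_2'$ and let $v_2 = v_2' - \delta'$, then check $v_2 \simeq v$ using that the operation $\cdot + \delta$ moves all clocks uniformly, so membership in the $\simeq$-class of $v$ is preserved after subtracting the same delay; then $v_2 \in Z$ by induction and $v_2' \in \elapse{Z}$. (iii) and (iv) \emph{Reset/release:} for $[x]Z$ with $x$ a history clock or a future clock not in $X_D$, and for $[x](Z \wedge (x=c))$ with $x \in X_D$, $c \in \{0,-\infty\}$, I would take $v_1' \in Z'$, pull back along $[x]$ to some $v_1 \in Z$ (resp. $Z \wedge (x=c)$) with $v_1 \da_{X\setminus\{x\}} = v_1' \da_{X\setminus\{x\}}$, define $v_2$ from $v_2'$ by setting $v_2(x) = v_1(x)$ and keeping the other coordinates of $v_2'$, argue $v_2 \simeq v_1$, invoke the induction hypothesis to get $v_2 \in Z$ (resp. $Z \wedge (x=c)$), and conclude $v_2' \in [x]Z$ since $v_2'(x)$ is a legal released value.

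The main obstacle, as usual in such arguments, is the \emph{release of a diagonal clock} $x \in X_D$, which is exactly the case the $X_D$-safety condition is designed to control. Because $x$ participates in diagonal constraints among future clocks, a naive "pull back and re-release" can fail to respect $\simeq$: we must use that $x$ is checked to be $0$ or $-\infty$ before the release, so on $Z \wedge (x=c)$ the value of $x$ is pinned, the diagonal constraints involving $x$ are determined by the other future clocks, and after releasing $x$ the surviving constraints are only the non-diagonal bounds $0 \xra{(\le,0)} x$ and $x \xra{(\le,\infty)} 0$ of Lemma~\ref{lem:release-reset}, which are $(X_D,M)$-safe and hence respected by any $v_2' \simeq v_1'$. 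I expect the second obstacle to be the bookkeeping in the time-elapse case: one must verify that $v_2' \simeq v+\delta$ together with uniform elapsing forces the existence of a common $\delta'$ such that $v_2' - \delta'$ lands in the $\simeq$-class of $v$, which uses the observations following Definition~\ref{defn:eq-diagonals} about how $\simeq$ constrains the exact values (for clocks in $[-M,0]$), the $-\infty$-status, and the "large" region $(-\infty,-M)$ of each future clock, together with the fact that history-clock values are shared by $v$ and (after subtracting $\delta'$) by $v_2'$.
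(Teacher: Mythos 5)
Your plan is correct and follows essentially the same route as the paper's proof: an invariance argument over the safe zone operations, with the same pull-back constructions (resetting the changed clock to its pre-image value, subtracting the same delay $\delta$ in the time-elapse case), the same use of Lemma~\ref{lem:simeq-all-clocks} for guard intersection, and the same exploitation of the pinned value $c \in \{0,-\infty\}$ for releases of clocks in $X_D$. The one step you leave implicit --- that subtracting the common $\delta$ preserves $\simeq$ --- is precisely the verification the paper carries out, using that future-clock values are at most $0$ so the shifted constant stays $M$-bounded.
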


\begin{proof}
  We will prove that the statement of the lemma is an invariant over safely reachable zones.
  The property is true if $Z=\V$ is the set of all valuations.
  We now show that the property is invariant under all the safe zone operations given in
  Definition~\ref{def:safe-operations}.
  Notice that the initial zone is $Z_{0}=\elapse{\V\cap g_{0}}$ and $g_{0}$ is 
  safe.
  Assume that $Z$ is a zone that satisfies the property of the lemma.
  \begin{description}
    \item[Guard intersection.] Let $g$ be a guard, which is in general a
    conjunction of (possibly diagonal) $(X_{D},M)$-safe constraints.
    We get directly from Lemma~\ref{lem:simeq-all-clocks} that the property continues to
    hold in the zone $Z \land g$.
    
    \item[Release of a clock $x\in X_{F}\setminus X_{D}$.]  Let $v\in [x]Z$ and $v'\simeq v$.  
    We need to show that $v'\in [x]Z$.  By definition of the release operation, we have
    $v=u[x\mapsto\beta]$ for some $u\in Z$ and $-\infty \le \beta \le 0$. 
    Let $u'=v'[x\mapsto u(x)]$. 
    Since $Z$ is closed under $\simeq$-equivalence (by assumption), it suffices to show
    that $u'\simeq u$.  We then have $u'\in Z$ and $v'=u'[x\mapsto v'(x)]$, which implies
    $v'\in[x]Z$.
    
    First, we have $u\da_{X_{H}}=v\da_{X_{H}}=v'\da_{X_{H}}=u'\da_{X_{H}}$.
    Next, consider a safe constraint $y-z\leqlt c$ with $y,z\in X_{F}\cup\{0\}$.
    \begin{itemize}
      \item Suppose that $y,z \neq x$. 
      We have $u(y)-u(z)=v(y)-v(z)$ and $u'(y)-u'(z)=v'(y)-v'(z)$. 
      Using $v\simeq v'$, we deduce that
      $u\models y-z\leqlt c$ iff $v\models y-z\leqlt c$ iff
      $v'\models y-z\leqlt c$ iff $u'\models y-z\leqlt c$.

      \item Suppose $y=x\neq z$ (resp.\ $y\neq x=z$). 
      Since the constraint is $X_{D}$-safe and $x\in X_{F}\setminus X_{D}$ we deduce that 
      $z=0$ (resp.\ $y=0$). We have $u(x)=u'(x)$. We deduce that 
      $u\models y-z\leqlt c$ iff $u'\models y-z\leqlt c$.
    \end{itemize}
    
    \item[Release of a clock $x\in X_{D}$.]  Let $v\in[x](Z\wedge(x=a))$ with $a=0$ or 
    $a=-\infty$ and let $v'\simeq v$.  We need to show that $v'\in[x](Z\wedge(x=a))$.  
    Note that we have $u=v[x\mapsto a]\in Z$. Let $u'=v'[x\mapsto a]$.
    Since $Z$ is closed under $\simeq$-equivalence (by assumption), it suffices to show
    that $u'\simeq u$.  We then have $u'\in Z$ and we get $v'\in[x](Z\wedge(x=a))$.
    
    First, we have $u\da_{X_{H}}=v\da_{X_{H}}=v'\da_{X_{H}}=u'\da_{X_{H}}$.
    Next, consider a $M$-bounded constraint $y-z\leqlt c$ with $y,z\in X_{F}\cup\{0\}$.
    We proceed as above if $y,z \neq x$, or if $y=x$ and $z=0$, or if $y=0$ and $z=x$.
    \begin{itemize}
      \item Suppose $y\neq 0$ and $z=x$.
      We have $u(z)=u'(z)=a$, $u(y)=v(y)$ and $u'(y)=v'(y)$.
      We deduce that $u\models y-z\leqlt c$ iff $v(y)-a\leqlt c$ and
      $u'\models y-z\leqlt c$ iff $v'(y)-a\leqlt c$.
      Finally, we have $v(y)-a\leqlt c$ iff $v'(y)-a\leqlt c$.
      This is clear when $a=-\infty$ and it follows from $v\simeq v'$ when $a=0$ 
      ($y-0\leqlt c$ is a safe constraint).

      \item We proceed similarly when $y=x$ and $z\neq 0$.
      We have $u\models y-z\leqlt c$ iff $a-v(z)\leqlt c$ and
      $u'\models y-z\leqlt c$ iff $a-v'(z)\leqlt c$.
      Notice that $v(z)=-\infty$ iff $v'(z)=-\infty$ since $v\simeq v'$ and 
      $z\leq-\infty$ is a safe constraint. We deduce that $a-v(z)\leqlt c$ iff
      $a-v'(z)\leqlt c$.
    \end{itemize}
        
    \item[Reset.] The reset operation of a history clock $x$ takes each valuation in $Z$ and sets $x$ to $0$. 
    
    Let $v \in [x]Z$. 
    This implies that there exists $u \in Z$ such that $v = [x]u$.
    Then, $v(x) = 0$, and $u(y) = v(y)$ for all $y \neq x$. 
    
    Let $v' \simeq v$.  We need to show that $v'\in [x]Z$.
    Notice that $v'(x)=v(x)=0$.
    Let $u'=v'[x\mapsto u(x)]$. We have $v'=[x]u'$.    
    Since $Z$ is closed under $\simeq$-equivalence (by assumption), it suffices to show
    that $u' \simeq u$. We then have $u' \in Z$ and $v' = [x]u'\in [x]Z$.
    
    Since $v\da_{X_{H}}=v'\da_{X_{H}}$, we first get $u\da_{X_{H}}=u'\da_{X_{H}}$.
    It remains to show that $u$ satisfies an $(X_{D},M)$-safe constraint $y-z \leqlt c$ with
    $y,z\in X_{F}\cup\{0\}$ if and only if $u'$ also satisfies it.  
    Since $y,z\neq x$, we have $u(y)-u(z)=v(y)-v(z)$ and
    $u'(y)-u'(z)=v'(y)-v'(z)$. Using $v\simeq v'$, we deduce that
    $u\models y-z\leqlt c$ iff $v\models y-z\leqlt c$ iff
    $v'\models y-z\leqlt c$ iff $u'\models y-z\leqlt c$.
    
    \item[Time elapse.]  Time elapse increases the value of all clocks in $X$ in a
    synchronous manner, without affecting the differences between clocks in $X$.  We will
    now show that our property is not affected by time elapse.
    
    Suppose that $v \in \elapse{Z}$, i.e., $v = u + \delta$ for some $u \in Z$ and
    $\delta\geq0$.  Note that this means $v(z)=(u+\delta)(z) \le 0$ for all
    future clocks $z$.
    Let $v' \simeq v$. Take $u' = v' - \delta$. We show that $u' \in Z$, which implies
    $v'=u'+\delta\in\elapse{Z}$.
    
    Since $Z$ is closed under $\simeq$-equivalence (by assumption), it suffices to show
    that $u' \simeq u$.  
    Since $v\da_{X_{H}}=v'\da_{X_{H}}$, we first get $u\da_{X_{H}}=u'\da_{X_{H}}$.
    We consider the possible cases for a safe constraint $y - z \leqlt c$ with 
    $y,z\in X_{F}\cup\{0\}$.
    \begin{itemize}
      \item If $y,z\in X_{F}$. 
      We have $u(y)-u(z)=v(y)-v(z)$ and $u'(y)-u'(z)=v'(y)-v'(z)$. 
      Using $v\simeq v'$, we deduce that
      $u\models y-z\leqlt c$ iff $v\models y-z\leqlt c$ iff
      $v'\models y-z\leqlt c$ iff $u'\models y-z\leqlt c$.
      
      \item $y=0\neq z$ (the case where $z=0\neq y$ follows by a similar argument.)
      
      Suppose that $u\models 0-z\leqlt c$, i.e., $-u(z)\leqlt c$.
      Since $u=v-\delta$, we get $-v(z)\leqlt c-\delta$. Recall that $v(z)\leq 0$.
      Hence, we have $0\leq c-\delta \leq c \leq M$. 
      Further, since $v \simeq v'$ and $0-z\leqlt c-\delta$ is a safe constraints,
      we get $-v'(z)\leqlt c-\delta$. Using $v'=u'+\delta$, we get
      $-u'(z)\leqlt c$, i.e., $u'\models 0-z\leqlt c$.      
      \qedhere
    \end{itemize}
  \end{description}
\end{proof}

\begin{remark}
  The proof crucially uses the fact that
  $\A$ is $X_{D}$-safe.  For the case of releasing a clock $x\in X_{F}\setminus X_{D}$, we
  use the fact that a diagonal constraint involving $x$ may not use another future clock. 
  For the case of releasing a clock $x\in X_{D}$, we use the fact that the value of the 
  clock must be $0$ or $-\infty$ just before the release. 
  
  We remark that the claim does not hold for all zones (which could be reached by releasing a clock in $X_{D}$ when its value is not necessarily $0$ or $-\infty$).
  As a non-example, consider Figure~\ref{fig:an-bn}.  Here, $X_D = \{y, z\}$ and $M = 1$. 
  After two iterations of $a$, the zone $Z_2$ reached is $x = 0 \wedge y = z = -2$.  
  Pick $v: x = 0, y = z = -2$ and $v': x = 0, y = z = -3$.  Notice that both of them satisfy the same set of $(X_D, M)$-safe constraints, but $v \in Z_2$, $v' \notin Z_2$.  
  Indeed, the automaton is not $X_D$-safe since $y$ and $z$ are released arbitrarily.
\end{remark}

\begin{corollary}~\label{cor:FutureClocks-0-Diagonals}
  Let $Z$ be a $(X_{D},M)$-safely reachable zone and let $v \in Z$ be a valuation.
  Let $n=\max(1,|X_{D}|)$.
  \begin{enumerate}
    \item Let $x\in X_{F}\setminus X_{D}$.  If $-\infty < v({x}) < -M$ then, for every
    $-\infty < \alpha < -M$, then the valuation $v'=v[x\mapsto\alpha]$ belongs to $Z$.

    \item Let $x,y\in X_{D}\cup\{0\}$, if $-\infty < v({x}) - v({y}) < -nM$ then,
    for every $-\infty < \alpha < -nM$, we have a valuation $v'\in Z$ with 
    $v'({x}) - v'({y}) = \alpha$.
  
    \item Let $x,y\in X_{F}\cup\{0\}$, if $-\infty < v({x}) - v({y}) < -n M$ then,
    for every $-\infty < \alpha < -n M$, we have a valuation $v'\in Z$ with 
    $v'({x})-v'({y})=\alpha$.
  \end{enumerate}
\end{corollary}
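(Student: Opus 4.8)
The plan is to derive all three items from Lemma~\ref{lem:FutureClocks-diagonals} (safely reachable zones are closed under $\simeq$) together with the convexity of zones. Throughout, I use that a $(X_D,M)$-safe constraint that is a diagonal between two future clocks must have both endpoints in $X_D$ (Definition~\ref{def:safe-program}), and that $\simeq$ (Definition~\ref{defn:eq-diagonals}) is decided by the history-clock values and by the $M$-bounded constraints whose endpoints lie in $X_F\cup\{0\}$.

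\emph{Item (1).} I would simply check that $v':=v[x\mapsto\alpha]$ satisfies $v'\simeq v$. The two valuations agree on all history clocks, and since $x\in X_F\setminus X_D$ the only $(X_D,M)$-safe atomic constraints on $x$ with the other endpoint in $X_F\cup\{0\}$ are of the form $x-0\leqlt c$ or $0-x\leqlt c$ with $M$-bounded $c$; as $v(x)$ and $\alpha$ both lie in $(-\infty,-M)$, they satisfy exactly the same such constraints, so $v'\simeq v$ and Lemma~\ref{lem:FutureClocks-diagonals} gives $v'\in Z$. The identical single-clock move $v[x\mapsto\alpha+v(y)]$ (note $\alpha+v(y)\in(-\infty,-M)$ since $\alpha<-M$ and $v(y)\le 0$) also settles items (2)--(3) whenever $x$ has no safe diagonal partner, i.e.\ when $x\in X_F\setminus X_D$ or $|X_D|\le 1$. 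So from now on I assume $x\in X_D$ and $n=|X_D|\ge 2$, and, as $v(x)-v(y)\in(-\infty,-nM)$, that $v(x),v(y)$ are finite with $v(x)<v(y)\le 0$.

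\emph{A block-shift operation.} The core step is: given any $w\in Z$ with $-\infty<w(x)-w(y)=:d<-nM$, list the finite values taken by the clocks of $X_D\cup\{0\}$ lying in $[w(x),w(y)]$ (use $[w(x),0]$ instead if $y\notin X_D$, so that $y$ is never among the listed clocks). There are at most $n+1$ such values, hence at most $n$ consecutive gaps whose lengths sum to $|d|>nM$, so some gap $(p,p')$ has $p'-p>M$; then $p<p'-M\le -M$. Let $L$ be the set of future clocks with finite value $\le p$, so $x\in L$, $y\notin L$, $0\notin L$. For any $\delta<p'-p-M$ put $w_\delta(z)=w(z)+\delta$ for $z\in L$ and $w_\delta(z)=w(z)$ otherwise. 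I would then verify $w_\delta\simeq w$ by a case analysis on $(X_D,M)$-safe constraints $z-z'\leqlt c$ with $z,z'\in X_F\cup\{0\}$: constraints internal to $L$, or internal to its complement, are untouched, and for a constraint crossing the boundary the difference $w_\delta(z)-w_\delta(z')$ stays $>M$ in absolute value (using $p<-M$ and the bound on $\delta$), so its truth value --- which only depends on $M$-bounded weights --- is preserved; when $\delta<0$ the gap merely widens, so no upper bound on $\delta$ is needed. Hence $w_\delta\in Z$ by Lemma~\ref{lem:FutureClocks-diagonals}, and $w_\delta(x)-w_\delta(y)=d+\delta$.

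\emph{Conclusion of (2) and (3).} By convexity, $I:=\{v'(x)-v'(y)\mid v'\in Z\}$ is an interval and $d\in I$. Letting $\delta\to-\infty$ in the block-shift gives $\inf I=-\infty$. For $\sup I\ge -nM$ I would argue by contradiction: assume $\sup I=s<-nM$ and pick $w\in Z$ with $d\in I$ close enough to $s$ (from below); the largest of the $\le n$ gaps has length $\ge|d|/n$, so the block-shift attains differences approaching $d+|d|/n-M=d\,\tfrac{n-1}{n}-M$, and a short computation (the quantity $s-\tfrac{n}{n-1}(s+M)=-\tfrac{s+nM}{n-1}$ is strictly positive because $s<-nM$ and $n\ge 2$) shows $d\,\tfrac{n-1}{n}-M>s$, so $I$ contains a value exceeding $s$ --- a contradiction. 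An interval with $\inf I=-\infty$ and $\sup I\ge -nM$ contains $(-\infty,-nM)$, which is exactly the assertion. I expect the main obstacle to be precisely this last estimate --- pinning down how wide the largest intermediate gap is guaranteed to be, which is what forces the factor $n$ in the hypothesis --- together with the routine but fiddly bookkeeping of strict versus non-strict inequalities and $\pm\infty$ values in the block-shift case analysis, and the need, in item (3), to make sure the choice of range still yields at most $n$ gaps.
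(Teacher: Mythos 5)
Your overall strategy is the same as the paper's in its key ingredients (closure of safely reachable zones under $\simeq$, Lemma~\ref{lem:FutureClocks-diagonals}, and a counting argument on the at most $n$ gaps determined by the clocks of $X_D\cup\{0\}$), but you finish differently: the paper directly constructs, for each target $\alpha$, an equivalent valuation realizing it (its conditions (a)--(c) fix all clocks of $X_F\setminus X_D$, shift rigidly the clocks below $v(x)$, and redistribute the intermediate gaps), and it handles the mixed case of item (3) by reducing to item (2) applied to the pair $(x,0)$, exploiting that the construction leaves $v(y)$ untouched; you instead iterate a single-gap block shift and conclude by an interval / $\inf$-$\sup$ argument. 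Item (1), your single-clock reductions, the $\simeq$-verification of the block shift, and the arithmetic $s-\tfrac{n}{n-1}(s+M)>0$ for $s<-nM$ are all correct, and the interval property of $\{v'(x)-v'(y)\mid v'\in Z\}$ can be justified cleanly via Lemma~\ref{lem:fixing-values-distance-graph} (plain convexity needs a word of care because valuations may take infinite values).

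There is, however, a genuine flaw in your treatment of item (3) when $x\in X_D$ and $y\in X_F\setminus X_D$: the assertion ``$y\notin L$'' does not follow from your construction. Excluding $y$ from the \emph{listed} clocks does not exclude it from $L$, since $L$ is defined by the value threshold $w(z)\le p$, and $w(y)$, not being a listed value, may lie at or below the lower end $p$ of the gap you pick. With the ``largest gap'' choice your $\sup$-estimate relies on, this really happens: take $X_D=\{x,z\}$, $M=1$, $n=2$, $w(x)=-5.5$, $w(z)=-3$, $w(y)=-3.2$; then $d=w(x)-w(y)=-2.3<-nM$, the listed values are $-5.5,-3,0$, the largest gap is $(-3,0)$ with $p=-3\ge w(y)$, so $y\in L$, the shift moves $x$ and $y$ together, $w_\delta(x)-w_\delta(y)=d$ is unchanged, and the bound $\sup I\ge d\tfrac{n-1}{n}-M$ is not established. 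The repair stays inside your framework: restrict to gaps whose lower end is strictly below $w(y)$; their lower ends are values of $X_D$-clocks, so there are at most $n$ of them, their total length is at least $w(y)-w(x)=|d|>nM$, hence the largest such gap has length $\ge|d|/n>M$ and guarantees $y\notin L$, after which your estimate goes through. (Alternatively, do as the paper does: prove item (2) for the pair $(x,0)$ with a construction that fixes $X_F\setminus X_D$, and transfer to item (3) with $\alpha'=\alpha+v(y)$.) Your closing caveat about ``at most $n$ gaps in item (3)'' points at the right region but misidentifies the issue: the gap count is fine; what must be controlled is that $y$ is not swept into the shifted block.
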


\begin{proof}
  \begin{enumerate}
    \item  We show that $v'\simeq v$, and we deduce by Lemma~\ref{lem:FutureClocks-diagonals} 
    that $v'\in Z$. So we have to show that $v,v'$ satisfy the same $(X_{D},M)$-safe constraints.
    This is clear for a constraint which does not involve clock $x$.
    Since $x\in X_{F}\setminus X_{D}$, a safe constraint involving clock $x$ must 
    be of the form $x\leqlt c$ or $-x\leqlt c$. 
    We conclude easily since the constraint is $M$-bounded and $-\infty<v(x),v'(x)<-M$.
  
    \item Let $x,y\in X_{D}\cup\{0\}$ be such that $-\infty < v({x}) - v({y}) < -nM$.
    We have $v(x)\neq-\infty\neq v(y)$. Hence $-\infty<v(x)<v(y)-nM<v(y)\leq 0$. 
    We first give a sufficient condition for a valuation $v'$ to be equivalent to $v$.
    Consider the following conditions on a valuation $v'$:
    \begin{enumerate}
      \item  $v'\da_{X_{H}}=v\da_{X_{H}}$ and
      $v'\da_{X_{F}\setminus X_{D}}=v\da_{X_{F}\setminus X_{D}}$,
      
      \item  for all $z\in X_{D}$, we have 
      \begin{itemize}
        \item $v'(z)=v(z)$ if $v(z)=-\infty$ or $v(y)\leq v(z)\leq 0$, and

        \item $v'(z)-v'(x)=v(z)-v(x)$ if $-\infty<v(z)\leq v(x)$,
      \end{itemize}
      
      \item  for all $x',y'\in X_{D}$ such that $v(x)\leq v(x')\leq v(y')\leq v(y)$, 
      we have $v'(x')-v'(y')=v(x')-v(y')$ or both $-\infty<v'(x')-v'(y')<-M$ 
      and $-\infty<v(x')-v(y')<-M$.
    \end{enumerate}    
    It is not hard to check that if a valuation $v'$ satisfies the above conditions then
    $v'\simeq v$. 
    We can also check that there is a valuation $v'$ satisfying the conditions 
    above and such that $v'({x}) - v'({y}) = \alpha$. 
    The property follows.
    
    \item This follows from (2) if $x,y\in X_{D}\cup\{0\}$. 
    We assume below that $x\in X_{F}\setminus X_{D}$ or $y\in X_{F}\setminus X_{D}$.
    As above, we have $-\infty<v(x)<v(y)-n M<v(y)\leq 0$.
    
    Assume that $x\in X_{F}\setminus X_{D}$. Then $-\infty<v(x)<-M$. We apply (1) with 
    $\alpha'=\alpha+v(y)$. We get $v'=v[x\mapsto\alpha']\in Z$ and $v'({x})-v'({y})=\alpha$.
    
    Finally, assume that $x\in X_{D}$ and $y\in X_{F}\setminus X_{D}$.
    We have $-\infty<v(x)-0<-nM$. We apply (2) to the pair of clocks $x,0$ and 
    $\alpha'=\alpha+v(y)$. We get $v'\in Z$ with $v'({x})-0=\alpha+v(y)$.
    Notice that from the construction above (2.a) we have $v'(y)=v(y)$.
    Therefore, $v'({x})-v'({y})=\alpha$.
    \qedhere%
  \end{enumerate}
\end{proof}

\begin{remark}
  Note that in the second and third parts of Corollary~\ref{cor:FutureClocks-0-Diagonals}, we do not maintain the valuation of all the
  other clocks while changing the particular difference that we are interested in. This is in contrast with the first part, where we change the value of the future clock $x \in X_{F}\setminus X_{D}$, while keeping the valuation of the other clocks unchanged.
\end{remark}

\subsection{The dagger lemma: from finiteness to boundedness in safely reachable zones}

We will now use Corollary~\ref{cor:FutureClocks-0-Diagonals} to 
prove the main invariants
satisfied by the zones obtained during the enumeration.  Essentially, the weights
of edges involving non-history clocks come from a finite set which depends on the number
of future clocks in $X_{D}$ and the maximum constant $M$ of the automaton.  This also
induces an invariant on the constraint between a history clock and a future clock.

Before proving Lemma~\ref{lem:dagger-diagonals}, we first state two technical lemmas from \cite{eca-simulations-arxiv}.

\begin{lemma}[\cite{eca-simulations-arxiv}]\label{lem:weight-properties}
  \begin{enumerate}
    \item  Let $(\leqlt,c)$ be a weight and $\alpha\in\overline{\mathbb{R}}$. Then,
    \begin{itemize}
      \item  $\alpha\leqlt c$ iff $(\leq,\alpha)\leq(\leqlt,c)$ iff 
      $(\leq,0)\leq(\leq,-\alpha)+(\leqlt,c)$,
      
      \item $\alpha\not\leqlt c$ iff $(\leqlt,c)<(\leq,\alpha)$ iff
      $(\leq,-\alpha)+(\leqlt,c)<(\leq,0)$ iff $(\leq,-\alpha)+(\leqlt,c)\leq(<,0)$.
    \end{itemize}
    
    \item  Let $(\leqlt,c),(\leqlt',c'),(\leqlt'',c'')$ be weights with
    $(\leq,0)\leq(\leqlt,c)+(\leqlt',c')$.
    Then, there exists $\alpha\in\overline{\mathbb{R}}$ such that $\alpha\leqlt c$ and
    $-\alpha\leqlt' c'$.
    If in addition we have $(\leqlt'',c'')<(\leqlt,c)$ then there exists such an 
    $\alpha$ with $\alpha\not\leqlt'' c''$.
  \end{enumerate}
\end{lemma}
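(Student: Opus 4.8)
The plan is to prove both parts by unfolding Definition~\ref{def:order-sum-weights} (the order and the sum on weights) together with the conventions for extended addition on $\overline{\mathbb{R}}$, and by a case analysis according to whether the constants $c,c',c''$ are finite or lie in $\{-\infty,+\infty\}$. For part~1, I would first isolate the purely order-theoretic fact $\alpha\leqlt c \iff (\leq,\alpha)\leq(\leqlt,c)$. By totality of the order on weights this is equivalent to $(\leqlt,c)\not<(\leq,\alpha)$; unfolding the definition of $<$ on weights, $(\leqlt,c)<(\leq,\alpha)$ means $c<\alpha$, or $c=\alpha$ with $\leqlt$ being $<$, and negating this (using totality of $\leq$ on $\overline{\mathbb{R}}$) gives exactly $\alpha<c$, or $\alpha=c$ with $\leqlt$ being $\leq$, which is the meaning of $\alpha\leqlt c$. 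Only the boundary cases $\alpha=c\in\{-\infty,+\infty\}$ require a direct check.

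Still in part~1, for $(\leq,\alpha)\leq(\leqlt,c)\iff(\leq,0)\leq(\leq,-\alpha)+(\leqlt,c)$ I would split on whether $\alpha$ is finite. For finite $\alpha$, $(\leq,-\alpha)$ is a two-sided inverse of $(\leq,\alpha)$ for the weight sum (since $(\leq,\alpha)+(\leq,-\alpha)=(\leq,0)$, the neutral element), so adding $(\leq,-\alpha)$, resp.\ $(\leq,\alpha)$, to both sides and using associativity, commutativity and monotonicity of $+$ on weights yields the two implications; here I would first record, as a small auxiliary fact, that $+$ is monotone on weights (a routine case analysis on the second summand). For $\alpha\in\{-\infty,+\infty\}$ the sum $(\leq,-\alpha)+(\leqlt,c)$ collapses by the absorbing rules of Definition~\ref{def:order-sum-weights}, so the claim reduces to a finite computation in the two remaining sub-cases. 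The ``$\not\leqlt$'' chain then follows formally: $\alpha\not\leqlt c$ iff $(\leqlt,c)<(\leq,\alpha)$ (totality), iff $(\leq,-\alpha)+(\leqlt,c)<(\leq,0)$ (negating the previous equivalence), iff $(\leq,-\alpha)+(\leqlt,c)\leq(<,0)$ (since $(<,0)$ is the immediate predecessor of $(\leq,0)$ in the weight order, so $w<(\leq,0)$ iff $w\leq(<,0)$ for every weight $w$).

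For part~2 the guiding picture is that $(\leq,0)\leq(\leqlt,c)+(\leqlt',c')$ says the constraints ``$y-x\leqlt c$'' and ``$x-y\leqlt' c'$'' form no negative cycle, hence admit a common solution; taking $x$ fixed and $\alpha=y-x$ gives the desired witness. For 2(a) I would do a case analysis on $c,c'$: if both are finite then $(\leqlt,c)+(\leqlt',c')=(\leqlt''',c+c')$ with $\leqlt'''=\leq$ iff $\leqlt=\leqlt'=\leq$, the hypothesis amounts to $c+c'>0$ or ($c+c'=0$ and $\leqlt=\leqlt'=\leq$), and I take any $\alpha$ in the nonempty interval $(-c',c)$ in the first case (part~1 then reads off $\alpha\leqlt c$ and $-\alpha\leqlt' c'$) and $\alpha=c=-c'$ in the second; if $c$ or $c'$ is $+\infty$ or $-\infty$ the sum is forced by the absorbing rules into one of $(\leq,\pm\infty),(<,\pm\infty)$, the hypothesis $(\leq,0)\leq\cdot$ rules out exactly the configurations admitting no witness, and in each surviving sub-case a concrete (possibly infinite) $\alpha$ is exhibited. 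For 2(b) I would refine this by choosing $\alpha$ as large as $\alpha\leqlt c$ and the lower bound coming from $-\alpha\leqlt' c'$ permit: since $(\leqlt'',c'')<(\leqlt,c)$ leaves room strictly above $c''$, and (by the hypothesis together with part~1) the lower bound forced by $-\alpha\leqlt' c'$ is itself strictly below $(\leqlt,c)$, the feasible set for $\alpha$ still meets $\{\,\alpha : (\leqlt'',c'')<(\leq,\alpha)\,\}$, i.e.\ contains an $\alpha$ with $\alpha\not\leqlt'' c''$; in the finite case this is picking a point near the right endpoint of the interval, and in the infinite cases the already-forced $\alpha$ is checked to satisfy it automatically.

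The main obstacle is not conceptual but the bookkeeping: there is no uniform slick argument because of the numerous degenerate combinations with $\pm\infty$, and one must repeatedly verify that the absorbing rules for the weight sum mesh with the hypothesis — in particular that $(\leq,0)\leq(\leqlt,c)+(\leqlt',c')$ fails in precisely those infinite configurations where a witness $\alpha$ cannot exist, and that monotonicity of $+$ behaves well despite failing to be strict at infinite values. Once those checks are in place, both parts reduce to routine applications of part~1 and of the order/sum definitions.
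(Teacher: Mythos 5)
Your proposal is correct: both parts do reduce to unfolding Definition~\ref{def:order-sum-weights} together with a finite/infinite case split, and the delicate points you flag (monotonicity of the weight sum when the added weight is finite, the identity $w<(\leq,0)\iff w\leq(<,0)$ because both bounds share the constant $0$, and the forced witnesses such as $\alpha=-\infty$ when $(\leqlt,c)=(\leq,-\infty)$, where the hypothesis then forces $(\leqlt',c')=(\leq,\infty)$ and $(\leqlt'',c'')=(<,-\infty)$) all check out. Note that this paper does not actually prove the lemma—it is imported verbatim from the cited ECA-simulations reference—so there is no in-paper argument to diverge from; your direct verification is exactly the standard route such a proof takes.
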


\begin{lemma}[\cite{eca-simulations-arxiv}]\label{lem:fixing-values-distance-graph}
  Let $\GG=\graph{Z}$ for a non-empty \GTA\ zone $Z$, and let $x,y\in X\cup\{0\}$ be a
  pair of distinct nodes and $\alpha\in\overline{\mathbb{R}}$.  There is a valuation
  $v\in\sem{\GG}$ with $v(y)-v(x)=\alpha$ if and only if
  \begin{enumerate}
    \item $(\leq,\alpha)\leq\GG_{xy}$ and $(\leq,-\alpha)\leq\GG_{yx}$, and
  
    \item if $x,y\in X$ and $\alpha\in\mathbb{R}$ is finite then the weights 
    $\GG_{x0},\GG_{0x},\GG_{y0},\GG_{0y}$ are all different from $(\leq,-\infty)$, and
  
    \item if $x,y\in X$ and $\alpha=-\infty$ then $\GG_{0x}\neq(\leq,-\infty)\neq\GG_{y0}$.
  \end{enumerate}
\end{lemma}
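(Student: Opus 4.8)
The plan is to prove the two implications separately; the \emph{only-if} direction is short, and the \emph{if} direction is where the real work lies.

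\textbf{Only-if direction.} I would take $v\in\sem{\GG}$ with $v(y)-v(x)=\alpha$ and write $\GG_{xy}=(\leqlt,c)$, $\GG_{yx}=(\leqlt',c')$. Since $v$ respects the edges of $\GG$, $\alpha=v(y)-v(x)\leqlt c$ and $v(x)-v(y)\leqlt'c'$; because $v(x)-v(y)$ coincides with $-\alpha$ except when $v(x)=v(y)\in\{-\infty,+\infty\}$ (a case in which $\alpha=+\infty$, $-\alpha=-\infty$, and the second inequality holds trivially), we also get $-\alpha\leqlt'c'$, so condition~1 follows from Lemma~\ref{lem:weight-properties}(1). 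For conditions~2 and~3 I would use the reading of $(\leq,-\infty)$-weighted edges recalled after Definition~\ref{def:diagonal-semantics}: an edge $\GG_{x0}=(\leq,-\infty)$ forces $v(x)=+\infty$ for every $v\in\sem{\GG}$, while $\GG_{0x}=(\leq,-\infty)$ forces $v(x)=-\infty$ (symmetrically for $y$). It then suffices to note that $v(y)-v(x)$ can be a finite $\alpha$ only if both $v(x)$ and $v(y)$ are finite, and can be $-\infty$ only if $v(x)\neq-\infty$ and $v(y)\neq+\infty$.

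\textbf{If direction.} Assume conditions 1--3. By condition~1 the weights $(\leq,\alpha)$ and $(\leq,-\alpha)$ are both $\leq$ the corresponding weights of $\GG$ and neither equals $(<,-\infty)$, so I may form the distance graph $\GG^{\alpha}$ obtained from $\GG$ by replacing the weight of $x\to y$ with $(\leq,\alpha)$ and of $y\to x$ with $(\leq,-\alpha)$; thus $\sem{\GG^{\alpha}}=Z\cap\{v\mid v\models y-x\leq\alpha \wedge x-y\leq-\alpha\}$, and the same infinite-value case check as above shows that every valuation of $\sem{\GG^{\alpha}}$ has $v(y)-v(x)=\alpha$. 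Hence the task reduces to proving $\sem{\GG^{\alpha}}\neq\emptyset$, which by Lemma~\ref{lem:standard-form} amounts to showing that the standardization $\widetilde{\GG}$ of $\GG^{\alpha}$ has no negative cycle. Since $\GG=\graph{Z}$ is normalized and $Z\neq\emptyset$, $\GG$ has no negative cycle, so any negative cycle of $\widetilde{\GG}$ must traverse one of the two replaced diagonal edges or an edge incident to $0$ that standardization strengthened --- and such a strengthening is itself triggered only by the new diagonal edges (for instance a new edge $x\to y$ of weight $\neq(\leq,\infty)$ forces $\GG^{\alpha}_{x0}$ down to $(<,\infty)$ when $x$ is a future clock). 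I would then go through these finitely many cases, rerouting each candidate cycle back through $\GG$ and bounding its weight below by $(\leq,0)$: the two new edges contribute at least $(\leq,\alpha)+(\leq,-\alpha)$, which is $\geq(\leq,0)$ for every $\alpha\in\overline{\mathbb{R}}$, the remainder of the cycle contributes at least the corresponding $\GG$-weight by the triangle inequalities of the normalized $\GG$, and conditions~2--3 are exactly what excludes the degenerate subcases (those where $\alpha\in\{-\infty,+\infty\}$, or where a history clock would be pushed to $+\infty$ / a future clock to $-\infty$), with the weight-arithmetic identities of Lemma~\ref{lem:weight-properties} supplying the computations.

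\textbf{Main obstacle.} The hard part, as this makes clear, is the negative-cycle analysis in the \emph{if} direction: unlike the classical finite-weight DBM argument, one cannot simply cancel the freshly added constraint against its reverse, because the extended weight algebra is not a group and standardization must be invoked; keeping precise track of which edges standardization alters --- and matching each possible negative cycle to the exact hypothesis (condition~2 or condition~3, or the tightness expressed by condition~1) that forbids it --- is the delicate bookkeeping this proof requires. The phenomenon illustrated in Example~\ref{ex:standardization} is precisely what conditions~2 and~3 are designed to guard against.
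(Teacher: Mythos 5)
Your statement is imported by the paper from~\cite{eca-simulations-arxiv} without proof, so I can only assess your argument on its own terms. The only-if direction is fine, and for $\alpha$ finite and $\alpha=-\infty$ your if-direction strategy is workable: there the set $\sem{\GG^{\alpha}}$ coincides exactly with the witnesses, and the negative-cycle analysis on the standardization does close using precisely conditions 2--3 (the only dangerous cycles are of the form $0\to z\to 0$ with one edge lowered to $(<,\infty)$ by standardization and the other equal to $(\leq,-\infty)$, which those conditions forbid, while condition~1 together with normalization of $\GG$ handles the triangles through the new edges).

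The genuine gap is the case $\alpha=+\infty$ with $x,y\in X$. There, $v(y)-v(x)=+\infty$ is a \emph{disjunction} ($v(y)=+\infty$ or $v(x)=-\infty$), not the conjunction $y-x\leq+\infty\wedge x-y\leq-\infty$: in the extended algebra, if $v(x)=v(y)\in\{-\infty,+\infty\}$ then $v(x)-v(y)=+\infty\not\leq-\infty$, so such witnesses lie outside $\sem{\GG^{\alpha}}$. Your reduction ``the task reduces to proving $\sem{\GG^{\alpha}}\neq\emptyset$'' therefore fails. Concretely, take $x,y\in X_{F}$ and the nonempty zone $Z$ given by $x\leq-\infty\wedge y\leq-\infty$. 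Every $v\in Z$ has $v(y)-v(x)=+\infty$, and conditions 1--3 hold (conditions 2 and 3 are vacuous, and the canonical graph has $\GG_{xy}=(\leq,\infty)$), so the lemma's conclusion is true; yet $\sem{\GG^{+\infty}}=\emptyset$, and indeed the standardization of your $\GG^{\alpha}$ contains the negative cycle $y\to 0\to y$ with weights $(<,\infty)$ (forced by the new edge $y\to x$ of weight $(\leq,-\infty)$, since $y$ is a future clock) and $(\leq,-\infty)$. Hence no bookkeeping of cases can exclude all negative cycles under hypotheses 1--3 alone, and the proof cannot be completed as proposed: $\alpha=+\infty$ needs a separate argument, e.g.\ showing directly from $\GG_{xy}=(\leq,\infty)$, normalization and Lemma~\ref{lem:standard-form} that $Z$ contains a valuation with $v(y)=+\infty$ or with $v(x)=-\infty$ (equivalently, applying the single-clock instance of the statement with $0$ in place of one of the two clocks).
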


The following lemma extends the corresponding property of \cite{eca-simulations-arxiv} by 
taking into account the initial guard $g_{0}$ of a safe $\GTA$.

\begin{lemma}\label{lem:reachable-properties}
  Let $Z$ be a nonempty reachable zone and let $\GG$ be its canonical distance graph.
  \begin{enumerate}
    \item  For all $x\in X_{H}$, we have $\GG_{x0}=(\leq,-\infty)$ or 
    $\GG_{0x}\leq(<,\infty)$.
  
    \item  For all $x,y\in X$, if $\GG_{xy}=(\leq,-\infty)$ then 
    $\GG_{x0}=(\leq,-\infty)$ or $\GG_{0y}=(\leq,-\infty)$.
  \end{enumerate}
\end{lemma}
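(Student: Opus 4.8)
The plan is to establish part~1 by induction on the sequence of safe zone operations (Definition~\ref{def:safe-operations}) used to obtain $Z$, and then to deduce part~2 from part~1 by a short semantic argument.

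For part~1 I would work with the following reformulation, which is equivalent to the distance-graph statement by canonicity of $\graph{Z}$ (Lemma~\ref{lem:canonical}): \emph{for every history clock $x$, either $v(x)=+\infty$ for all $v\in Z$, or $v(x)<\infty$ for all $v\in Z$}. Indeed, ``$v(x)=+\infty$ throughout $Z$'' is equivalent to $\GG_{x0}=(\le,-\infty)$, and ``$v(x)<\infty$ throughout $Z$'' is equivalent to $\GG_{0x}\le(<,\infty)$, both directions being immediate from minimality of the canonical graph. The base case is $Z=Z_0$, the initial zone given by $g_0\wedge(X_F\le0)\wedge(X_H\ge0)$: since $\A$ is $X_D$-safe, $g_0$ pins every history clock to $0$ or to $\infty$, so in $Z_0$ each history clock is constant, and the reformulated property holds trivially; the case $\elapse{Z_0}$ then reduces to the time-elapse case below. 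For the inductive step, assume the property holds for a nonempty safely reachable $Z$ and let $Z'\neq\emptyset$ be obtained by a safe operation.
\begin{itemize}
  \item Guard intersection with a safe guard, and the auxiliary intersection $Z\wedge(x=c)$ used inside an $X_D$-release, only shrink the zone; since the property is a ``for all $v\in Z$'' statement, it is preserved.
  \item Resetting a history clock $x_0$ makes $v(x_0)=0$ for all $v\in Z'$ and changes no other clock value; releasing a future clock changes no history-clock value. In both cases the projection of $Z'$ onto each history clock is either $\{0\}$ or is unchanged, so the property persists.
  \item Time elapse replaces $v(x)$ by $v(x)+\delta$, which maps $+\infty$ to $+\infty$ and a finite value to a finite value, so the property is stable.
\end{itemize}
This completes the induction for part~1.

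For part~2, suppose $\GG_{xy}=(\le,-\infty)$, so $y-x\le-\infty$ holds throughout $Z$; recalling that $v\models y-x\le-\infty$ exactly when $v(x)=+\infty$ and $v(y)\neq+\infty$, or $v(y)=-\infty$ and $v(x)\neq-\infty$, every $v\in Z$ falls into one of these two cases. Assume for contradiction that $\GG_{x0}\neq(\le,-\infty)$ and $\GG_{0y}\neq(\le,-\infty)$. Since $\graph{Z}$ is canonical, $\GG_{0y}\neq(\le,-\infty)$ means that $v(y)=-\infty$ does not hold throughout $Z$, so there is $v_2\in Z$ with $v_2(y)\neq-\infty$; for this $v_2$ the second case is impossible, hence $v_2(x)=+\infty$. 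As future clocks always take values $\le0$, this forces $x\in X_H$. Now apply part~1 to $x$: if $\GG_{x0}=(\le,-\infty)$ we contradict our assumption, and if $\GG_{0x}\le(<,\infty)$ then $v_2(x)<\infty$, contradicting $v_2(x)=+\infty$. This contradiction proves part~2.

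The delicate point is the base case together with the semantic-versus-graph translation: one must be careful that it is the pinning of history clocks by $g_0$ --- and not the full valuation space $\V$, for which the reformulated property is \emph{false} --- that makes the property hold already in $\elapse{Z_0}$, and that canonicity of $\graph{Z}$ faithfully bridges the projection statement and the statements about $\GG_{x0},\GG_{0x}$. The inductive step itself is routine once one observes that the property concerns only the per-history-clock projection of the zone, which the safe operations merely pin to $\{0\}$, leave untouched, or shift by a nonnegative amount; and part~2 is then essentially a corollary of part~1 once one notes that $\GG_{xy}=(\le,-\infty)$ forces some valuation to send $x$ to $+\infty$, which is only possible for a history clock.
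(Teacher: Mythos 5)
Your proof is correct, but it takes a noticeably different route from the paper's. For part~1 the paper argues directly on the entries of the canonical distance graph: by safety, $g_0$ puts either $(\leq,-\infty)$ on $x\to 0$ or $(\leq,0)$ on $0\to x$; the former entry survives every operation until the first reset of $x$, after which the weight of $0\to x$ never exceeds $(<,\infty)$ (time elapse caps it there, the other operations can only tighten it). You instead run the induction semantically, on the projection of the zone to each history clock (either $v(x)=\infty$ throughout $Z$ or $v(x)<\infty$ throughout $Z$), and transfer back to the DBM statement via canonicity (Lemma~\ref{lem:canonical}); this buys you independence from how each operation rewrites particular graph entries, at the cost of the translation step, and your explicit caveat that the property is false for $\V$ and only starts holding after the intersection with $g_0$ is precisely the point the paper's tracking argument also relies on. For part~2 the paper gives a one-line argument using normalization of $\graph{Z}$: $\GG_{0y}\leq\GG_{0x}+\GG_{xy}=(\leq,-\infty)$, where $\GG_{0x}\leq(<,\infty)$ comes from part~1 when $x\in X_H$ and from the standard form ($\GG_{0x}\leq(\leq,0)$) when $x\in X_F$, together with the extended-sum rule that $(\leq,-\infty)$ absorbs any weight other than $(<,-\infty)$ and $(\leq,\infty)$; this treats history and future clocks uniformly. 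Your contradiction argument at the level of valuations (producing $v_2$ with $v_2(y)\neq-\infty$, deducing $v_2(x)=+\infty$, hence $x\in X_H$, then contradicting part~1) is equally valid and more elementary, though it invokes minimality of the canonical graph once more to obtain the witness $v_2$, whereas the paper's version only uses that $\graph{Z}$ is normal and standard.
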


\begin{proof}
  Let $x\in X_{H}$ be a history clock.  
  Since $\A$ is safe, the initial guard $g_{0}$ induces either the weight
  $(\leq,-\infty)$ for edge $x\to 0$ or the weight $(\leq,0)$ for edge $0\to x$.
  If the weight of $x\to 0$ is $(\leq,-\infty)$, it stays unchanged until we first
  apply the reset operation on $x$, resulting in the weight $(\leq,0)$ for edge $0\to x$.
  Then, the weight of edge $0\to x$ may only be increased by the time elapse operation,
  which sets it to $(<,\infty)$.  This proves the first property.

  For the second property, consider $x,y\in X$ with $\GG_{xy}=(\leq,-\infty)$ and
  $\GG_{x0}\neq(\leq,-\infty)$. We have to show that $\GG_{0y}=(\leq,-\infty)$.
  If $x\in X_{H}$ then we get $\GG_{0x}\leq(<,\infty)$ by the first property. 
  If $x\in X_{F}$ then we have $\GG_{0x}\leq(\leq,0)$.
  In both cases, since $\GG$ is normal, we obtain
  $\GG_{0y}\leq\GG_{0x}+\GG_{xy}=(\leq,-\infty)$ and we are done.
\end{proof}

We next state the following central lemma that give the $(\dagger)$ conditions, that says that for all safely reachable zones, the weight of edges of the form $0 \to x$, $x \to 0$ and $x_1 \to x_2$ belong to the finite set $\{(\leq,-\infty),(<,\infty),(\leq,\infty)\}\cup \{(\leqlt,c)\mid c\in\mathbb{Z} \wedge -nM\leq c\leq nM\}$, for all future clocks $x,x_1,x_2 \in X_{F}$.
In other words, for safely reachable zones, the constraints between non-history clocks come from a finite set.

\begin{lemma}~\label{lem:dagger-diagonals}
  Let $Z$ be a nonempty $(X_{D},M)$-safely reachable zone and let $n=\max(1,|X_{D}|)$.
  Then, the normalized distance graph $\graph{Z}$ satisfies the following $(\dagger)$ conditions:
  \begin{enumerate}

    \item[$\dagger_{1}$] For all $x \in X_{F}$, if $Z_{xy}$ is finite for some $y \in X_{H} \cup \{0\}$, then $(\leq,0)\leq Z_{x0}\leq(\leq,nM)$.

    \item[$\dagger_{2}$] For all $x \in X_{F}$, 
    if $Z_{0x}$ is finite, then
    $(<,-nM)\leq Z_{0x}\leq(\leq,0)$.

    \item[$\dagger_{3}$] For all $x \in X_{H}$ and $y \in X_{F}$, if $Z_{0y}$ is finite, then $Z_{x0} + (<,-nM) \leq Z_{xy}$.    
    
    \item[$\dagger_{4}$] For $x,y \in X_{F}$, 
    if $Z_{xy}$ is finite, then     
    $(<,-nM) \leq Z_{xy} \leq (\leq,nM)$.
  \end{enumerate}
\end{lemma}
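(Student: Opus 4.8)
The plan is to prove the four $(\dagger)$ conditions simultaneously by induction on the length of the derivation of $Z$ from the initial zone $Z_0$, using the safe zone operations of Definition~\ref{def:safe-operations}. For the base case, $Z_0 = \elapse{\V \cap g_0}$; since $g_0$ only sets history clocks to $0$ or $\infty$ and constrains future clocks via $X_F \leq 0$, every future-clock edge is either $(\leq,0)$, $(\leq,\infty)$, $(<,\infty)$ or $(\leq,-\infty)$, so all four conditions hold trivially with these infinite or zero weights. The induction step treats each safe operation --- guard intersection with an $(X_D,M)$-safe guard, reset of a history clock or release of $x \in X_F \setminus X_D$, release of $x \in X_D$ guarded by $x=0$ or $x=-\infty$, and time elapse --- and shows it preserves all of $\dagger_1$--$\dagger_4$.

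For the \emph{semantic core} of the argument I would not track syntactic manipulations of the distance graph directly, but rather argue via Corollary~\ref{cor:FutureClocks-0-Diagonals} and Lemma~\ref{lem:fixing-values-distance-graph}. The key observation is: if $Z_{xy}$ is finite but violates the bound, say $Z_{xy} < (<,-nM)$ for $x,y \in X_F \cup \{0\}$, then by Lemma~\ref{lem:fixing-values-distance-graph} there is a valuation $v \in Z$ with $v(y) - v(x)$ arbitrarily large negative, in particular $-\infty < v(y)-v(x) < -nM$ wait --- more carefully, $Z_{xy}$ finite and very negative means the minimal value of $v(y)-v(x)$ over $Z$ is very negative, so there is $v \in Z$ with $-\infty < v(x)-v(y) < -nM$; then part (2) or (3) of Corollary~\ref{cor:FutureClocks-0-Diagonals} lets us push $v(x)-v(y)$ all the way down to $-\infty$ (or rather, arbitrarily close to it), which forces $Z_{yx} = (\leq,\infty)$ and contradicts finiteness of $Z_{xy}$ via a negative-cycle / canonicity argument, or else shows $Z_{xy}$ could not have been finite to begin with. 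The upper bounds $Z_{x0} \leq (\leq,nM)$ in $\dagger_1$ and $Z_{xy} \leq (\leq,nM)$ in $\dagger_4$ are handled symmetrically: if $v(x) > nM$ for some $v$, I relocate $v(x)$ using Corollary~\ref{cor:FutureClocks-0-Diagonals}(1) when $x \notin X_D$, and a similar relocation for diagonals, contradicting the presence of the finite constraint that made $Z_{x0}$ or $Z_{xy}$ finite. Condition $\dagger_3$, relating a history-clock edge to a future-clock edge, follows by combining $\dagger_2$ with Lemma~\ref{lem:reachable-properties} and the triangle inequality $Z_{xy} \leq Z_{x0} + Z_{0y}$ in the normalized graph, together with $Z_{0y} \geq (<,-nM)$.

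The \emph{main obstacle} I anticipate is the release of a clock $x \in X_D$: releasing sets $x$ non-deterministically in $[-\infty,0]$, which a priori destroys all bounds on edges touching $x$. This is exactly why the safety condition requires $x = 0$ or $x = -\infty$ immediately before the release. I would argue that just before the release the zone $Z \wedge (x = c)$ with $c \in \{0,-\infty\}$ has $x$ pinned to a constant, so every diagonal $Z_{xz}$ with $z \in X_D$ equals $(\leq,c) + (\text{the } 0\text{-to-}z \text{ edge})$ or similar, and after releasing, the new edge $x \to z$ is the minimum over intermediate vertices $0$ of $Z_{x0}+Z_{0z}$-type quantities, all of which are already $(\dagger)$-bounded by the induction hypothesis (using that $Z$ before the guard satisfied $(\dagger)$, the guard only tightens, and standardization/normalization preserve the invariant). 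The bookkeeping here --- showing the released diagonals stay in $[-nM, nM]$ and that the factor $n = \max(1,|X_D|)$ is exactly what is needed to absorb chains of diagonals through up to $|X_D|$ intermediate clocks in $X_D$ --- is the delicate part, and is precisely where Corollary~\ref{cor:FutureClocks-0-Diagonals}(2)'s condition numbers $v(x)-v(y)$ between $x,y \in X_D \cup\{0\}$ and the $-nM$ threshold come into play. The other operations are comparatively routine: guard intersection only decreases weights so cannot break lower bounds, and one checks it cannot break upper bounds using safety ($M$-boundedness) of the guard plus the relocation lemma; time elapse only touches $0 \to x$ edges, setting them to $(\leq,0)$ or leaving $(\leq,-\infty)$, which trivially respects $\dagger_2$ and leaves $\dagger_1, \dagger_4$ untouched, with $\dagger_3$ re-derived from the updated $\dagger_2$.
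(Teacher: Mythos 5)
Your central mechanism for excluding out-of-range finite weights is aimed in the wrong direction. Take the lower-bound case of $\dagger_{4}$: $Z_{xy}=(\leqlt,c)$ with $-\infty<c<-nM$ (so $v(y)-v(x)\leqlt c$ for all $v\in Z$, i.e.\ $v(x)-v(y)>nM$, not $<-nM$ as you wrote). Pushing the difference towards $-\infty$ with Corollary~\ref{cor:FutureClocks-0-Diagonals} produces valuations that still satisfy $y-x\leqlt c$, so it contradicts nothing; at best it shows the opposite edge $Z_{yx}$ is unbounded, which is perfectly consistent with a finite $Z_{xy}$ (e.g.\ the zone given by the single constraint $y-x\le c$). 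The argument that works, and the one the paper uses, is the opposite relocation: use Lemma~\ref{lem:weight-properties} and Lemma~\ref{lem:fixing-values-distance-graph} to realize a finite difference $\alpha\le c<-nM$ in $Z$, then apply Corollary~\ref{cor:FutureClocks-0-Diagonals}(3) to obtain $v'\in Z$ with $c<v'(y)-v'(x)<-nM$, which violates the zone's own constraint $y-x\leqlt c$. Your repeated sign slips (``$v(x)>nM$'' for a future clock, ``$-\infty<v(x)-v(y)<-nM$'') are symptoms of this mis-orientation, and the hedge ``or else shows $Z_{xy}$ could not have been finite to begin with'' does not supply the missing step.

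The treatment of $\dagger_{3}$ is a genuine gap rather than an omitted routine check. $\dagger_{3}$ is a \emph{lower} bound on $Z_{xy}$, while the triangle inequality $Z_{xy}\le Z_{x0}+Z_{0y}$ you invoke only yields upper bounds; the other inequality $Z_{x0}\le Z_{xy}+Z_{y0}$ would need an upper bound on $Z_{y0}$, which $\dagger_{1}$ gives only when some $Z_{yz}$ with $z\in X_{H}\cup\{0\}$ is finite --- not guaranteed under the hypothesis of $\dagger_{3}$. Concretely, the normalized nonempty zone over $x\in X_{H}$, $y\in X_{F}$ with $Z_{x0}=(\leq,-5)$, $Z_{0x}=(<,\infty)$, $Z_{0y}=(\leq,0)$, $Z_{y0}=Z_{yx}=(\leq,\infty)$ and $Z_{xy}=(\leq,-5-nM-1)$ satisfies $\dagger_{1}$, $\dagger_{2}$, $\dagger_{4}$ and Lemma~\ref{lem:reachable-properties}, yet violates $\dagger_{3}$; so no deduction from the ingredients you list can close this case. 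The paper instead builds an auxiliary perturbed distance graph pinning $-v(x)$ within $\varepsilon$ of its supremum while keeping $v(y)$ finite, checks it has no negative cycle, relocates $v(y)$ to $-nM-\varepsilon$ via Corollary~\ref{cor:FutureClocks-0-Diagonals}(3), and lets $\varepsilon\to 0$. Finally, the outer induction over safe operations you propose is not how the paper proceeds --- that induction is already packaged into the $\simeq$-closure Lemma~\ref{lem:FutureClocks-diagonals}, after which the $(\dagger)$ conditions are proved directly --- and as sketched it is fragile: after intersecting with an $M$-bounded diagonal between clocks of $X_{D}$ and renormalizing, bounding the newly tightened edges is essentially the statement being proved, so the ``routine'' steps of your induction are close to circular.
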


\begin{proof}

\medskip
\item[$\dagger_{1}$] For all $x \in X_{F}$, if $Z_{xy}$ is finite for some $y \in X_{H} \cup \{0\}$, then $(\leq,0)\leq Z_{x0}\leq(\leq,nM)$.
In other words, if $Z_{xy}<(<,\infty)$ for some $y \in X_{H} \cup \{0\}$, then
$(\leq,0)\leq Z_{x0}\leq(\leq,nM)$.
    
    First, we consider the case where $y = 0$. 
    So we assume that $(\leq,0)\leq Z_{x0}<(<,\infty)$ is finite.
    Towards a contradiction, suppose that $(\leq, nM) < Z_{x0} < (<,\infty)$.
    Since $Z$ is non-empty, we know that $(\leq,0) \leq Z_{x0} + Z_{0x}$.
    Then, using Lemma~\ref{lem:weight-properties}, we can find 
    $\alpha\in\overline{\mathbb{R}}$ such that $(\leq,\alpha)\leq Z_{x0}$,
    $(\leq,-\alpha)\leq Z_{0x}$, and $nM < \alpha$.
    Notice that $\alpha<\infty$ since $Z_{x0}<(<,\infty)$.
    Further, using Lemma~\ref{lem:fixing-values-distance-graph}, 
    we can get a valuation $v \in Z$ such that $0 - v(x) = \alpha$.   
    Since $nM<\alpha<\infty$, this implies $-\infty<v(x)<-nM$.    
    Let $Z_{x0}=(\leqlt,c)$. We have $nM<c<\infty$.
    Using Corollary~\ref{cor:FutureClocks-0-Diagonals}(3),
    we can get a valuation $v' \in Z$, such that $-\infty<v'(x)<-c$,
    a contradiction as it violates the constraint $0-x\leqlt c$ in the zone.

    Next, assume that $Z_{xy}<(<,\infty)$ for some $y \in X_{H}$.
    Since $Z$ is normal, we have $Z_{x0} \leq Z_{xy} + Z_{y0} < (<,\infty)$ as $Z_{xy} <
    (<,\infty)$ and $Z_{y0} \leq (\leq,0)$.
    We now conclude from the first case that $(\leq,0)\leq Z_{x0}\leq(\leq,nM)$.
    
    \medskip
    \item[$\dagger_{2}$] For all $x \in X_{F}$, if $Z_{0x}$ is finite, then
    $(<,-nM)\leq Z_{0x}\leq(\leq,0)$.  This means that either $Z_{0x}=(\leq,-\infty)$ or
    $(<,-nM)\leq Z_{0x}\leq(\leq,0)$.

    Let $Z_{0x}=(\leqlt,c)$.  Suppose $(\leq,-\infty) < Z_{0x} < (<,-nM)$.  
    We have $-\infty<c<-nM$.  
    By Lemma~\ref{lem:weight-properties}, we can find $\alpha$ such that $(\leq,\alpha)
    \leq Z_{0x}$, $(\leq,-\alpha) \leq Z_{x0}$ and $\alpha \neq -\infty$.
    Then, by
    Lemma~\ref{lem:fixing-values-distance-graph}, we can find $v\in Z$ with $v(x)=\alpha$.
    We have $-\infty < v(x) \leqlt c < -nM$.
    Now, using Corollary~\ref{cor:FutureClocks-0-Diagonals}(3),
    we can get a valuation $v'\in Z$ such that $c<v'(x)<-nM$, which leads to a contradiction as it violates the constraint $x-0\leqlt c$ in the zone.

    \medskip
    \item[$\dagger_{3}$] For all $x \in X_{H}$ and $y \in X_{F}$, if $Z_{0y}$ is finite, then $Z_{x0} + (<,-nM) \leq Z_{xy}$.
    
    If $Z_{x0} = (\leq,-\infty)$ then the inequality trivially holds.
    So, we assume for the rest of the proof that $Z_{x0}\neq(\leq,-\infty)$.
    Since $Z_{0y}$ is finite, we know that $Z_{0y}\neq(\leq,-\infty)$. 
    By Lemma~\ref{lem:reachable-properties}, this implies $Z_{xy}\neq(\leq,-\infty)$. 
    Let $Z_{x0} = (\leqlt,-c)$ and $Z_{xy} = (\leqlt',e)$,
    as shown in Figure~\ref{fig:eq-dagger}. We have $0\leq c<\infty$ and 
    $-\infty<e\leq 0$.

    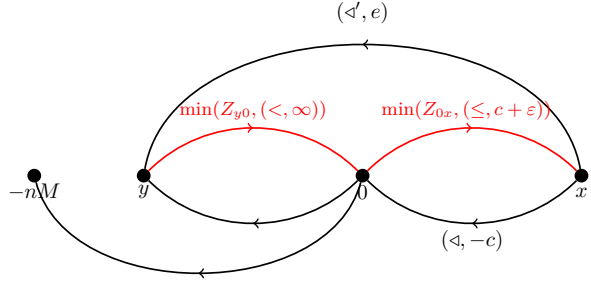
\begin{figure}[tbh]
      \centering
      \scalebox{0.8}{
      \begin{tikzpicture}[scale=0.9]
          
          \begin{scope}[state/.style={draw, thick, circle, inner sep=2pt}]
                    \node [state, fill=black] (s) at (-1, 6) {}; 
                    \node [state, fill=black] (p) at (1, 6) {}; 
            \node [state, fill=black] (q) at (5, 6) {}; 
            \node [state, fill=black] (r) at (9, 6) {}; 
          \end{scope}
          
                \begin{scope}[thick,decoration={markings, mark=at position 0.5 with {\arrow{>}}}]             
            \draw [draw=red,postaction={decorate},bend left=45] (p) to (q); 
            \draw [postaction={decorate},bend left=45] (q) to (p); 
            \draw [postaction={decorate},bend left=45] (r) to (q); 
            \draw [postaction={decorate},bend left=-80] (r) to (p); 
            \draw [draw=red,postaction={decorate},bend left=45] (q) to (r); 
            \draw [postaction={decorate},bend left=75] (q) to (s); 
          \end{scope}
        
          \node at (-1,5.7)  {$-nM$};
                \node at (1,5.7)  {$y$};
          \node at (5,5.7)  {$0$};
          \node at (9,5.7)  {$x$};
          
                \node at (5,9)  {$(\triangleleft', e)$};
          \node at (3,7.2)  {\small \textcolor{red}{$\min(Z_{y0}, (<,\infty))$}};
                \node at (6.9,7.2)  {\small \textcolor{red}{$\min(Z_{0x}, (\le, c+ \varepsilon))$}};
                
          \node at (7,4.8)  {$(\triangleleft, -c)$};
                
      \end{tikzpicture}
      }
      \captionof{figure}{Distance graph $\graph{Z}$ (without the red edges) and $\GG'$ (with the red edges).}
      \label{fig:eq-dagger}
    \end{figure}

    Fix $\varepsilon>0$.  Consider the distance graph $\GG'$ obtained from $\graph{Z}$ by
    setting the weight of $0 \to x$ to $\min(Z_{0x}, (\le, c+ \varepsilon))$,
    and the weight of $y\to 0$ to $\min(Z_{y0}, (<,\infty))$, as shown in
    Figure~\ref{fig:eq-dagger}.
    It is easy to see that $\GG'$ is also in standard form.

    Next, we show that there are no negative cycles in this graph.
    Since $Z\neq\emptyset$, the candidates for being negative must use the new weight
    $(\leq,c + \varepsilon)$ of $0\to x$ or the new weight $(<,\infty)$ of
    $y\to 0$ or both.  Then the possible negative cycles are:
    \begin{itemize}
      \item $0\to x \to 0$ with weight $(\leq,c+\varepsilon)+Z_{x0}=(\leq,c+\varepsilon)+
      (\leqlt, -c) = (\leqlt,\varepsilon)$, which is not negative, since
      $\varepsilon > 0$.

      \item $0 \to y \to 0$ with weight $Z_{0y}+(<,\infty)$ which is not
      negative since $Z_{0y}\neq(\leq,-\infty)$,

      \item $y \to 0 \to x \to y$ with weight $(<,\infty)+(\leq,
      c+\varepsilon)+Z_{xy}$ which is not negative since
      $Z_{xy}\neq(\le,-\infty)$.
    \end{itemize}

    Since $\GG'$ has no negative cycles, Lemma~\ref{lem:standard-form} implies
    $\sem{\GG'}\neq\emptyset$.  Note that $\sem{\GG'}\subseteq\sem{\graph{Z}}=Z$. 
    We know that for all $v \in \sem{\GG'}$, we have $c \leqlt v(x) \leq c + \varepsilon$.

    We will now show that there exists a valuation $v' \in \sem{\GG'}$ such that $-nM - \varepsilon \leq v'(y)$. 
    Let $v\in\sem{\GG'}$.  If $-nM \leq v(y)$, we let $v' = v$ and we are done.
    Otherwise, $-\infty<v(y)<-nM$, where the first inequality is due to 
    $\GG'_{y0}\leq(<,\infty)$.
    Using Corollary~\ref{cor:FutureClocks-0-Diagonals}(3),
    there exists a valuation
    $v'\in\sem{\GG'}$ such that $v'(y) = -nM - \varepsilon$ since $\varepsilon>0$.

    Since $v'\in\sem{\GG'}$, we have $c\leqlt v'(x)\leq c+\varepsilon$ and we obtain
    $-nM-c-2\varepsilon\leq v'(y)-v'(x)\leqlt' e$, where the last inequality uses 
    again $v'\in\sem{\GG'}$ and $\GG'_{xy}=(\leqlt',e)$.     
    Since this is true for all $\varepsilon>0$ we deduce that $-nM-c\leq e$. We deduce 
    that $(<,-nM-c)\leq(\leqlt',e)=Z_{xy}$. We conclude using 
    $(<,-nM-c)=(<,-nM)+(\leqlt,-c)$.

    \medskip
    \item[$\dagger_{4}$] For $x,y \in X_{F}$, if $Z_{xy}$ is finite, then     
    $(<,-nM) \leq Z_{xy} \leq (\leq,nM)$.

    Suppose that $x,y \in X_{F}$, and $Z_{xy}=(\leqlt,c)\not\in\{(\leq,-\infty),(<,\infty),(\leq,\infty)\}$ is finite. 
    Notice that, since $\graph{Z}$ is standard, this implies $Z_{x0}\neq(\leq,\infty)$.
    
    The proof proceeds by application of Lemma~\ref{lem:fixing-values-distance-graph}, and for this, when $x,y\in X$ and $\alpha\in\mathbb{R}$ is finite, we need to first show that the weights $Z_{x0},Z_{0x},Z_{y0},Z_{0y}$ are all different from $(\leq,-\infty)$. We will now show this. 
    \begin{itemize}
      \item We get this for free for weights $Z_{x0}, Z_{y0}$, as $x$ and $y$ are future clocks.

      \item Suppose that $Z_{0x} = (\leq,-\infty)$.  Then, since $Z$ is non-empty,
      we get $Z_{x0} = (\leq,\infty)$, a contradiction.
      
      \item Suppose $Z_{0y} = (\leq,-\infty)$.  Since $\graph{Z}$ is normal, we
      have $Z_{xy} \leq Z_{x0} + Z_{0y}=(\leq,-\infty)$ (since
      $Z_{x0} \neq (\leq,\infty)$).  Again this is a contradiction with
      $Z_{xy}\neq(\leq,-\infty)$.

    \end{itemize}
    Thus, we have shown that $Z_{x0},Z_{0x},Z_{y0},Z_{0y}$ are all different from $(\leq,-\infty)$.

    Next, we consider the two possibilities for violation of the $\dagger_4$ condition. We will show that both of them lead to a contradiction.
    \begin{enumerate}
      \item $(\leq, nM) < Z_{xy} = (\leqlt,c) < (<,\infty)$. 
      This implies that $nM<c<\infty$.  

      Using Lemma~\ref{lem:weight-properties}, we can find $\alpha\in\mathbb{R}$ such that
      $(\leq,\alpha)\leq Z_{xy}$, $(\leq,-\alpha)\leq Z_{yx}$, and
      $nM<\alpha$.  Notice that $\alpha\leq c<\infty$.  Further, using
      Lemma~\ref{lem:fixing-values-distance-graph}, we know that there exists a valuation
      $v\in Z$ with $v(y)-v(x)=\alpha$.  We get
      $-\infty<-\alpha=v(x)-v(y)<-nM$ and by 
      Corollary~\ref{cor:FutureClocks-0-Diagonals}(3), we can find a valuation $v'\in Z$ with
      $-\infty<v'(x)-v'(y)=\beta<-c$ (for instance, $\beta=-c-1$), as illustrated in Figure~\ref{fig:dagger-7-a}.  This is a
      contradiction with the constraint $y-x \leqlt c$ in $Z$.

      \begin{figure}[h]
        \centering
        \includegraphics[width=.5\linewidth,scale=.1]{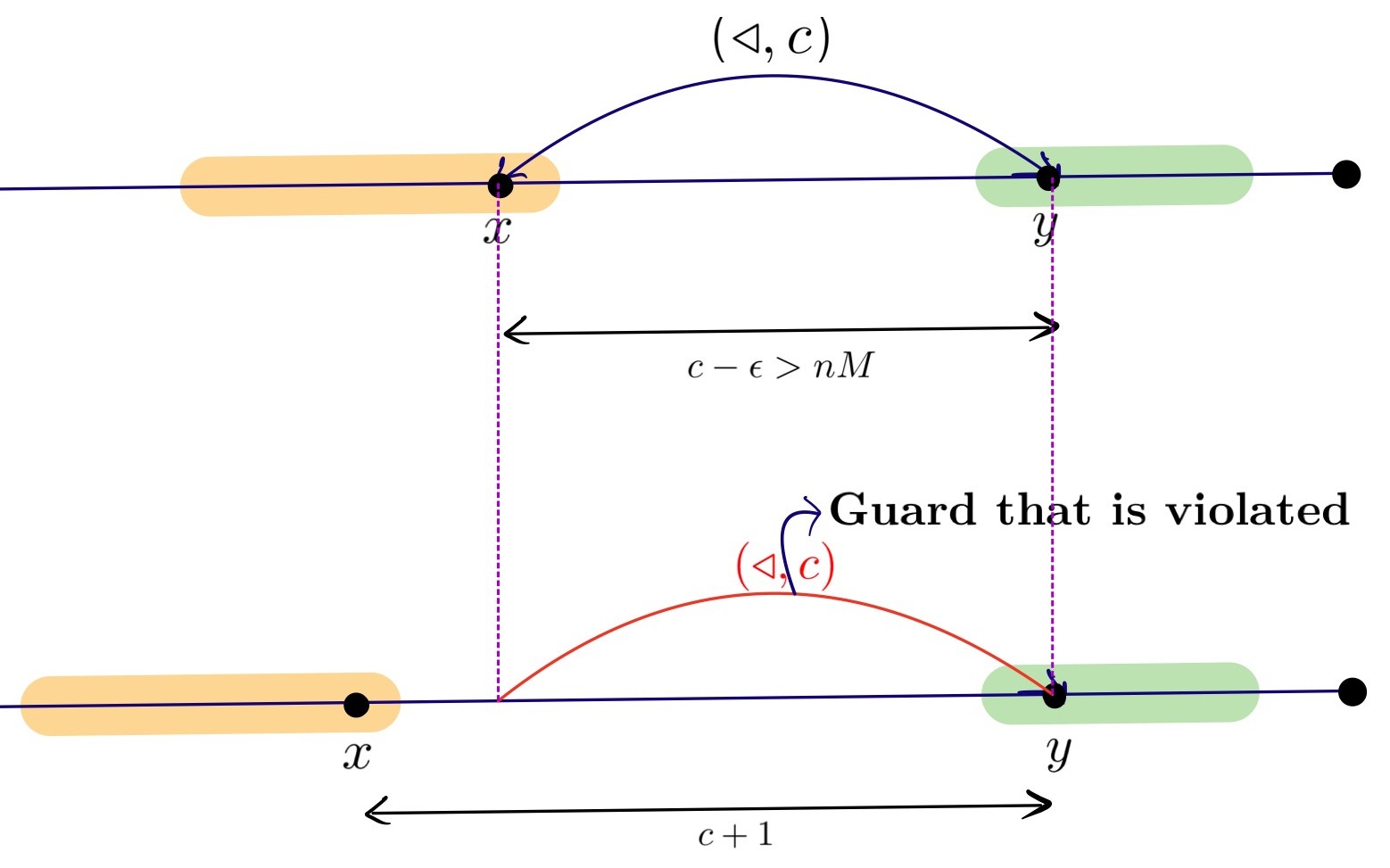}
        \captionof{figure}{Increasing the difference between $x$ and $y$ using 
        $\simeq$-equivalence.}
        \label{fig:dagger-7-a}
      \end{figure}
      
      \item $(\leq,-\infty) < Z_{xy} = (\leqlt,c) < (<, -nM)$.  
      This implies that $-\infty<c<-nM$.  

      Using Lemma~\ref{lem:weight-properties}, we can find $\alpha\in\mathbb{R}$ such that
      $(\leq,\alpha)\leq Z_{xy}$, $(\leq,-\alpha)\leq Z_{yx}$, and $-\infty<\alpha$.
      Notice that $\alpha\leq c<-nM$.  Further, using
      Lemma~\ref{lem:fixing-values-distance-graph}, we know that there exists a valuation
      $v\in Z$ with $v(y)-v(x)=\alpha$.  Since
      $-\infty<\alpha<-nM$, we use
      Corollary~\ref{cor:FutureClocks-0-Diagonals}(3) to find a valuation $v'\in Z$ with
      $c<v'(y)-v'(x)=\beta<-nM$ (for instance, $\beta=\frac{c-nM}{2}$), as illustrated in Figure~\ref{fig:dagger-7-b}.  This is a contradiction with the constraint $y-x \leqlt c$ in $Z$.
      
      \begin{figure}[h]
        \centering
        \includegraphics[width=.5\linewidth,scale=.1]{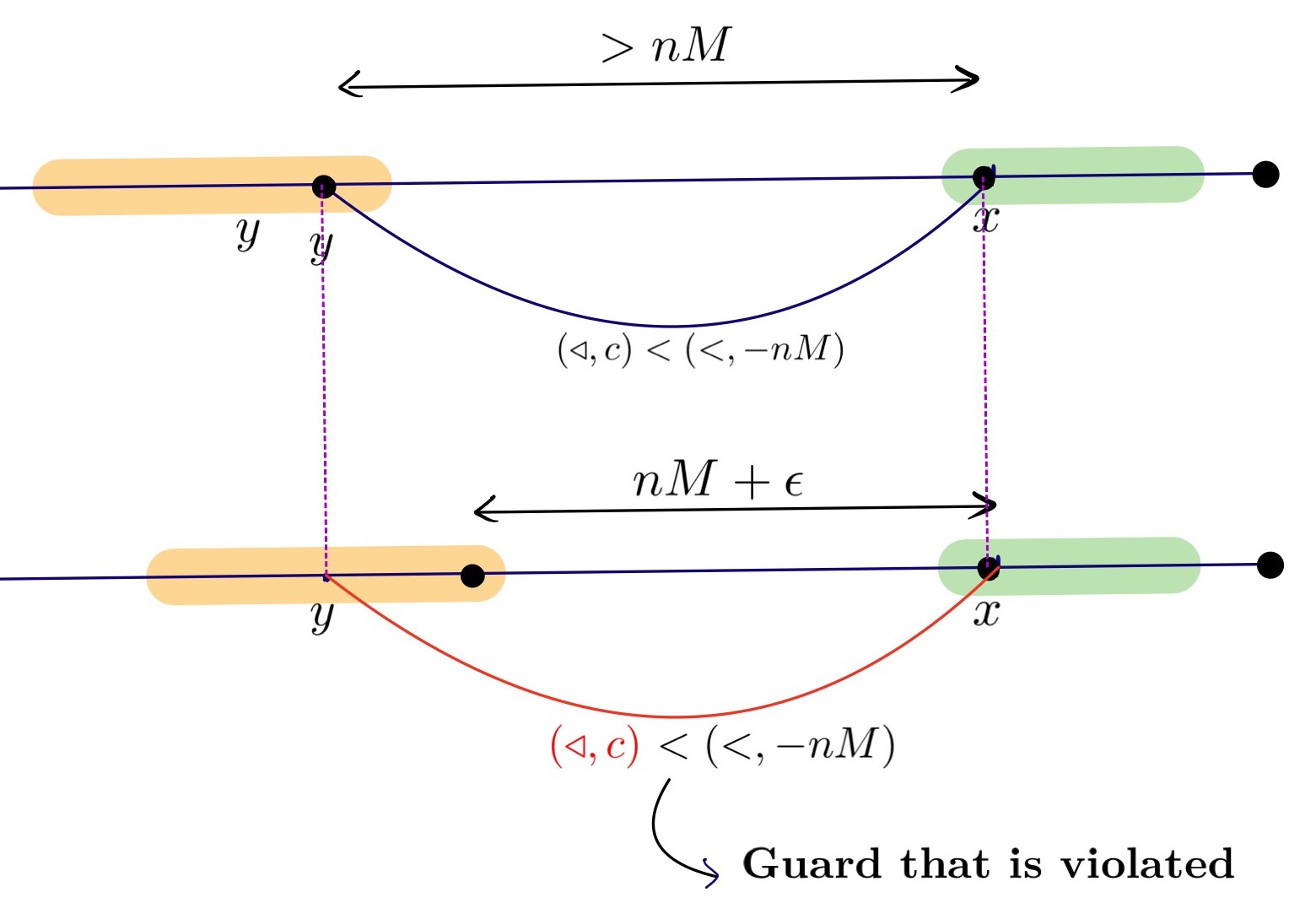}
        \captionof{figure}{Shrinking the difference between $x$ and $y$ using
        $\simeq$-equivalence.}  
        \label{fig:dagger-7-b}
      \end{figure}
    \end{enumerate}
    Therefore, if $Z_{xy}$ is finite, then $(<,-nM) \leq Z_{xy} \leq (\leq,nM)$.\qedhere 
\end{proof}

\begin{remark}
  Note that in each of the dagger conditions in Lemma~\ref{lem:dagger-diagonals}, we can differentiate the cases where the future clock belongs to the set $X_{D}$ or not. 
  In particular, when the future clock $x$ being considered is not in $X_{D}$, the
  bound can be restricted to $M$ (instead of $nM$).  
\end{remark}

Thus, we obtain as a corollary that, for event-predicting automata, we do not even need simulation to obtain finiteness of its zone graph.

\begin{corollary}
  Let $\Aa$ be an event-predicting automata with diagonal constraints. Then, the zone graph of $\Aa$ is finite.    
\end{corollary}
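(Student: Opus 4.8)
The plan is to read the corollary off directly from the $(\dagger)$ conditions of Lemma~\ref{lem:dagger-diagonals}: once there are no history clocks, these conditions pin every edge weight of every reachable canonical distance graph to a fixed finite set, so only finitely many zones are reachable.

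First I would make the setting precise. As in Section~\ref{sec:examples}, an event-predicting automaton with diagonal constraints is an $X_D$-safe \GTA\ $\Aa$ in which $X_D = X_F$: each predicting clock is a future clock that is checked for value $0$ immediately before every release. The only history clocks produced by the ECA-to-\GTA\ translation are the event-recording clocks $\history{a}$, which such an automaton never mentions in a guard; being dead, they may be removed without changing reachability, so I would assume $X_H = \emptyset$. Let $M$ be the maximal constant of $\Aa$ and $n = \max(1, |X_F|)$.

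Next, let $(q, Z)$ be any node reachable in $\gzg(\Aa)$. By the lemma stating that reachable zones of an $X_D$-safe \GTA\ are $(X_D, M)$-safe, $Z$ is $(X_D, M)$-safely reachable, and since $X_H = \emptyset$ every edge of its canonical distance graph $\graph{Z}$ has one of the forms $0 \to x$, $x \to 0$ or $x_1 \to x_2$ with $x, x_1, x_2 \in X_F$. Lemma~\ref{lem:dagger-diagonals} then says that each such weight lies in the finite set
\[
S \;=\; \{(\leq,-\infty),(<,\infty),(\leq,\infty)\} \;\cup\; \{(\leqlt,c) \mid c\in\mathbb{Z},\ -nM\leq c\leq nM\}\,.
\]
(Condition $\dagger_3$, which relates a history clock to a future clock, is vacuous here, so only $\dagger_1$, $\dagger_2$, $\dagger_4$ are needed.) Hence $\graph{Z}$ ranges over the finite collection of normalized distance graphs on the vertex set $X_F \cup \{0\}$ with edge labels from $S$. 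Since by Lemma~\ref{lem:canonical} a non-empty zone is determined uniquely by its canonical distance graph, only finitely many non-empty zones are reachable; together with the empty zone and the finitely many control states in $Q$, this yields finitely many reachable nodes, i.e.\ $\gzg(\Aa)$ is finite.

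I do not anticipate a real obstacle, since the heavy lifting is entirely inside Lemma~\ref{lem:dagger-diagonals}. The one point that needs a word of care is the status of the history clocks introduced by the generic ECA-to-\GTA\ translation: one has to note that they are dead (or check directly that every edge incident to such a clock remains trivial because no guard ever constrains it), so that indeed all edges fall under $\dagger_1$, $\dagger_2$, $\dagger_4$ and none under $\dagger_3$.
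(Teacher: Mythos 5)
Your proof is correct and follows exactly the route the paper intends: the corollary is stated as an immediate consequence of Lemma~\ref{lem:dagger-diagonals}, and, with no history clocks, every edge of the canonical distance graph of a reachable zone falls under $\dagger_1$, $\dagger_2$ or $\dagger_4$, so its weight lies in a finite set and Lemma~\ref{lem:canonical} yields finitely many reachable zones, hence a finite zone graph. One small caution: your parenthetical fallback (keeping the reset-but-never-tested recording clocks and arguing that their incident edges stay trivial) would not work, since differences between two such history clocks are pinned by the zone and can take unboundedly many values; your primary reading --- model an event-predicting automaton with prophecy clocks only, i.e.\ $X_H=\emptyset$ --- is the right one.
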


\section{Finiteness of the simulation relation} \label{sec:finiteness}

In this section, we will show that the simulation relation $\preceq_{\A}$ defined in Section~\ref{sec:simulation} is finite, which implies that the reachability algorithm 
terminates.  Recall that given a \GTA\ $\A$, we have an associated map $\G$ from states of $\A$ to sets of atomic constraints.
Let $M=\max\{|c| \mid c\in\mathbb{Z} \text{ is used in some constraint of } \A\}$, the maximal constant of $\A$.  We have $M\in\mathbb{N}$ and constraints in the sets $\G(q)$ use constants in $\{-\infty,\infty\}\cup\{c\in\mathbb{Z}\mid|c|\leq M\}$.
We will refer to such constraints as \emph{$M$-bounded integral constraints}.

Recall that the simulation relation $\preceq_{\A}$ was defined on nodes of the  zone graph of $\A$ by $(q,Z)\preceq_{\A}(q',Z')$ if $q=q'$ and $Z\preceq_{\G(q)}Z'$.
This simulation relation $\preceq_{\A}$ is \emph{finite} if for any infinite sequence $(q,Z_{0}),(q,Z_{1}),(q,Z_{2}),\ldots$ of \emph{safely reachable} nodes in the zone graph of $\A$ we find $i<j$ with $(q,Z_{j})\preceq_{\A}(q,Z_{i})$, i.e., $Z_{j}\preceq_{\G(q)}Z_{i}$.
Notice that we restrict to \emph{safely reachable} zones in the definition above.  
Our goal now is to prove that the relation $\preceq_{\A}$ is finite.  
The structure of the proof is as follows.  
\begin{enumerate}
  \item We proved in Lemma~\ref{lem:dagger-diagonals} of Section~\ref{sec:dagger} that for any \emph{safely reachable} node $(q,Z)$ of the zone graph of $\A$, the canonical distance graph $\graph{Z}$ satisfies a set of conditions, that we call $(\dagger)$ conditions, which depend only on the maximal constant $M$ of $\A$ and the number of future clocks in $\A$.

  \item We will now introduce an equivalence relation $\sim_{M}$ of \emph{finite index} on valuations (depending on $M$ only) and show in Lemma~\ref{lem:sim_M main property} of Section~\ref{sec:eqreln} that, if $G$ is a set of atomic constraints using \emph{$M$-bounded integral constraints} 
  and if $Z$ is a zone such that its canonical distance graph $\graph{Z}$ satisfies $(\dagger)$ conditions, then $\da_{G}Z$ is a union of $\sim^{n}_{M}$ equivalence classes.
\end{enumerate}

\medskip\noindent \textbf{An equivalence relation of finite index on valuations.}~\label{sec:eqreln}
We first define an equivalence relation of finite index $\sim_{M}$ on valuations.
First, we define $\sim_{M}$ on 
$\alpha,\beta\in\overline{\mathbb{R}}=\mathbb{R}\cup\{-\infty,\infty\}$
by $\alpha\sim_{M}\beta$ if
$(\alpha\leqlt c \Longleftrightarrow \beta\leqlt c)$ for all
$(\leqlt,c)$ with ${\leqlt}\in\{<,\leq\}$ and
$c\in\{-\infty,\infty\}\cup\{d\in\mathbb{Z}\mid|d|\leq M\}$.  In
particular, if $\alpha\sim_{M}\beta$ then
$(\alpha=-\infty \Longleftrightarrow \beta=-\infty)$ and
$(\alpha=\infty \Longleftrightarrow \beta=\infty)$.

Next, for valuations $v_{1},v_{2}\in\V$, we define
$v_{1}\sim^{n}_{M}v_{2}$ by two conditions: $v_{1}(x)\sim_{nM}v_{2}(x)$ and
$v_{1}(x)-v_{1}(y)\sim_{(n+1)M}v_{2}(x)-v_{2}(y)$
for all clocks $x,y\in X$. Notice that we use $(n+1)M$ for differences of values.
Clearly, $\sim^{n}_{M}$ is an equivalence relation of finite index on
valuations.
Using this, we can show that the zones that are reachable in
a safe \GTA\ are unions of  $\sim^{n}_{M}$-equivalence classes.

\medskip\noindent \textbf{Distance graph for valuations that simulate a given valuation.}~\label{sec:distance graph for up G v}
For a valuation $v$, we let $\ua_{G}v=\{v'\in\V \mid v\preceq_{G} v'\}$, i.e., the set of valuations $v'$ which simulate $v$.  
We will define a distance graph, denoted $\graphv{G}{v}$, such that $\sem{\graphv{G}{v}}=\ua_{G}v$.  
We remark that $\sem{\graphv{G}{v}}$ is not really a zone since it may use constants that are not integers.

We will now define the distance graph $\graphv{G}{v}$ which denotes the set $\ua_{G}v$. 
We will define $\graphv{G}{v}$ as the intersection of a distance graphs $\GG_{v}^{G}$ and a guard $g_{v}^{G}$.

  \begin{definition}
    The distance graph $\GG_{v}^{G}$ is defined as follows.
    \begin{itemize}
      \item For each future clock $x \in X_{F}$, we have the edges 
      $x \xra{(\leq,-v(x))} 0$ and $0 \xra{(\leq,v(x))} x$.

      \item For each history clock $y \in X_{H}$, we have 
      \begin{itemize}
        \item the edge $0 \to y$ with weight $(\leq,v(y))$ if there is a constraint $y \leqlt c \in G$ with $c < \infty$ and $v \models y \leqlt c$.

        \item the edge $y \to 0$ with weight $(\leq, -v(y))$ if there is a constraint $c \leqlt y \in G$ with $c < \infty$ and $v \not\models c \leqlt y$.
      \end{itemize}
    \end{itemize}
  \end{definition}

\begin{definition}
  The guard $g_{v}^{G}$ is given by the set of all constraints of the form $y - x \leqlt
  c$ in $G$ where $x,y \in X \cup \{0\}$ and $v \models y - x \leqlt c$.
\end{definition}

With this definition, we can show 
that if $G$ is a set of atomic constraints containing both $x\leq 0$ and $0\leq x$ for
each clock $x\in X_{F}$, then $\ua_{G}v = \sem{\GG_{v}^{G}}\cap\sem{g_{v}^{G}}$.

\begin{lemma}\label{lem:upset-v}
  Let $G$ be a set of 
  constraints such that for all future clock 
  $x\in X_{F}$ we have both $x\leq 0$ and $0\leq x$ in $G$.
  We have $\ua_{G}v = \sem{\GG_{v}^{G}}\cap g_{v}^{G}$.
\end{lemma}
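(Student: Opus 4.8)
The plan is to prove the two inclusions $\ua_{G}v\subseteq\sem{\GG_{v}^{G}}\cap g_{v}^{G}$ and $\sem{\GG_{v}^{G}}\cap g_{v}^{G}\subseteq\ua_{G}v$ separately.

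For the inclusion $\subseteq$, fix $v'\in\ua_{G}v$. Membership $v'\in g_{v}^{G}$ is immediate: every constraint of $g_{v}^{G}$ is some $\varphi\in G$ with $v\models\varphi$, so instantiating $v\preceq_{G}v'$ at $\delta=0$ gives $v'\models\varphi$. For $v'\in\sem{\GG_{v}^{G}}$ I check each edge. The two edges at a future clock $x$ arise from the constraints $x\leq0$ and $0\leq x$ assumed to lie in $G$: instantiating the simulation at the critical delay $\delta=-v(x)\ge0$ (and at arbitrarily large $\delta$ if $v(x)=-\infty$) forces $v'(x)=v(x)$, which is exactly what the weights $(\leq,v(x))$ on $0\to x$ and $(\leq,-v(x))$ on $x\to0$ express. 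An edge $0\to y$ at a history clock $y$ is present only when there is a constraint $y\leqlt c\in G$ with $c$ finite and $v\models y\leqlt c$; then $v(y)$ is finite, and instantiating $\preceq_{G}$ at delays approaching $\delta=c-v(y)$ forces $v'(y)\le v(y)$, i.e.\ the constraint $y-0\le v(y)$. Symmetrically, an edge $y\to0$ comes with a constraint $c\leqlt y\in G$, $c$ finite, $v\not\models c\leqlt y$; there $v(y)$ is finite and examining the delays at which $v+\delta$ starts to satisfy $0-y\leqlt -c$ forces $v'(y)\ge v(y)$, i.e.\ $0-y\le -v(y)$. Strict versus non-strict comparisons are handled routinely and explain why these edges carry $\le$ rather than $\leqlt$.

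For the inclusion $\supseteq$, fix $v'\in\sem{\GG_{v}^{G}}\cap g_{v}^{G}$, a constraint $\varphi=(y-x\leqlt c)\in G$ and a delay $\delta\ge0$ with $v+\delta\models\varphi$; the goal is $v'+\delta\models\varphi$. The key observation is how $(v+\delta)(y)-(v+\delta)(x)$ depends on $\delta$: it is independent of $\delta$ when $x,y\neq0$, equals $v(y)+\delta$ when $x=0$, and equals $-v(x)-\delta$ when $y=0$ (and $\varphi$ is trivial if $x=y=0$). If $x,y\neq0$, then $v+\delta\models\varphi$ is equivalent to $v\models\varphi$, hence $\varphi\in g_{v}^{G}$, hence $v'\models\varphi$, and the same equivalence for $v'$ gives $v'+\delta\models\varphi$. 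If exactly one of $x,y$ is $0$ and the other is a future clock, the future-clock edges of $\GG_{v}^{G}$ already pin that clock to its value in $v$, so the claim is immediate. Otherwise one of $x,y$ is $0$ and the other is a history clock $z$. If $\varphi$ has the shape $z-0\leqlt c$: for $c=+\infty$ the constraint is trivially true or belongs to $g_{v}^{G}$, and for $c$ finite the hypothesis $v+\delta\models\varphi$ already forces $v\models z\leqlt c$ with $v(z)$ finite, so the edge $0\to z$ is present and gives $v'(z)\le v(z)$, whence $v'(z)+\delta\leqlt c$. If $\varphi$ has the shape $0-z\leqlt c$: when $v\models\varphi$, membership in $g_{v}^{G}$ gives $v'\models\varphi$ and then $-v'(z)-\delta\le -v'(z)$ yields $v'+\delta\models\varphi$; when $v\not\models\varphi$, the value $v(z)$ is finite and the gating condition of the $z\to0$ edge holds, so that edge is present and gives $v'(z)\ge v(z)$, whence $-v'(z)-\delta\le -v(z)-\delta\leqlt c$.

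The proof is not conceptually hard; the work is in careful bookkeeping over the extended algebra: reconciling the shorthand $c\leqlt x$ with its atomic form $0-x\leqlt -c$, tracking the gating conditions (``$c<\infty$'' together with $v\models$ or $v\not\models$) that determine which history-clock edges are present, and handling $\pm\infty$ throughout --- a history clock is never $-\infty$; for $x,y\neq0$ the truth of $y-x\leqlt c$ under $v+\delta$ never depends on $\delta$, even if a clock is undefined; and $(\leq,v(y))$, not $(\leqlt,c)$, is the tightest bound the simulation imposes. The delicate point in the direction $\supseteq$ is the shape $\varphi=(0-z\leqlt c)$ with a history clock $z$ and $v\not\models\varphi$: there the $z\to0$ edge of $\GG_{v}^{G}$ is present precisely because its gating condition $v\not\models c\leqlt z$ holds, which is exactly what makes the bound $v'(z)\ge v(z)$ available and lets $\varphi$ transfer after the time shift.
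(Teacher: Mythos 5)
Your proof is correct and follows essentially the same route as the paper's: the $\subseteq$ direction instantiates the simulation at $\delta=0$ for $g_v^G$, at $\delta=-v(x)$ (or arbitrarily large $\delta$) for future clocks, and at the critical delays determined by finite-constant constraints for history-clock edges, while the $\supseteq$ direction does the same case analysis on constraint shape, using delay-invariance of diagonals, the pinning $v'(x)=v(x)$ for future clocks, and the gated history-clock edges. The only difference is presentational (you fix a witnessing $\delta$ up front instead of proving $v\preceq_{g'}v'$ constraint by constraint), and your handling of the infinite-constant corner cases and of the delicate $v\not\models(0-z\leqlt c)$ case matches the paper's.
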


\begin{proof}
  $\subseteq$: Let $v'$ be such that $v\preceq_{G}v'$. 
  By definition of the 
  simulation relation, for all $g'=y-x\leqlt c$ in $G$ such that $v\models g'$, we have 
  $v'\models g'$. Hence, $v'\models g_{v}^{G}$. 
  Next, let $x\in X_{F}$ be a future clock. If $v(x)=-\infty$ then for all 
  $0\leq\delta<\infty$ we have $v+\delta\models x\leq0$. Since $v\preceq_{G}v'$ we get 
  $v'+\delta\models x\leq 0$, which implies $v'(x)=-\infty=v(x)$. Otherwise,  
  let $0\leq\delta=-v(x)<\infty$. Since $v+\delta\models x\leq 0 \wedge 0\leq x$ and 
  $v\preceq_{G}v'$, we get $v'+\delta\models x\leq 0 \wedge 0\leq x$. We deduce that 
  $v'(x)=v(x)$. Therefore, $v'$ satisfies the edges $x\xra{\leq,-v(x)}0$ and 
  $0\xra{\leq,v(x)}x$ of $\GG_{v}^{G}$.
  
  Now, let $x\in X_{H}$ be a history clock. Assume that $v\models x\leqlt c$ for some 
  $x\leqlt c$ in $G$ with $0\leq c<\infty$. 
  Using $v\preceq_{x\leqlt c}v'$, we get $v'(x)\leq v(x)$. 
  Hence, $v'$ satisfies the edge $0\xra{\leq,v(x)}x$ of $\GG_{v}^{G}$.
  Assume that $v\not\models c\leqlt x$ for some $c\leqlt x$ in $G$ with $0\leq c<\infty$. 
  Again, we obtain $v(x)\leq v'(x)$ from $v\preceq_{c\leqlt x}v'$. 
  Hence, $v'$ satisfies the edge $x\xra{\leq,-v(x)}0$ of $\GG_{v}^{G}$.
  Thus, $v'$ satisfies all constraints of $\GG_{v}^{G}$, i.e., 
  $v'\in\sem{\GG_{v}^{G}}$.
  
  \medskip\noindent $\supseteq$: %
  Let $v\in\sem{\GG_{v}^{G}}$ with $v\models g_{v}^{G}$.
  Let $g'=y-x\leqlt c$ be a diagonal constraint in $G$ with $x,y\in X$.
  If $v\models g'$ then $g'$ is in $g_{v}^{G}$ and $v'\models g'$.
  Therefore, $v\preceq_{g'}v'$.
  
  Now, let $g'$ be a non-diagonal constraint on a future clock, i.e., $x\leqlt c$ or 
  $c\leqlt x$ with $x\in X_{F}$. Since $v\in\sem{\GG_{v}^{G}}$ we get $v'(x)=v(x)$ and we 
  deduce that $v\preceq_{g'}v'$.
  Let $g'$ be an upper non-diagonal constraint $x\leqlt c$ on a history clock $x\in X_{H}$.
  If $v\not\models g'$ then $v\preceq_{g'}v'$.
  If $v\models g'$ and $c$ is finite then we get $v'(x)\leq v(x)$ from the edge
  $0\xra{\leq,v(x)}x$ of $\GG_{v}^{G}$. Hence, $v\preceq_{g'}v'$.
  If $g'$ is $x<\infty$ and $v\models g'$ then $g'$ is in $g_{v}^{G}$ and we get 
  $v'(x)<\infty$ from $v'\models g_{v}^{G}$. We deduce that $v\preceq_{g'}v'$.
  If $g'$ is $x\leq\infty$ then $g'$ is equivalent to \emph{true} and $v\preceq_{g'}v'$.
  Let $g'$ be a lower non-diagonal constraint $c\leqlt x$ on a history clock $x\in X_{H}$.
  If $v\models g'$ then $g'$ is in $g_{v}^{G}$ and we get $v'\models g'$. 
  Therefore, $v\preceq_{g'}v'$. Assume now that $v\not\models g'$.
  If $c$ is finite then  we get $v(x)\leq v'(x)$ from the edge
  $x\xra{\leq,-v(x)}0$ of $\GG_{v}^{G}$. We deduce that $v\preceq_{g'}v'$.
  If $g'$ is $\infty<x$ then $g'$ is equivalent to \emph{false} and $v\preceq_{g'}v'$.
  Lastly, when $g'$ is $\infty\leq x$ and $v(x)$ is finite. Then, for all 
  $0\leq\delta<\infty$ we have $v+\delta\not\models g'$. Therefore, $v\preceq_{g'}v'$.
\end{proof}

\begin{remark}
  \begin{enumerate}
    \item $\GG_{v}^{G}$ is in standard form, but not necessarily in normal form.
    \item $\sem{\GG_{v}^{G}}$ is non-empty, since $v \in \sem{\GG_{v}^{G}}$.
    \item $g_{v}^{G}$ is a conjunction of atomic constraints, each of which is $(X_{D},M)$-safe.
  \end{enumerate}
\end{remark}

Further, we show that if $\GG_{v}^{G} \cap Z'$ is empty and $\GG'$ is the normalized
distance graph of $Z'$, then there is a small witness, i.e., a negative cycle in
$\min(\GG_{v}^{G},\GG')$ containing at most three edges, and belonging to one of three
specific forms.
This also gives us an efficient simulation
check for \GTA\ zone graphs.

\begin{lemma}\label{lem:negative-cycles-Gv2-Z}
  Let $v$ be a valuation, $Z'$ a non-empty \emph{reachable} event zone with canonical distance graph $\GG'$ and $G$ a set of atomic constraints. 
  Then, $\GG_{v}^{G} \cap Z'$ is empty iff there is a
  negative cycle in one of the following forms:
  \begin{enumerate}
  \item $0 \to x \to 0$ with $0\to x$ from $\GG_{v}^{G}$ and
    $x \to 0$ from $\GG'$,
  \item $0 \to y \to 0$ with $0 \to y$ from $\GG'$ and $y \to 0$
    from $\GG_{v}^{G}$, and
  \item $0 \to x \to y \to 0$, with weight of $x \to y$ from
    $\GG'$ and the others from $\GG_{v}^{G}$.
  \end{enumerate}
\end{lemma}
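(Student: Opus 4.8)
The plan is to reduce the emptiness of $\GG_{v}^{G}\cap Z'$ to the absence of a negative cycle in the combined graph $\GG^{*}:=\min(\GG_{v}^{G},\GG')$, and then to show that any negative cycle of $\GG^{*}$ can be shortened into one of the three listed shapes by collapsing $\GG'$-subpaths into single $\GG'$-edges. First I would note that $\GG_{v}^{G}\cap Z'=\sem{\GG_{v}^{G}}\cap\sem{\GG'}=\sem{\GG^{*}}$, and that $\GG^{*}$ is in standard form: the pointwise minimum of two standard graphs is standard (both clauses of the standard-form definition are preserved by $\min$, using that $\GG_{v}^{G}$ is standard and $\GG'=\graph{Z'}$ is normalized, hence standard). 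By Lemma~\ref{lem:standard-form} applied to $\GG^{*}$, the set $\GG_{v}^{G}\cap Z'$ is empty iff $\GG^{*}$ has a negative cycle, so it suffices to prove that $\GG^{*}$ has a negative cycle iff it has one of the three forms. The ``if'' direction is then immediate: each listed shape is a cycle of $\GG^{*}$ whose $\GG'$-edge weight dominates the corresponding $\GG^{*}$-weight, so if it is negative then $\GG^{*}$ has a negative cycle. It remains to treat ``only if''.

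For ``only if'', I would take a negative cycle $C$ of $\GG^{*}$ of minimum length; minimality forces $C$ to be simple (a non-simple negative closed walk splits into two shorter closed walks, one of which is negative by monotonicity of $+$ with respect to $\le$, and contains a shorter negative simple cycle). Attribute every edge $u\to w$ of $C$ to a graph realising $\GG^{*}_{uw}=\min((\GG_{v}^{G})_{uw},\GG'_{uw})$, breaking ties in favour of $\GG'$; then every edge attributed to $\GG_{v}^{G}$ is a genuine edge of $\GG_{v}^{G}$ (its weight is $\ne(\le,\infty)$). Two observations drive the rest. First, neither $\GG_{v}^{G}$ nor $\GG'$ has a negative cycle: $\GG'$ is normalized with $Z'=\sem{\GG'}\ne\emptyset$, and $\GG_{v}^{G}$ is in standard form with $v\in\sem{\GG_{v}^{G}}$, so Lemma~\ref{lem:standard-form} applies to both. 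Consequently $C$ uses at least one edge attributed to $\GG_{v}^{G}$ and at least one attributed to $\GG'$ (otherwise, with the same weights, $C$ would itself be a negative cycle of one of the two graphs). Second, by the very definition of $\GG_{v}^{G}$ every genuine edge of $\GG_{v}^{G}$ is incident to the vertex $0$. Since $C$ is simple and has an edge touching $0$, the vertex $0$ occurs exactly once on $C$, with one incoming and one outgoing edge; hence $C$ has exactly one or exactly two $\GG_{v}^{G}$-edges.

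The case analysis then finishes the argument. If $C$ has two $\GG_{v}^{G}$-edges, they are the out-edge $0\to a$ and the in-edge $b\to 0$ of $0$, and the remainder of $C$ is a path $P$ from $a$ to $b$ using only $\GG'$-edges. I first rule out $a=b$: then $P$ is a closed walk of $\GG'$, so $w(P)\ge(\le,0)$, and using only monotonicity of $+$ with respect to $\le$ (valid even with infinities, Remark~\ref{rem:extended-addition}, item~3) we get $w(0\to a)+w(a\to 0)\le w(0\to a)+w(P)+w(a\to 0)=w(C)<(\le,0)$, i.e.\ a negative cycle inside $\GG_{v}^{G}$, a contradiction. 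So $a\ne b$, $P$ is nonempty, and since $\GG'$ is normalized, $\GG'_{ab}\le w(P)$; replacing $P$ by the single edge $a\to b$ gives a negative cycle $0\to a\to b\to 0$ of form (3). If $C$ has a single $\GG_{v}^{G}$-edge, it is incident to $0$, hence is $0\to a$ or $b\to 0$; in the first case the rest of $C$ is a $\GG'$-path from $a$ to $0$, which normalization ($\GG'_{a0}\le w(\text{path})$) collapses to the edge $a\to 0$, yielding a negative cycle $0\to a\to 0$ of form (1); the second case is symmetric and yields form (2).

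I expect the main obstacle to be not conceptual but careful handling of the extended algebra: every step must avoid subtraction or cancellation (both fail at $\pm\infty$), so collapsing a $\GG'$-subpath into a single edge has to be justified solely via the normalization inequality $\GG'_{xy}\le w(\text{path})$, together with the fact that a negative cycle never routes through an edge of weight $(\le,\infty)$ (since $(\le,\infty)+\alpha=(\le,\infty)$ for every admissible $\alpha$). The other point that needs care is the tie-breaking convention in the attribution of edges: it is exactly what guarantees that the $\GG_{v}^{G}$-edges of a minimal negative cycle are genuine edges at $0$, which is the structural fact that bounds the cycle length by three. As a by-product, this yields the announced short-cycle criterion and hence an $\mathcal{O}(|X|^{2})$-time simulation check.
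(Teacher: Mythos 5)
Your proposal is correct and follows essentially the same route as the paper's proof: reduce emptiness of the intersection to a negative cycle in $\min(\GG_{v}^{G},\GG')$, restrict to simple negative cycles, use normalization of $\GG'$ to collapse consecutive $\GG'$-edges, and exploit that every edge of $\GG_{v}^{G}$ is incident to node $0$ to bound the cycle by the three listed shapes. Your write-up is merely more explicit than the paper (standardness of the min graph, tie-breaking attribution, the $a=b$ case), with no substantive difference.
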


\begin{proof}
  Since the distance graph $\GG'$ is in normal form, it has no negative cycle. 
  Similarly, $\GG_{v}^{G}$ has no negative cycle since $v\in \GG_{v}^{G} \neq\emptyset$.  
  We know that $\GG_{v}^{G} \cap Z'=\emptyset$ iff there is a (simple) negative cycle in $\min(\GG_{v}^{G},\GG')$.  
  Since $\GG'$ is in normal form, we may restrict to negative cycles which do not use two consecutive edges from $\GG'$.  
  Further, note that all edges of $\GG_{v}^{G}$ are adjacent to node $0$.  Hence, if a simple cycle uses an edge from $\GG'$ which is adjacent to $0$, it
  consists of only two edges $0\to x\to 0$, one from $\GG'$ and one from $\GG_{v}^{G}$.
  Otherwise, the simple cycle is of the form $0\to x\to y\to 0$ where the edge $x\to y$ is from $\GG'$ and the other two edges are from $\GG_{v}^{G}$.  
\end{proof}

\begin{lemma}~\label{lem:sim_M and graphv}
  Let $v \sim^{n}_{M} v'$ and $G$ be a set of $M$-bounded integral constraints. 
  Then, we have the following
  \begin{enumerate}
  \item $g_{v'}^{G} = g_{v}^{G}$. 
  \item The graph $\GG_{v'}^{G}$ is obtained by replacing the weights $(\leq,v(x))$ (resp.\ $(\leq,-v(x))$) by $(\leq,v'(x))$ (resp.\ $(\leq,-v'(x))$) in the graph $\GG_{v}^{G}$.
\end{enumerate}
\end{lemma}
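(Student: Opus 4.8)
The plan is to reduce both parts to a single observation: that $v$ and $v'$ satisfy exactly the same $M$-bounded integral atomic constraints over $X \cup \{0\}$. Precisely, I claim that for every atomic constraint $\varphi = (y - x \leqlt c)$ with $x,y \in X \cup \{0\}$ and $c \in \{-\infty,\infty\} \cup \{d \in \mathbb{Z} \mid |d| \le M\}$, we have $v \models \varphi$ iff $v' \models \varphi$. Granting this, part~(1) is immediate, since $g_v^G$ and $g_{v'}^G$ are by definition the subsets of $G$ satisfied by $v$ and by $v'$, and every constraint of $G$ is $M$-bounded integral. For part~(2), observe first that the two edges attached to a future clock $x$, namely $x \xra{(\leq,-v(x))} 0$ and $0 \xra{(\leq,v(x))} x$, occur unconditionally in both $\GG_v^G$ and $\GG_{v'}^G$, and their weights differ exactly by the substitution $v(x) \mapsto v'(x)$. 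For a history clock $y$, the edge $0 \to y$ occurs in $\GG_v^G$ precisely when some constraint $y \leqlt c \in G$ with $c < \infty$ satisfies $v \models y \leqlt c$; such constraints are $M$-bounded integral, so by the claim this happens iff it happens for $v'$, i.e.\ iff $0 \to y$ occurs in $\GG_{v'}^G$; and when it occurs, its weight is $(\leq,v(y))$ in one graph and $(\leq,v'(y))$ in the other. The edge $y \to 0$ is treated symmetrically, using the constraints $c \leqlt y$, equivalently $0 - y \leqlt -c$, which are again $M$-bounded integral. Hence $\GG_{v'}^G$ is obtained from $\GG_v^G$ exactly by the announced weight substitution.

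To prove the claim I would first record an auxiliary fact: each $\sim_K$ is closed under unary minus, i.e.\ $\alpha \sim_K \beta$ implies $-\alpha \sim_K -\beta$. Indeed, by Remark~\ref{rem:extended-addition}(5), $-\alpha \leqlt c$ is equivalent to $-c \leqlt \alpha$, which by totality of the order on $\overline{\mathbb{R}}$ is the negation of a constraint $\alpha \leqlt' (-c)$ for the suitable $\leqlt' \in \{<,\le\}$; since the constant set of $\sim_K$ is closed under negation, $\alpha \sim_K \beta$ transfers this equivalence from $\alpha$ to $\beta$. Now fix $\varphi = (y - x \leqlt c)$ as in the claim; recall $n \ge 1$. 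If $x = y = 0$, then $v \models \varphi$ iff $0 \leqlt c$ iff $v' \models \varphi$, independently of the valuations. If $x = 0$ and $y \in X$, then $\varphi$ asserts $v(y) \leqlt c$; since $v(y) \sim_{nM} v'(y)$ and $c$ lies in the constant set of $\sim_{nM}$ (as $n \ge 1$), the two sides agree. If $y = 0$ and $x \in X$, then, using that $0$ is neutral for extended addition, $\varphi$ asserts $-v(x) \leqlt c$; from $v(x) \sim_{nM} v'(x)$ and closure under unary minus, $-v(x) \sim_{nM} -v'(x)$, so the two sides agree. Finally, if $x,y \in X$, then $\varphi$ asserts $v(y) - v(x) \leqlt c$, and by definition of $\sim_M^n$ we have $v(y) - v(x) \sim_{(n+1)M} v'(y) - v'(x)$, while $c$ lies in the constant set of $\sim_{(n+1)M}$; hence the two sides agree. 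This establishes the claim, and with it the lemma.

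The only delicate point is the bookkeeping of which component of $\sim_M^n$ to invoke — the value component $\sim_{nM}$ for constraints involving $0$, the difference component $\sim_{(n+1)M}$ for genuinely diagonal ones — together with the harmless unary-minus step needed for lower-bound constraints. There is no real obstacle: the constants $nM$ and $(n+1)M$ in the definition of $\sim_M^n$ are chosen precisely so that every constant occurring in an $M$-bounded integral constraint, and its negation, is tracked by the relevant component.
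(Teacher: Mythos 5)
Your proof is correct and follows essentially the same route as the paper: both arguments reduce the lemma to the observation that $\sim^{n}_{M}$-equivalent valuations satisfy the same $M$-bounded integral constraints (using the value component $\sim_{nM}$ for constraints against $0$ and the difference component $\sim_{(n+1)M}$ for diagonals), and then read off the definitions of $g_{v}^{G}$ and $\GG_{v}^{G}$. Your packaging of this as a single satisfaction-transfer claim, with the explicit unary-minus closure of $\sim_K$, just makes precise what the paper's terser case analysis does inline.
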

  
\begin{proof}
\begin{enumerate}
  \item $g_{v'}^{G} = g_{v}^{G}$ is easy to see from the definition of $\GG_{v'}^{G}$ and $\GG_{v}^{G}$, and the fact that $v \sim_{(n+1)M} v'$.

  \item For a future clock $x \in X_{F}$, this is easy to see from the definition for edges $x \to 0$ and $0\to x$ adjacent to $x$.

  We consider now edges adjacent to history clocks $y \in X_{H}$.
  \begin{itemize}
  \item Consider the edge $0\to y$.
    If its weight is $(\leq,v(y))$ in $\GG_{v}^{G}$ 
    then there is some $y \leqlt c\in G$ with $c<\infty$ and $v(y)\leqlt c$.
    Since $v \sim_{(n+1)M} v'$, we deduce that $v'(y)\leqlt c$ and the edge $0\to y$ has
    weight $(\leq,v'(y))$ in $\GG_{v'}^{G}$. 
    
  \item Consider the edge $y\to 0$.
  If its weight is $(\leq,-v(y))$ in $\GG_{v}^{G}$, then there is some $c\leqlt y \in
  G$ with $c<\infty$ and $c\not\leqlt v(y)$.
  Since $v \sim_{(n+1)M} v'$, we deduce that $c\not\leqlt v'(y)$ and the edge $y\to0$
  has weight $(\leq,-v'(y))$ in $\GG_{v'}^{G}$.
  \qedhere
      
  \end{itemize}
\end{enumerate}
\end{proof}

Using all the results above, we can now show that the zones that are reachable in a safe \GTA\ are unions of  $\sim^{n}_{M}$-equivalence classes.

\begin{remark}
  Before we state the lemma, we list some properties that we will use extensively in the proof of the lemma.
  \begin{enumerate}
    \item $-b \leqlt a$ iff $-a \leqlt b$ iff $(\leq,0) \leq (\leqlt, a+b)$.
    \item $a \leqlt b$ iff $\neg(b ~\tilde{\leqlt}~ a)$  where $\tilde{\leq} = <$ and $\tilde{<} = \leq$. 
    \item $\alpha \sim_{M} \beta$ and $c \in \mathbb{R}$ is such that $-M\leq c\leq M$ or $(\leqlt,c) \in \{(\leq,-\infty),(<,\infty),(\leq,\infty)\}$, then, $c \leqlt \alpha$ iff $c \leqlt \beta$. 
    This is because 
    \begin{itemize}
      \item $c \leqlt \alpha$ iff $\neg(\alpha ~\tilde{\leqlt}~ c)$ by (2) above.
      \item $\neg(\alpha ~\tilde{\leqlt}~ c)$ iff $\neg(\beta ~\tilde{\leqlt}~ c)$ by definition of $\sim_{M}$ equivalence.
      \item $\neg(\beta ~\tilde{\leqlt}~ c)$ iff $c \leqlt \beta$  by (2) above.
    \end{itemize}    
  \end{enumerate}
\end{remark}

\begin{lemma}\label{lem:sim_M main property}
  Let $G$ be a set of $X_{D}$-safe $M$-bounded integral constraints which contains
  both $x\leq 0$ and $0\leq x$ for each future clock $x\in X_{F}$.
  Let $Z$ be a zone with a canonical distance graph $\graph{Z}$ satisfying the 
  $(\dagger)$ conditions of Lemma~\ref{lem:dagger-diagonals}.  
  Let $v_{1},v_{2}\in\V$ be valuations with $v_{1}\sim^{n}_{M}v_{2}$.
  Then, $v_{1}\in\da_{G}Z$ iff $v_{2}\in\da_{G}Z$.
\end{lemma}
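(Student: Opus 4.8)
By symmetry (swapping $v_1$ and $v_2$) it suffices to prove $v_2\notin\da_{G}Z\Rightarrow v_1\notin\da_{G}Z$. I would first rephrase membership in $\da_{G}Z$ through distance graphs. By definition $v\in\da_{G}Z$ iff $Z\cap\ua_{G}v\neq\emptyset$, and since $G$ contains $x\leq 0$ and $0\leq x$ for every future clock, Lemma~\ref{lem:upset-v} gives $\ua_{G}v=\sem{\GG_{v}^{G}}\cap g_{v}^{G}$, so $v\in\da_{G}Z$ iff $\sem{\GG_{v}^{G}}\cap(Z\wedge g_{v}^{G})\neq\emptyset$. By Lemma~\ref{lem:sim_M and graphv} (applicable as $v_1\sim^{n}_{M}v_2$ and $G$ is $M$-bounded integral) we have $g_{v_1}^{G}=g_{v_2}^{G}=:g$, and $\GG_{v_2}^{G}$ is obtained from $\GG_{v_1}^{G}$ by keeping exactly the same edge set and replacing, on each edge incident to $0$, the weight $(\leq,v_1(z))$ (resp.\ $(\leq,-v_1(z))$) by $(\leq,v_2(z))$ (resp.\ $(\leq,-v_2(z))$). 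Set $W:=Z\wedge g$; this is an integral zone, the same for $v_1$ and $v_2$, and $v_i\in\da_{G}Z$ iff $\sem{\GG_{v_i}^{G}}\cap W\neq\emptyset$. If $W=\emptyset$ we are done; otherwise let $\GG'=\graph{W}$, which is normal. Every constraint of $g$ is $(X_{D},M)$-safe, so $W$ is again $(X_{D},M)$-safely reachable when $Z$ is (guard intersection with a safe guard is a safe operation, cf.\ Definition~\ref{def:safe-operations}), hence $\GG'$ satisfies the $(\dagger)$ conditions of Lemma~\ref{lem:dagger-diagonals}.

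Next I would reduce the goal to transferring a negative cycle. By Lemma~\ref{lem:negative-cycles-Gv2-Z} applied to $W$, $\sem{\GG_{v_i}^{G}}\cap W=\emptyset$ iff $\min(\GG_{v_i}^{G},\GG')$ has a negative cycle of shape (a) $0\to z\to 0$ with $z\to 0$ from $\GG'$, (b) $0\to z\to 0$ with $0\to z$ from $\GG'$, or (c) $0\to x\to y\to 0$ with $x\to y$ from $\GG'$ and the two outer edges from $\GG_{v_i}^{G}$. Assuming such a cycle for $v_2$, I produce one for $v_1$. Since $\GG_{v_1}^{G}$ and $\GG_{v_2}^{G}$ differ only in the weights $(\leq,\pm v(z))$ on edges incident to $0$, the point is that the negativity of each shape is preserved when $v_2$ is replaced by $v_1$. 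The ingredients are: $v_1\sim^{n}_{M}v_2$ yields $v_1(z)\sim_{nM}v_2(z)$ and $v_1(x)-v_1(y)\sim_{(n+1)M}v_2(x)-v_2(y)$ for all clocks, and both relations are closed under negation; and, rewriting the negativity of a two- or three-edge cycle of the above shapes via Lemma~\ref{lem:weight-properties} and the elementary facts on weights recorded just before the statement, it amounts to a comparison of $\pm v_2(z)$, resp.\ $\pm(v_2(x)-v_2(y))$, against the \emph{integer} weight carried by the unique $\GG'$-edge of the cycle; so provided that integer has magnitude at most $nM$, resp.\ $(n+1)M$, the comparison has the same truth value for $v_1$ (the sub-cases where a $\GG'$-weight is $(\leq,\pm\infty)$ or $(<,\infty)$, or a clock value is $\pm\infty$, are decided outright and are stable under $\sim$, which preserves $\pm\infty$).

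Finally I would check the shapes. For (a) and (b) the $\GG'$-edge is $\GG'_{z0}$ or $\GG'_{0z}$: when $z\in X_{F}$, $\dagger_{1}$ (resp.\ $\dagger_{2}$) confines its finite values to $[(\leq,0),(\leq,nM)]$ (resp.\ $[(<,-nM),(\leq,0)]$), so the transfer above applies; when $z\in X_{H}$, that $\GG_{v}^{G}$-edge is present only because of a finite $M$-bounded one-sided constraint of $G$, which forces $v_i(z)\in[0,M]$, and one checks the cycle is either negative for both $v_1,v_2$ (when the $\GG'$-weight lies outside a window of radius $M$ around $0$, using $0\le v_i(z)\le M$) or governed by a comparison inside the $\sim_{nM}$-window. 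For (c) with $x,y\in X_{F}$, $\dagger_{4}$ confines finite $\GG'_{xy}$ to $[(<,-nM),(\leq,nM)]$, and negativity is a comparison of $v_2(x)-v_2(y)$ against a constant of magnitude at most $nM\le(n+1)M$, transferred by the $\sim_{(n+1)M}$-equivalence of differences.

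The main obstacle is shape (c) with $x,y$ of mixed kind, since no $(\dagger)$-condition bounds a future-to-history or history-to-future edge of $\GG'$ directly, so the cycle cannot be moved verbatim from $v_2$ to $v_1$. Here I would split on the value that $\GG_{v_2}^{G}$ pins the future endpoint to. Combining $\dagger_{3}$ (the lower bound on history-to-future edges), the standard-form bounds on $\GG'$, and the fact that history-clock endpoints of $\GG_{v}^{G}$ carry values in $[0,M]$, one shows that either this value lies in $[-nM,0]$ (or is $-\infty$), which forces the relevant comparison back inside the $\sim_{(n+1)M}$-window so the cycle transfers directly; or it lies strictly below $-nM$, in which case a $(\dagger)$-bounded $\GG'$-edge from that endpoint to $0$ (or a $(\leq,-\infty)$-edge surfaced by the standard form) together with the matching $\GG_{v_2}^{G}$-edge forms a negative two-edge cycle of shape (a)/(b), already known to transfer. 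This exhausts all shapes, hence $v_1\notin\da_{G}Z$, which completes the proof.
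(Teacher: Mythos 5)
Your overall route is the same as the paper's: reduce membership in $\da_{G}Z$ to non-emptiness of $\sem{\GG_{v_i}^{G}}\cap(Z\wedge g)$, use Lemma~\ref{lem:sim_M and graphv} to identify the guards and edge sets, invoke Lemma~\ref{lem:negative-cycles-Gv2-Z} to get a short negative cycle, and transfer it using the $(\dagger)$ bounds and the $\sim^{n}_{M}$ windows. The two-edge shapes and the pure-future three-edge shape are handled exactly as in the paper (and you silently skip the easy shape with two history endpoints, which your $[0,M]$-pinning argument would cover). The genuine problem is your treatment of the mixed three-edge shape, specifically the cycle $0 \xra{(\leq,v_{2}(y))} y \xra{Z'_{yx}} x \xra{(\leq,-v_{2}(x))} 0$ with $y\in X_{H}$, $x\in X_{F}$. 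Your dichotomy on the pinned value $v_{2}(x)$ is not exhaustive: when $v_{2}(x)<-nM$, your fallback is a negative two-edge cycle at the \emph{future} endpoint, i.e.\ $0\xra{(\leq,v_{2}(x))}x\xra{Z'_{x0}}0$ bounded via $\dagger_{1}$, or a $(\leq,-\infty)$ edge. But $\dagger_{1}$ only bounds $Z'_{x0}$ when some edge $Z'_{xy'}$ with $y'\in X_{H}\cup\{0\}$ is finite; in this shape the $Z'$-edge used is $Z'_{yx}$, which gives no such hypothesis. Concretely (with $n=M=1$) take $Z'$ with $Z'_{y0}=(\leq,-11)$, $Z'_{0x}=(\leq,-1)$, $Z'_{yx}=(\leq,-12)$, $Z'_{x0}=(\leq,\infty)$: this is standard, normal, and satisfies all $(\dagger)$ conditions (for $\dagger_{3}$, $(\leq,-11)+(<,-1)\leq(\leq,-12)$), and such zones genuinely occur, e.g.\ after releasing $x\notin X_{D}$. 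With $v_{2}(y)=0.5$, $v_{2}(x)=-2$ the three-edge cycle is negative, but both candidate two-edge cycles at $x$ have weights $(\leq,1)$ and $(\leq,\infty)$, hence are not negative, and the three-edge cycle itself does not transfer: a valuation $v_{1}$ with $v_{1}(y)=0.5$, $v_{1}(x)=-2.5$ satisfies $v_{1}\sim^{n}_{M}v_{2}$ (both differences exceed $(n+1)M$) yet $v_{1}(y)-v_{1}(x)=3\not>12$.

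The repair is exactly the step your split loses, and it lives at the \emph{history} endpoint, which is how the paper's contradiction formulation gets it for free: there, non-negativity of the sub-cycle $0\xra{(\leq,v_{1}(y))}y\xra{Z'_{y0}}0$ yields $(\leq,-M)\leq Z'_{y0}$, and then $\dagger_{3}$ (usable because $Z'_{0x}\neq(\leq,-\infty)$, again from a non-negative sub-cycle) pins $Z'_{yx}$ into $[-(n+1)M,0]$, after which the $\sim_{(n+1)M}$ window applies. In your direct-transfer phrasing you must case-split on whether the two-edge cycle $0\to y\to 0$ (with $Z'_{y0}$) is negative for $v_{2}$: if it is, it transfers because $v_{1}(y),v_{2}(y)\in[0,M]$ (as in my example, where $Z'_{y0}=(\leq,-11)$ does the job); if it is not, you recover $(\leq,-M)\leq Z'_{y0}$ and proceed via $\dagger_{3}$ as in the paper. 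As written, your proposal never produces this history-endpoint cycle, so the mixed case is not closed; with that fix (and a sentence for the two-history-clock shape) the argument matches the paper's.
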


\begin{proof}
  Notice that $v\in\da_{G}Z$ iff $\ua_{G}v\cap Z\neq\emptyset$.
  We need to show that $\ua_{G}v_{1}\cap Z\neq\emptyset$ iff $\ua_{G}v_{2}\cap Z\neq\emptyset$. 
  Using the characterization of up-sets 
  given by Lemma~\ref{lem:upset-v}, 
  this amounts to 
  $Z\cap g_{v_1}^{G} \cap \sem{\GG_{v_1}^{G}} \neq\emptyset$ iff 
  $Z\cap g_{v_2}^{G} \cap \sem{\GG_{v_2}^{G}} \neq\emptyset$.

  Further, since $v_{1}\sim^{n}_{M}v_{2}$, 
  using Lemma~\ref{lem:sim_M and graphv}, it follows that
  $g_{v_2}^{G} = g_{v_1}^{G}$.
  Let $Z' = Z \cap g_{v_2}^{G} = Z \cap g_{v_1}^{G}$. If $Z'$ is empty then the 
  equivalence holds. Otherwise,  
  let $\graph{Z'}$ be the normalized distance graph of $Z'$.
  Note that since $Z$ was an $(X_{D},M)$-safely reachable zone and $g_{v_1}^{G}$ is a
  conjunction of atomic constraints, each of which is $(X_{D},M)$-safe, it follows that
  $Z'$
  is an $(X_{D},M)$-safely reachable zone. 
  As a consequence, the $\dagger$ conditions of Lemma~\ref{lem:dagger-diagonals} apply to $Z'$.

  In the rest of the proof, we will now work with the zone $Z'$ (using its normalized
  distance graph representation $\graph{Z'}$) and the standard distance graphs
  $\GG_{v_1}^{G}$ and $\GG_{v_2}^{G}$.
  The proof proceeds by contradiction.
  We assume that $\ua_{G}v_{1}\cap Z\neq\emptyset$ and
  $\ua_{G}v_{2}\cap Z=\emptyset$.  
  This is equivalent to $Z'\cap\sem{\GG_{v_1}^{G}}\neq\emptyset$ and
  $Z'\cap\sem{\GG_{v_2}^{G}}=\emptyset$.
  By Lemma~\ref{lem:negative-cycles-Gv2-Z},
  we can find a negative cycle $C_{2}$ using one edge from
  $\graph{Z'}$ and one or two edges from $\GG_{v_2}^{G}$.  
  By Lemma~\ref{lem:sim_M and graphv}, 
  we have a corresponding cycle
  $C_{1}$ using the same edge from $\graph{Z'}$ and the same one or two
  edges from $\GG_{v_1}^{G}$ (with weights using $v_{1}$ instead of $v_{2}$).  
  The cycle $C_{1}$ is not negative since $Z'\cap\sem{\GG_{v_1}^{G}}\neq\emptyset$
  
  The rest of the proof involves a case analysis of the various forms that the cycle $C_2$
  can take, which we provide below.
  We consider the different cases.
  \begin{enumerate}
  \item Cycle
    $C_{2}=0 \xra{(\leq,v_{2}(y))} y \xra{Z'_{y0}} 0$ for some history clock $y \in X_{H}$. 
    
    We have $C_{1}=0 \xra{(\leq,v_{1}(y))} y \xra{Z'_{y0}} 0$.

    Since we have the edge $0 \xra{(\leq,v_{1}(y))} y$ in
    $\GG_{v_1}^{G}$, there is a constraint $y \leqlt' c'$ in
    $G$ with $c'<\infty$ and $v_{1}(y)\leqlt' c'$.
    We deduce that $0\leq v_{1}(y)\leq M$.

    Let $Z'_{y0}=(\leqlt,c)$.
    Since $C_{1}$ is not a negative cycle, we get
    $(\leq,0)\leq(\leqlt,c+v_{1}(y))$, which is equivalent to
    $-c\leqlt v_{1}(y)$. Using $0\leq v_{1}(y)\leq M$ and
    $v_{1}\sim^{n}_{M}v_{2}$ we deduce that $-c\leqlt v_{2}(y)$.  This
    is equivalent to $(\leq,0)\leq(\leqlt,c+v_{2}(y))$, a
    contradiction with $C_{2}$ being a negative cycle.
    
  \item Cycle
    $C_{2}=0 \xra{Z'_{0y}} y \xra{(\leq,-v_{2}(y))}
    0$ for some history clock $y \in X_{H}$.
    
    We have $C_{1}=0 \xra{Z'_{0y}} y \xra{(\leq,-v_{1}(y))} 0$.
    
    Since we have the edge $y \xra{(\leq,-v_{1}(y))} 0$ in
    $\GG_{v_1}^{G}$, there is a constraint $c'\leqlt'y$ in
    $G$ with $c'<\infty$ and $c'\not\leqlt' v_{1}(y)$.  We deduce that
    $0\leq v_{1}(y)\leq M$.

    Let $Z'_{0y}=(\leqlt,c)$.
    Since $C_{1}$ is not a negative cycle, we get
    $(\leq,0)\leq(\leqlt,c-v_{1}(y))$, which is equivalent to
    $v_{1}(y)\leqlt c$.  Using $v_{1}\sim^{n}_{M}v_{2}$ and
    $0\leq v_{1}(y)\leq M$, we deduce that
    $v_{2}(y)\leqlt c$.  This is equivalent to
    $(\leq,0)\leq(\leqlt,c-v_{2}(y))$, a contradiction with
    $C_{2}$ being a negative cycle.
    
  \item Cycle
    $C_{2}=0 \xra{(\leq,v_{2}(x))} x \xra{Z'_{x0}}
    0$ for some future clock $x \in X_{F}$.    
    
    We have $C_{1}=0 \xra{(\leq,v_{1}(x))} x \xra{Z'_{x0}} 0$.
    
    Since $C_{2}$ is negative, we have $Z'_{x0}\neq(\leq,\infty)$.  Also, if
    $Z'_{x0}=(<,\infty)$ then we must have $v_{2}(x)=-\infty$, which implies
    $v_{1}(x)=-\infty$ since $v_{1}\sim^{n}_{M}v_{2}$, a contradiction with $C_{1}$
    being non-negative.  Hence, $Z'_{x0}=(\leqlt,c)$ is finite and by
    $(\dagger_{1})$, we infer $0\leq c\leq nM$.
    
    Since $C_{1}$ is not negative, we get
    $(\leq,0)\leq(\leqlt,c+v_{1}(x))$, which is equivalent to
    $-c\leqlt v_{1}(x)$.  Using $v_{1}\sim^{n}_{M}v_{2}$ and
    $0\leq c\leq nM$ we deduce that $-c\leqlt v_{2}(x)$.  This is
    equivalent to $(\leq,0)\leq(\leqlt,c+v_{2}(x))$, a
    contradiction with $C_{2}$ being a negative cycle.
    
  \item Cycle
    $C_{2}=0 \xra{Z'_{0x}} x \xra{(\leq,-v_{2}(x))}
    0$  for some future clock $x \in X_{F}$.      
    
    We have $C_{1}=0 \xra{Z'_{0x}} x \xra{(\leq,-v_{1}(x))} 0$.
    
    Let $Z'_{0x}=(\leqlt,c)$.  Since $C_{2}$ is negative, we
    deduce that $v_{2}(x)\neq-\infty$.  Using
    $v_{1}\sim^{n}_{M}v_{2}$, we infer $v_{1}(x)\neq-\infty$.  Since
    $C_{1}$ is not negative, we get $Z'_{0x}\neq(\leq,-\infty)$.
    From
    $(\dagger_{2})$, we infer
    $(<,-nM)\leq Z'_{0x}\leq(\leq,0)$ and $-nM\leq c\leq 0$.
    
    Since $C_{1}$ is not a negative cycle, we get
    $(\leq,0)\leq(\leqlt,c-v_{1}(x))$, which is equivalent to
    $v_{1}(x)\leqlt c$.
    Using $v_{1}\sim^{n}_{M}v_{2}$ and $-nM\leq c\leq 0$, we deduce that
    $v_{2}(x)\leqlt c$.  
    This is equivalent to
    $(\leq,0)\leq(\leqlt,c-v_{2}(x))$, a contradiction with
    $C_{2}$ being a negative cycle.
  
  \item Cycle
    $C_{2}=0 \xra{(\leq,v_{2}(y))} y
    \xra{Z'_{yx}} x \xra{(\leq,-v_{2}(x))} 0$ for some history clock $y \in X_{H}$ and future clock $x \in X_{F}$.    
        
    We have
    $C_{1}=0 \xra{(\leq,v_{1}(y))} y
    \xra{Z'_{yx}} x \xra{(\leq,-v_{1}(x))} 0$.

    Let $Z'_{yx}=(\leqlt,c)$.  As in case 1 above, we get
    $0 \leq v_{1}(y) \leq M$.  From the fact that the cycle
    $0 \xra{(\leq,v_{1}(y))} y \xra{Z'_{y0}} 0$ is
    not negative, we get $(\leq,-M)\leq Z'_{y0}$.  Since $C_{2}$
    is negative, we get $v_{2}(x)\neq-\infty$.  Using
    $v_{1}\sim^{n}_{M}v_{2}$, we infer $v_{1}(x)\neq-\infty$.  From
    the fact that the cycle
    $0 \xra{Z'_{0x}} x \xra{(\leq,-v_{1}(x))} 0$ is
    not negative, we deduce $Z'_{0x}\neq(\leq,-\infty)$.  Using
    $(\dagger_{3})$
    we obtain
    $$
    (\leq,-M)+(<,-nM)\leq Z'_{y0}+(<,-nM) \leq Z'_{yx}
    =(\leqlt,c)
    $$
    and we deduce that $-(n+1)M\leq c\leq 0$.
    
    Since $C_{1}$ is not a negative cycle, we get
    $(\leq,0)\leq(\leqlt,c+v_{1}(y)-v_{1}(x))$, which is
    equivalent to $-c\leqlt v_{1}(y)-v_{1}(x)$.  Using
    $v_{1}\sim^{n}_{M}v_{2}$ and $-(n+1)M\leq c\leq 0$ we deduce that
    $-c\leqlt v_{2}(y)-v_{2}(x)$.  We conclude as in the
    previous cases.
    
  \item Cycle
    $C_{2}=0 \xra{(\leq,v_{2}(x))} x
    \xra{Z'_{xy}} y \xra{(\leq,-v_{2}(y))} 0$ for some history clock $y \in X_{H}$ and future clock $x \in X_{F}$.    

    We have
    $C_{1}=0 \xra{(\leq,v_{1}(x))} x
    \xra{Z'_{xy}} y \xra{(\leq,-v_{1}(y))} 0$.

    Since $C_{2}$ is negative but not $C_{1}$, we get first 
    $Z'_{xy}\neq(\leq,\infty)$ and then $v_{1}(x)\neq-\infty$.
    As in case 2 above, we get $0\leq v_{1}(y)\leq M$. 
    We deduce that $Z'_{xy}=(\leqlt,c)<(<,\infty)$ and $c\neq\infty$.
    From $(\dagger_{1})$ we obtain $Z'_{x0}\leq(\leq,nM)$.
    Since $0 \xra{(\leq,v_{1}(x))} x \xra{Z'_{x0}} 0$ is
    not a negative cycle, we get $-nM\leq v_{1}(x)\leq 0$.
    Finally, we obtain $0\leq v_{1}(y)-v_{1}(x)\leq (n+1)M$.

    Since $C_{1}$ is not a negative cycle, we get
    $(\leq,0)\leq(\leqlt,c+v_{1}(x)-v_{1}(y))$, which is
    equivalent to $v_{1}(y)-v_{1}(x)\leqlt c$.  Using
    $v_{1}\sim^{n}_{M}v_{2}$ and
    $0\leq v_{1}(y)-v_{1}(x)\leq (n+1)M$, we deduce that
    $v_{2}(y)-v_{2}(x)\leqlt c$.  We conclude as in the
    previous cases.

  \item Cycle
    $C_{2}=0 \xra{(\leq,v_{2}(x))} x
    \xra{Z'_{xy}} y \xra{(\leq,-v_{2}(y))} 0$
    with $x\neq y$ for future clocks $x,y \in X_{F}$.    

    We have
    $C_{1}=0 \xra{(\leq,v_{1}(x))} x
    \xra{Z'_{xy}} y \xra{(\leq,-v_{1}(y))} 0$.

    Since $C_{2}$ is negative but not $C_{1}$, using $v_{1}\sim^{n}_{M}v_{2}$ we get
    successively $Z'_{xy}\neq(\leq,\infty)$, $v_{2}(y)\neq-\infty\neq
    v_{1}(y)$, $v_{1}(x)\neq-\infty\neq v_{2}(x)$, and finally
    $(\leq,-\infty)< Z'_{xy}<(<,\infty)$.
    
    Let $Z'_{xy}=(\leqlt,c)$. 
    From $(\dagger_{4})$, we deduce that $-nM\leq c \leq nM$.

    Since $C_{1}$ is not a negative cycle, we get
    $(\leq,0)\leq(\leqlt,c+v_{1}(x)-v_{1}(y))$, which is
    equivalent to $v_{1}(y)-v_{1}(x)\leqlt c$.  Using
    $v_{1}\sim^{n}_{M}v_{2}$ and $-nM\leq c \leq nM$, we deduce that
    $v_{2}(y)-v_{2}(x)\leqlt c$.  We conclude as in the
    previous cases.
    
  \item Cycle
    $C_{2}=0 \xra{(\leq,v_{2}(x))} x
    \xra{Z'_{xy}} y \xra{(\leq,-v_{2}(y))} 0$
    with $x\neq y$ for history clocks $x,y \in X_{H}$.    

    We have
    $C_{1}=0 \xra{(\leq,v_{1}(x))} x
    \xra{Z'_{xy}} y \xra{(\leq,-v_{1}(y))} 0$.
    
    As in case 1 above, we get $0\leq v_{1}(x)\leq M$.  As in
    case 2 above, we get $0\leq v_{1}(y)\leq M$.  We obtain
    $-M\leq v_{1}(y)-v_{1}(x) \leq M$.

    Let $Z'_{xy}=(\leqlt,c)$.  Since $C_{1}$ is not
    negative, we get
    $(\leq,0)\leq(\leqlt,c+v_{1}(x)-v_{1}(y))$, which is
    equivalent to $v_{1}(y)-v_{1}(x)\leqlt c$.  Using
    $v_{1}\sim^{n}_{M}v_{2}$ and
    $-M\leq v_{1}(y)-v_{1}(x) \leq M$, we deduce that
    $v_{2}(y)-v_{2}(x)\leqlt c$.  We conclude as in the
    previous cases.
    
  \end{enumerate}
  Notice that we have crucially used the ``$(n+1)M$'' occurring in  the
  definition of $v_1 \sim^{n}_M v_2$ (as $v_1(x) - v_1(y)
  \sim_{(n+1)M} v_2(x) - v_2(y)$) in the cases where we deal with
  cycles containing one future clock and one history clock (Cases
  5 and 6). 
\end{proof}

Finally, from Lemmas~\ref{lem:dagger-diagonals} and \ref{lem:sim_M main property}, we obtain our
main theorem of the section.

\begin{theorem}\label{thm:simulation finite}
  The simulation relation $\preceq_{\A}$ is finite if $\Aa$ is safe.
\end{theorem}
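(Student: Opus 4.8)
The plan is to reduce finiteness of $\preceq_\A$ to a pigeonhole argument over the finitely many ``downward closures'' that can arise, using the two structural results already proved: the $(\dagger)$ conditions (Lemma~\ref{lem:dagger-diagonals}) and the saturation property (Lemma~\ref{lem:sim_M main property}). Fix a state $q$ and set $n=\max(1,|X_D|)$. Recall that $(q,Z)\preceq_\A(q,Z')$ holds exactly when $Z\subseteq\da_{\G(q)}Z'$, where $\da_{G}Z'=\{v\mid \exists v'\in Z',\ v\preceq_{G}v'\}$. The set $\G(q)$ consists of $M$-bounded integral constraints and, by the defining fixpoint equation for $\G$ (Equation~\ref{eq:fixpt}), contains $x\leq 0$ for each future clock $x$; I enlarge it to $G:=\G(q)\cup\{0\leq x\mid x\in X_F\}$. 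The added constraints are non-diagonal (hence trivially $X_D$-safe) and $M$-bounded, so $G$ is a set of $X_D$-safe $M$-bounded integral constraints containing both $x\leq 0$ and $0\leq x$ for every $x\in X_F$; moreover $\G(q)\subseteq G$ gives $\preceq_{G}\ \subseteq\ \preceq_{\G(q)}$ on valuations, hence $\da_{G}Z\subseteq\da_{\G(q)}Z$ for every zone $Z$. Consequently, to prove $\preceq_\A$ finite it is enough to show: for every infinite sequence $(q,Z_0),(q,Z_1),\dots$ of safely reachable nodes there are $i<j$ with $Z_j\subseteq\da_{G}Z_i$.

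To get this, I would first observe that each $\da_{G}Z_k$ is a union of $\sim^{n}_{M}$-equivalence classes. If $Z_k$ is empty this is vacuous; otherwise, since $\A$ is safe, $Z_k$ is $(X_D,M)$-safely reachable, so Lemma~\ref{lem:dagger-diagonals} applies and $\graph{Z_k}$ satisfies the $(\dagger)$ conditions. Then Lemma~\ref{lem:sim_M main property}, instantiated with this $G$ and $Z=Z_k$, says precisely that $v_1\sim^{n}_{M}v_2$ implies ($v_1\in\da_{G}Z_k$ iff $v_2\in\da_{G}Z_k$); that is, membership in $\da_{G}Z_k$ depends only on the $\sim^{n}_{M}$-class, so $\da_{G}Z_k$ is a union of such classes. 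Since $\sim^{n}_{M}$ has finite index on $\V$, there are only finitely many unions of $\sim^{n}_{M}$-classes, so the values $\da_{G}Z_0,\da_{G}Z_1,\dots$ must repeat: there exist $i<j$ with $\da_{G}Z_i=\da_{G}Z_j$. Finally, $\preceq_{G}$ is reflexive, so $Z_j\subseteq\da_{G}Z_j=\da_{G}Z_i\subseteq\da_{\G(q)}Z_i$, i.e.\ $(q,Z_j)\preceq_\A(q,Z_i)$, which completes the argument. (The argument applies verbatim for each of the finitely many states $q$.)

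All the real difficulty is in the two lemmas being invoked; the remaining work here is essentially bookkeeping, and I do not expect a genuine obstacle. The two points I would be most careful about are (i) that enlarging $\G(q)$ by the lower-bound constraints $0\leq x$ is sound — it only refines the simulation, so a finiteness witness for the larger $G$ is a finiteness witness for $\G(q)$ — and (ii) that the nodes enumerated by the algorithm really are the $(X_D,M)$-safely reachable ones, so that Lemma~\ref{lem:dagger-diagonals} is applicable; this is guaranteed by the earlier lemma stating that all reachable zones of an $X_D$-safe \GTA\ with maximal constant $M$ are $(X_D,M)$-safe. The genuinely hard structural analysis (controlling the weights of edges incident to future clocks via $(\dagger)$, and transferring non-emptiness of the $\da_G$-intersection across $\sim^{n}_{M}$ through the negative-cycle case analysis) has already been carried out in Sections~\ref{sec:dagger} and~\ref{sec:finiteness}.
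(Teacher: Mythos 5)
Your proposal is correct and follows essentially the same route as the paper's own proof: enlarge $\G(q)$ with the bounds $x\leq 0$ and $0\leq x$ on future clocks, apply Lemma~\ref{lem:dagger-diagonals} to see that each safely reachable $\graph{Z_k}$ satisfies $(\dagger)$, apply Lemma~\ref{lem:sim_M main property} to conclude that $\da_{G}Z_k$ is a union of $\sim^{n}_{M}$-classes, and finish by pigeonhole on the finite index together with $\G(q)\subseteq G$. The only (harmless) differences are cosmetic bookkeeping, such as noting that $x\leq 0$ is already in $\G(q)$ and spelling out the reflexivity step.
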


\begin{proof}
  Let $(q,Z_{0}),(q,Z_{1}),(q,Z_{2}),\ldots$ be an infinite sequence
  of \emph{reachable} nodes in the zone graph of $\A$.  
  By Lemma~\ref{lem:dagger-diagonals}, for all $i$, the distance graph $\graph{Z_{i}}$ in
  canonical form satisfies conditions $(\dagger)$.

  The set $\G(q)$ contains only $X_{D}$-safe and $M$-bounded integral constraints.
  Let $G$ be $\G(q)$ together with the constraints $x\leq 0$ and $0\leq x$ for each 
  future clock $x\in X_{F}$.
  From Lemma~\ref{lem:sim_M main property} we deduce that for all $i$, 
  $\da_{G}Z_{i}$ is a union of $\sim^{n}_{M}$-classes.
  Since $\sim^{n}_{M}$ is of finite index, there are only finitely many
  unions of $\sim^{n}_{M}$-classes.  Therefore, we find $i<j$ with
  $\da_{G}Z_{i}=\da_{G}Z_{j}$, which implies $Z_{j}\preceq_{G}Z_{i}$.
  Since $\G(q)\subseteq G$, this also implies $Z_{j}\preceq_{\G(q)}Z_{i}$.
\end{proof}

\section{Experimental evaluation}\label{sec:experiments}

We have implemented a prototype that takes as input a \GTA, as given in Definition~\ref{defn:gta}, and applies our reachability algorithm,
in the open source tool \textsc{Tchecker}~\cite{TChecker}. To do so, we extend
\textsc{Tchecker} to allow clocks to be declared as one of \emph{normal}, \emph{history}, \emph{prophecy}, or \emph{timer}, and extend the syntax of edges to allow arbitrary interleaving of guards and clock changes (reset/release).
Our tool, along with the benchmarks used in this paper, is publicly available and can be downloaded from \href{https://github.com/anirjoshi/GTA-Model}{https://github.com/anirjoshi/GTA-Model}.
We present selected results in Table~\ref{table:exp-results}, with
further details in Appendix~\ref{sec:app:experiments}.

\begin{table}[t]
  \centering
  \begin{tabular}{|l|l|r|r|r|r|r|r|}
    \hline
    Sl. 
    & \multicolumn{1}{|c|}{Models}
    & \multicolumn{3}{|c|}{$\G$-Sim}
    & \multicolumn{3}{|c|}{\GTA\ Reach}\\
    \cline{3-8}
    No. 
    & 
    & Visited & Stored & Time in
    & Visited & Stored & Time in\\
    & 
    & nodes & nodes & sec.
    & nodes & nodes & sec.\\
    \hline
    1 & Dining Phi. (6) & 5480 & 5480 & 4.911 & 5480 & 5480 & 6.410\\ 
    \hline
    2 & FDDI (10) & 10219 & 459 & 10.139 & 10219 & 459 & 16.797 \\ 
    \hline
    3 & Fischer (10) & 447598 & 260998 & 29.1574 & 447598 & 260998 & 34.6517 \\ 
    \hline
    \hline
    4 & $\toyeca(10000,4)$ &  150049 & 49 & 4.22 & 3 & 3 & 0.0003 \\ \hline
    5 & $\toyeca(5000,6)$ & 315193 & 193 & 15.572 & 3 & 3 & 0.0006 \\
    \hline
    6 & $\toyeca(1000,100)$ & \multicolumn{3}{|c|}{TIMEOUT}& 3 & 3 & 0.877 \\ \hline
    7 & $\toyeca(50000,120)$ & \multicolumn{3}{|c|}{TIMEOUT} & 3 & 3 & 1.52 \\ 
    \hline
    \hline
    8 & $\firealarmpattern$(5) & \multicolumn{3}{|c|}{\_} & 46 & 46 & 0.027\\
    \hline
    9 & $\csmacdbounded$(1) & \multicolumn{3}{|c|}{\_} & 34 & 26 & 0.0054\\
    \hline
    10 & $\csmacdbounded$(4) & \multicolumn{3}{|c|}{\_} & 4529 & 2068 & 2.597\\
    \hline
    11 & $\abpp$(1) & \multicolumn{3}{|c|}{\_} & 114 & 114 & 0.038\\
    \hline
    12 & $\abpq$(1) & \multicolumn{3}{|c|}{\_} & 168 & 168 & 0.026\\
    \hline
  \end{tabular}
  \caption{Experimental results obtained by running our prototype
  implementation and, when possible, the standard reachability
  algorithm using $\G$-simulation implemented in \textsc{Tchecker}.  
  Both implementations use a breadth-first search with simulation. 
  For each model, we give the parameters in parenthesis - for $\toyeca$, we explain the parameterization in Appendix~\ref{sec:app:experiments}, while for others, we report the number of concurrent processes.
  All experiments were run on an Ubuntu machine with an Intel-i5 7th Generation processor and 8GB RAM, and timeout set to 60 seconds.}
  \label{table:exp-results}
\end{table}

First, we consider timed automata models from standard benchmarks\cite{TripakisY01,FDDI,Dining-philosopher}.
Despite the overhead induced by our framework (e.g., maintaining general programs on transitions), we are only slightly worse off w.r.t.\ running time than the standard algorithm, while visiting and storing the same number of nodes.
We illustrate this in rows 1-3 of Table~\ref{table:exp-results} by providing a
comparison of our tool with the implementation of the state-of-the-art zone-based reachability algorithm using $\G$-simulation introduced in \cite{GastinMS18,GastinMS19, GastinMS20}.

Next, we consider 
models belonging to the class of ECA without diagonal constraints.  We
remark that ours is the first implementation of a reachability algorithm that can operate
on the whole class of ECA directly.
We compare against an implementation that first translates the ECA into
a timed automaton using the translation proposed in \cite{AFH99}, and then runs the
state-of-the-art reachability algorithm of \cite{GastinMS18,GastinMS19, GastinMS20} on
this timed automaton.
From rows 4-7 of Table~\ref{table:exp-results}, we observe significant improvements, both
in terms of running time as well as number of visited nodes and stored nodes w.r.t. the
standard approach.

Finally, in Rows 8-12, we consider the unified model \GTA.
As already pointed out, model-checking an event-clock specification
$\varphi$ over a timed automaton model $\Aa$ can be reduced to the reachability on the
product of the TA $\Aa$ and the ECA representing $\neg \varphi$.
In this spirit, our implementation allows the model to use any combination of
\emph{normal} clocks, \emph{history} clocks, \emph{prophecy} clocks or \emph{timers} and
moreover, permits diagonal guards between any of these clocks.
To the best of our knowledge, no existing tool allows all these features. 
We emphasize this by the $-$ in the $\G$-Sim column of Table~\ref{table:exp-results}.

We model simple but useful properties using event-clocks, and check these properties on some standard models from literature such as 
$\altbp$($\abp$)~\cite{KuroseRoss}, $\csmacd$~\cite{TripakisY01} and $\firealarm$~\cite{fire-alarm}.  
Note that for the benchmark $\firealarmpattern$, the specification is modelled using an ECA with diagonals.  
As a consequence, the product automaton that we check reachability on contains
normal clocks and event-clocks.  Here, we consider the following ECA specification:
no three $a$'s occur within $k$ time units.  The negation of this property can be
easily modeled by an ECA with two states and a transition on $a$ with the diagonal constraint $\history{a} - \prophecy{a} \le k$, where $\history{a}$ is the history clock recording time since the previous occurrence of $a$, and $\prophecy{a}$ is a future clock predicting the time to the next $a$ occurrence. 
When reading an $a$, the quantity $\history{a} - \prophecy{a}$ gives the distance between the next and the previous occurrence.  
This language is used in~\cite{DBLP:conf/fsttcs/BozzelliMP19a} to observe that ECA with diagonals are more expressive than ECA. 
Finally, we remark that the model of $\abp$ contains timers. 
For a more detailed discussion of the model and specifications in these
benchmarks, see Appendix~\ref{sec:app:experiments}.

In conclusion, as can be seen from the experimental results in Table~\ref{table:exp-results}, we are able to demonstrate the full power of our reachability algorithm for the unified model of \GTAfull.

\section{Conclusion}\label{sec:conclusion}

The success of timed automata verification can safely be attributed to
the advances in the zone-based 
technology over the last three decades. In fact,~\cite{Dill89}, the
precursor to the seminal works~\cite{DBLP:conf/icalp/AlurD90,AD94}, 
already laid the foundations for zones by describing the
Difference-Bounds-Matrices (DBM) data structure. Our goal in this
work has been to unify timing features defined in different timed models,
while at the same time retain the ability to use efficient state-of-the-art
algorithms for reachability. To do so, we have
equipped the model with two kinds of 
clocks, history and
future, and modified the transitions to contain a program that
alternates 
between a guard and a change to the variables. For the algorithmic
part, we have adapted the $\Gg$-simulation framework to this powerful
model. The main challenge was to show finiteness of the simulation in
this extended setting. To aid the practical use of this generic model,
we have developed a prototype implementation that can answer
reachability for \GTA. We remark that decidability for \GTA\ comes via zones, and not through regions. In fact, since we
generalize event-clock automata, we do not have a finite region
equivalence for \GTA~\cite{GeeraertsRS14}. 

We conclude with some interesting avenues for future work.
An immediate future work is to use \GTAfull\ for model-checking timed specifications over real-time systems.
Further, the complexity and expressivity of safe GTA are natural intersting theoretical open questions, but we believe they are not obvious. Both these questions are answered in the timed automata literature using regions. However, we cannot have a region equivalence for our model, since even for the subclass of ECA, it was shown that no finite bisimulation is possible.
In particular, it would be interesting to investigate if is possible to have a translation from safe GTA to timed automata. Note that even if such a translation exists, it is likely to incur an exponential blowup since even the translation from ECA to TA costs an exponential.
Coming to the complexity of the reachability problem for safe GTA, it is easy to see that our procedure runs in EXPSPACE, as we have shown that each reachable zone is a union of equivalence classes of a finite index (see Lemma~\ref{lem:sim_M main property}). On the other hand, PSPACE-hardness is inherited from timed automata~\cite{DBLP:conf/icalp/AlurD90,AlurPhD}. Closing the complexity gap is open. We note that even in timed automata, the precise complexity of the simulation based reachability algorithm is difficult to analyze, but its selling point is that it works well in practice.
Finally, we would also like to investigate liveness verification for \GTA, in
particular what future clocks bring us when we consider the setting of
$\omega$-words.

\bibliography{gta.bib}
\appendix

\section{Appendix for Section~\ref{sec:experiments}}\label{sec:app:experiments}

\subsection*{Benchmarks for \GTA\ in Table~\ref{table:exp-results}}

In each of the benchmarks, we consider a model for which we check a property. 
For each of these properties, we propose an event-clock automaton modelling the negation of the property. 
Then, whether the model satisfies the property may be checked by checking reachability on the product where the model synchronizes with the ECA on the actions of the ECA.
 
Note that we only provide here the ECA modelling the negation of the property that we want to check, and not the full product of the model and the ECA. 
We provide the model for the $\altbp$ ($\abp$) (Figure~\ref{fig:alt:bit}). 
The models for $\firealarm$ and $\csmacd$ are the standard models as given in~\cite{fire-alarm}  and~\cite{CSMACD}, respectively.

While depicting event-clock automata, we will use $\history{e}$ to denote the history clock recording time since the previous occurrence of event \textit{e}, and $\prophecy{e}$ to denote the prophecy clock predicting the negative of the time to the next \textit{e}.

\begin{figure}[!htbp]
  \centering
    \begin{subfigure}[b]{0.45\textwidth}
      \centering
      \scalebox{0.8}{
        \begin{tikzpicture} [node distance = 1cm]
          \node (p0) [state] {$P_0$};
          \node (p1) [state, accepting, right= 4.5cm of p0] {$P_1$};
          \node (init) [left= 0.5cm of p0] {};

           \path [-stealth, thick]
           
            (init) edge (p0)
             
             (p0) edge [loop above] node[above, inner sep=1pt] {$\alive$}()
                          
             (p0) edge node[below] {$\alive$} node[above, inner sep=2pt] {$\history{\alive} - \prophecy{\alive} \le k$} (p1)
      
             (p1) edge[loop above] node[above, inner sep=1pt] {$\alive$}()

             ;
              
          \end{tikzpicture}
        }
        \caption{\small ECA model for $\firealarmpattern$}
    \label{fig:fire-alarm-prop}
    \end{subfigure}%
  \hspace{0.2in}
  \begin{subfigure}[b]{0.45\textwidth}
    \centering
    \scalebox{0.8}{
      \begin{tikzpicture} [node distance = 1cm]
        \node (p0) [state] {$P_0$};
        \node (p1) [state, accepting, right= 4.5cm of p0] {$P_1$};
        
        \node (init) [left= 0.5cm of p0] {};
          
        \path [-stealth, thick]
        
         (init) edge (p0)
                      
           (p0) edge[loop above] node[above, inner sep=1pt] {$begin_1, cd_1$}(p0)
           
           (p0) edge node[below] {$cd_1$} node[above, inner sep=2pt] 
           {$\overrightarrow{begin_1}<-30 \land -\infty<\overrightarrow{cd_1}$} (p1)
    
           (p1) edge[loop above] node[above, inner sep=1pt] {$begin_1, cd_1$}()
           ;
            
        \end{tikzpicture}
        }
        \caption{\small ECA model for $\csmacdbounded$}
        \label{fig:CSMACD:Prop}
  \end{subfigure}%
\end{figure}    

\medskip\noindent \textbf{$\firealarmpattern$.} 
  We consider the $\firealarm$ model from~\cite{fire-alarm}.
  The model is a network consisting of $n$ processes, referred to as $\Sensor$ processes, and a server process.
  Each process in the model is modelled using a timed automaton.
  Here, we check the property that no three $\alive$ actions are executed by
  the process Sensor$_{1}$ in $k$ time units.  
  The negation of this property can be modeled by the ECA
  in Figure~\ref{fig:fire-alarm-prop} with two states and a transition on $\alive$
  with the diagonal constraint $\history{\alive} - \prophecy{\alive} \le k$.
  When reading an action $\alive$, the quantity $\history{\alive} - \prophecy{\alive}$
  gives the distance between the next and the previous occurrence.

\medskip\noindent \textbf{$\csmacdbounded$.} 
We consider the $\csmacd$ model given in~\cite{CSMACD}.  
The model is a network consisting of $n$ processes, referred to as $\Station$ processes, and a central $\Bus$ process.
The property that we check here is: after each \emph{detected collision} (modelled using a $cd$ action), except the last one, $\Station_1$ sends a message (modelled using a $begin_1$ action) in $30$ time units.  
The negation of this property can be modeled by the
ECA of Figure~\ref{fig:CSMACD:Prop}.
When reading an action \textit{cd$_1$}, the constraints (1) $\prophecy{begin_1} < -30$
says that $begin_1$ (which denotes Process$_1$ sending a message) cannot be seen within
$30$ time units, (2) $-\infty<\prophecy{cd_1}$ says that this is not the last $cd$ event
(and therefore, at least one more collision will be detected in the future).

\begin{figure}[!htbp]
  \centering
  \begin{subfigure}[c]{0.5\textwidth}
    \centering
    \scalebox{0.8}{
      \begin{tikzpicture} [node distance = 1cm]
        \node (s0) [state] {$s_0$};
        \node (s1) [state, right=3cm of s0] {$s_1$};
        \node (s2) [state, below=2cm of s1] {$s_2$};
        \node (s3) [state, left=3cm of s2] {$s_3$};
        
        \node (a) [below left= 1.5cm of s2] {$Sender$};

        \node (init) [left= 0.5cm of s0] {};

        \path [-stealth, thick]
        
         (init) edge (s0)
  
            (s0) edge node[above, inner sep=1pt] {\scriptsize$send_0$} node[below, inner sep=0.1, pos=0.48] {\scriptsize $\langle \mathtt{timeout}(t); \mathtt{set(t=3)}; \rangle$} (s1)
            
            (s1) edge node[right,inner sep=1.5pt] {\scriptsize$ack_0$} node [right,inner sep=1.5pt, pos=0.63] {\scriptsize$\langle \mathtt{stop(t)} \rangle$} (s2)
            
            (s2) edge node[below, inner sep=1pt] {\scriptsize$send_1$} node[above,inner sep=0.1, pos=0.48] {\scriptsize $\langle \mathtt{timeout}(t); \mathtt{set(t=3)}; \rangle$} (s3)

            (s3) edge node[left, inner sep=1pt] {\scriptsize$ack_1$} node [left, inner sep=1pt, pos=0.35] {\scriptsize$\langle \mathtt{stop(t)} \rangle$}  (s0)
            
            (s0) edge [in=30,out=60,loop]  node[above] {\scriptsize$ack_1$} ()

            (s0) edge [in=120,out=150,loop]  node[above] {\scriptsize$ack_0$} ()
            
            (s1) edge [loop right]  node[right] {\scriptsize$ack_1$} ()
            
            (s1) edge [loop above]  node[above] {\scriptsize $\langle \mathtt{timeout}(t); \mathtt{set(t=3)}; \rangle$} node[right] {\scriptsize$send_0$} ()
            
            (s2) edge [loop right]  node[below] {\scriptsize$ack_0$} ()

            (s2) edge [loop below]  node[left, pos=0.8, inner sep=0.1] {\scriptsize$ack_1$} ()
            
            (s3) edge [loop left]  node[above, outer sep=3.2pt] {\scriptsize$ack_0$} ()
            
            (s3) edge [loop below]  node[left, inner sep=0.1, pos=0.8] {\scriptsize$send_1$} node[below, inner sep=0.2, pos=0.5] {\scriptsize $\langle \mathtt{timeout}(t); \mathtt{set(t=3)}; \rangle$} ()
            
        ;
        \end{tikzpicture}
    }
  \label{fig:alt:bit:sender}
  \end{subfigure}
  \hfil
  \begin{subfigure}[c]{0.35\textwidth}
    \centering
    \scalebox{0.8}{
    \begin{tikzpicture} [yshift=1.0cm, node distance = 1cm]
      \node (c0) [state] {\scriptsize$Initial$};
      \node (a) [below = 1.2cm of c0] {$Channel$};

      \node (c1) [state, above right=1.2cm of c0] {$c_1$};
      \node (c2) [state, below right=1.2cm of c0] {$c_2$};
      \node (c3) [state, above left=1.2cm of c0] {$c_3$};
      \node (c4) [state, below left = 1.2cm of c0] {$c_4$};
  
      \node (init) [above= 1cm of c0] {};

      \path [-stealth, thick]
      
       (init) edge (c0)

          (c0) edge node[sloped, above, inner sep=1pt] {\scriptsize$send_0$} (c1)
          (c0) edge node[sloped,above,inner sep=1pt] {\scriptsize$send_1$} (c2)
          (c0) edge node[sloped, anchor=center,above,inner sep=1pt] {\scriptsize$rack_0$} (c3)
          (c0) edge node[sloped, anchor=center,above, inner sep=1pt] {\scriptsize$rack_1$} (c4)
          (c1) edge [in=0,out=-90]  node[right] {\scriptsize$lost$} (c0)
          (c2) edge [in=0,out=90]  node[right] {\scriptsize$lost$} (c0)
          (c4) edge [in=180,out=90]  node[left] {\scriptsize$lost$} (c0)
          (c3) edge [in=180,out=-90]  node[left] {\scriptsize$lost$} (c0)
          
          (c1) edge [in=90,out=180]  node[sloped,  inner sep=1pt, pos=0.2,above] {\scriptsize$rpkt_0$} (c0)
          (c2) edge [in=-90,out=180]  node[sloped, anchor=center,above] {\scriptsize$rpkt_1$} (c0)
  
          (c4) edge [in=-90,out=0]  node[sloped, anchor=center,above,inner sep=1pt] {\scriptsize$ack_1$} (c0)
          (c3) edge [in=90,out=0]  node[sloped, anchor=south,pos=0.2,inner sep=0.1] {\scriptsize$ack_0$} (c0)
      ;
    \end{tikzpicture}
    }
    \label{fig:alt:bit:channel}
  \end{subfigure}
  
  \vspace{-2mm}
  
  \begin{subfigure}[b]{0.45\textwidth}
    \centering
    \scalebox{0.7}{
      \begin{tikzpicture} [node distance = 1cm]
        \node (r0) [state] {$r_0$};
        \node (r1) [state, right=2cm of r0] {$r_1$};
        \node (r2) [state, left=2cm of r0] {$r_2$};
    
        \node (a) [below = 1cm of r0] {$Receiver$};

         \path [-stealth, thick]
            (r0) edge node[below] {$rpkt_0$} (r1)
            (r1) edge[bend right=60] node[below] {$rack_0$} (r0)
            (r1) edge[out=60, in=30, loop] node [above] {$rack_0$}()
            (r2) edge[out=120, in=150, loop] node [above] {$rack_1$}()
            
            (r0) edge node[below] {$rpkt_1$} (r2)
            (r2) edge[bend left=60] node[below] {$rack_1$} (r0)
        ;
      \end{tikzpicture}
      }
    \label{fig:alt:bit:receiver}
  \end{subfigure}
  \hfil
  \begin{subfigure}[b]{0.45\textwidth}
    \centering
    \scalebox{0.7}{
    \begin{tikzpicture} [yshift=0cm, node distance = 1cm]
      
      \begin{scope}[xshift=-2.7cm]
        \node (f0) [state] {$f_0$};
        \node (f1) [state, right=2cm of f0] {$f_1$};
        \node (f2) [state, right=2cm of f1] {$f_2$};

        \node (a) [below = 1cm of f1] {$Scheduler$};

        \node (init) [left= 0.5cm of f0] {};
        
        \path [-stealth, thick]
        
         (init) edge (f0)

            (f0) edge[loop below] node[below] {$\Sigma \setminus \{lost\}$}()
            
            (f0) edge node[below] {$lost$} (f1)
            (f1) edge node[below] {$lost$} (f2)
            (f1) edge[loop below] node[below] {$\Sigma_B$}()
            (f2) edge[loop below] node[below] {$\Sigma_B$}()
            (f1) edge[in=30,out=150]  node[above] {$\Sigma_P$} (f0)
            (f2) edge[in=60,out=150]  node[above] {$\Sigma_P$} (f0)
        ;
      \end{scope}
    \end{tikzpicture}
    }
    \label{fig:alt:bit:scheduler}
  \end{subfigure}
  \caption{$\altbp$}
  \label{fig:alt:bit}
  \vspace{0.1in}
\end{figure}    

\medskip\noindent \textbf{$\altbp$.}
We consider a variant of the $\altbp$~\cite{KuroseRoss} as depicted in
Figure~\ref{fig:alt:bit}.
We model sending a packet with identifier $i\in\{0,1\}$ in Sender with the action
$send_i$, and receiving an acknowledgement with identifier $i \in\{0,1\}$ in Sender with
the action $ack_i$.
The Sender uses a timer $t$.
Recall that the timer operations are (1) $\mathtt{set(t=3)}$ that sets timer $t$ to value $c$, 
(2) $\mathtt{timeout}(t)$ that checks whether $t$ is 0,
(3) $\mathtt{stop}(t)$ that forgets the value of the timer and sets it 
$-\infty$ (to indicates that it is unused.) 
Note that in the automaton $Scheduler$ in Figure~\ref{fig:alt:bit}, $\Sigma_B=\{send_0,
send_1, rack_0, rack_1\}$, $\Sigma_P=\{rpkt_0, rpkt_1, ack_0, ack_1\}$, $\Sigma = \Sigma_B
\sqcup \Sigma_P \sqcup \{lost\}$.

\begin{figure}[!htbp]
  \centering
  \scalebox{0.8}{
  \begin{minipage}{.5\textwidth}
			\centering    
			\begin{figure}[H]
				  \centering
          \begin{tikzpicture} [node distance = 1cm,scale=1.5]
            \node (p0) [state] {$P_0$};
            \node (p1) [state, accepting, right= 3cm of p0] {$P_1$};
            \node (init) [left= 0.5cm of p0] {};

            \path [-stealth, thick]
            
             (init) edge (p0)
        
               (p0) edge[loop above] node[above, inner sep=1pt] {$ack_0, send_0, send_1$}(p0)
               (p0) edge node[below] {$send_0$} node[above, inner sep=2pt] {$\overrightarrow{ack_0}-\overrightarrow{send_1}<0$} (p1)
        
               (p1) edge[loop above] node[above, inner sep=1pt] {$ack_0, send_0, send_1$}(p1)
                ;
                
            \end{tikzpicture}
            \caption{$\abpp$}
            \label{fig:ABP:P1}
        \end{figure}	
        
        \vspace{-7mm}

        \begin{figure}[H]
				  \centering
          \begin{tikzpicture} [node distance = 1cm,scale=1.5]
          \node (p0) [state] {$P_0$};
          \node (p1) [state, accepting, right= 3cm of p0] {$P_1$};

           \path [-stealth, thick]
                          
             (p0) edge[loop above] node[above, inner sep=1pt] {$ack_0, send_0$}(p0)
             
             (p0) edge node[below] {$ack_0$} node[above, inner sep=2pt] {$\overleftarrow{send_0}>3$} (p1)
      
             (p1) edge[loop above] node[above, inner sep=1pt] {$ack_0, send_0$}(p1)
             ;

          \end{tikzpicture}
          \caption{$\abpq$}
          \label{fig:ABP:P2}      
      \end{figure}	
		\end{minipage}
		\begin{minipage}{.5\textwidth}
		\centering    
    \begin{figure}[H]
      \centering
      \begin{tikzpicture} [node distance = 1cm,
      dot/.style = {circle, fill, minimum size=#1,
              inner sep=0pt, outer sep=0pt},
      dot/.default = 1pt
      ]
      \node[dot] at (2.1,-2.5) {};
      \node[dot] at (2.1,-2.6) {};
      \node[dot] at (2.1,-2.7) {};
      
      \node (q0) [state] {$q_0$};
      \node (q1) [state, right = of q0] {$q_1$};
      \node (q2) [state, right = of q1] {$q_2$};
      
      \path [-stealth, thick]
        (q0) edge node[above] {\scriptsize$a$}  (q1)
        (q1) edge node[above] {\scriptsize$b$}    (q2)
        
        (q1) edge [loop above]  node[above] {\scriptsize$\overleftarrow{a}=1 \land \overrightarrow{b}\leq -K$} node[below] {\scriptsize$a$} ()
        
        (q1) edge [loop below]  node[below, inner sep=1pt] {\scriptsize$\overleftarrow{a}=1 \land$} node[below, outer sep=6.5pt] {\scriptsize$ \overrightarrow{c_1}\leq -K$} node[above] {\scriptsize$c_1$} ();

        \path[->,min distance=3.5cm](q1)edge[in=225,out=325,below] node[above]{\scriptsize$c_2$} node[below]{\scriptsize$\overleftarrow{a}=1 \land \overrightarrow{c_2}\leq -K$} (q1);
      
        \path[->,min distance=6cm](q1)edge[in=225,out=325,below] node[above]{\scriptsize$c_N$} node[below]{\scriptsize$\overleftarrow{a}=1 \land \overrightarrow{c_N}\leq -K$} (q1);
      
      \end{tikzpicture}
      \caption{$\toyeca(K,N)$}
      \label{fig:B3} 
      \end{figure}	         
      \end{minipage}%
	}
\end{figure}

\medskip\noindent \textbf{$\abpp$}: The property checks the following for the Sender process of $\abp$: after the sending $send_0$, the sender should receive an $ack_0$ before sending $send_1$.  
We model the negation of this property using an ECA as given in Figure~\ref{fig:ABP:P1}.

\medskip\noindent \textbf{$\abpq$}: The property that after sending a $send_0$, the sender must receive an $ack_0$ within 3 time units. We model the negation of this property with an ECA given in Figure~\ref{fig:ABP:P2}.
Note that this property does not hold for the sender in $\abp$.

\subsection*{Synthetic Benchmarks in Table~\ref{table:exp-results}}

\medskip\noindent $\toyeca(K,N)$: As depicted in Figure~\ref{fig:B3}, $\toyeca(K,N)$ has two parameters - $K$, which is the maximal constant and $N$, which is the number of $c_i$ loops (on state $q_1$), in the automaton.  

From $q_0$, on a transition $a$, the automaton goes to $q_1$.
From $q_1$, it can either take the loop $a$, after which the $b$ action taking it to state $q_2$ can be taken only after $K$ time units. 
Alternately,  from $q_1$ a sequence of distinct $c_i$ loops can be taken in zero-time (because of the $\overleftarrow{a}=1$ guard). Note that two $c_i$ actions can be taken only at an interval of greater than $K$ time units.

From Table~\ref{table:exp-results}, we observe an order of magnitude improvement, both in terms of running time as well as number of visited and stored nodes w.r.t. the standard approach.
Recall that there is a blow up (both in the number of states and clocks) while converting an ECA to a timed automaton. 
The effect caused by the blow up in clocks also affects the time taken for each zone operation.
Further, note that even when we increase the parameters $K, N$, despite an increase in runtime, the number of visited and stored nodes does not increase - this is because even though we explore more nodes of the zone graph (because of an increase in number of transitions), these explorations lead to nodes that are subsumed by nodes that have already been visited. 

\end{document}